\documentclass{article}

\usepackage[numbers]{natbib}

\usepackage[nonatbib,final]{neurips_2023}

\usepackage{amsmath}
\usepackage{url}
\usepackage{marvosym}
\usepackage{graphicx}
\usepackage{xcolor}
\usepackage{multirow}
\usepackage{tabularx}
\usepackage{algorithm}
\usepackage{algorithmic}
\usepackage{subcaption}
\usepackage{amsthm}
\usepackage{enumitem}

\newcolumntype{C}{>{\centering\arraybackslash}X} 
\usepackage{amssymb}

\usepackage{thmtools}
\usepackage{thm-restate}
\theoremstyle{plain}
\newtheorem{theorem}{Theorem}
\newtheorem{problem}{Problem}
\theoremstyle{definition}
\newtheorem{definition}{Definition}
\newtheorem{lemma}{Lemma}

\usepackage{array} 

\usepackage{wrapfig}
\usepackage{multibib}
\newcites{main}{Reference}
\newcites{appendix}{Reference}
\usepackage{placeins}



\usepackage[utf8]{inputenc} 
\usepackage[T1]{fontenc}    
\usepackage{hyperref}       
\usepackage{url}            
\usepackage{booktabs}       
\usepackage{amsfonts}       
\usepackage{nicefrac}       
\usepackage{microtype}      
\usepackage{xcolor}         

\title{Certifiably Robust Graph Contrastive Learning}

\usepackage{lineno}
\author{%
  Minhua Lin, Teng Xiao, Enyan Dai, Xiang Zhang, Suhang Wang \\
  The Pennsylvania State University\\
  \texttt{\{mfl5681,tengxiao,emd5759,xzhang,szw494\}@psu.edu}
}

\begin{document}

\maketitle

\begin{abstract}
Graph Contrastive Learning (GCL) has emerged as a popular unsupervised graph representation learning method. However, it has been shown that GCL is {vulnerable} to adversarial attacks on both the graph structure and node attributes. Although empirical approaches have been proposed to enhance the robustness of GCL, the certifiable robustness of GCL is still {remain} unexplored. In this paper, we {develop} the first certifiably robust framework in GCL. Specifically, we first propose a unified criteria to evaluate and certify the robustness of GCL. We then introduce a novel technique, \textbf{RES} (\underline{R}andomized \underline{E}dgedrop \underline{S}moothing), to ensure certifiable robustness for any GCL model, and this certified robustness can be provably preserved in downstream tasks. Furthermore, an effective training method is proposed for robust GCL. Extensive experiments on real-world datasets demonstrate the effectiveness of our proposed method in providing effective certifiable robustness and enhancing the robustness of any GCL model. The source code of RES is available at 
\url{https://github.com/ventr1c/RES-GCL}.

\end{abstract}

\section{Introduction}
Graph structured data are ubiquitous in real-world applications such as social networks~\cite{hamilton2017inductive}, finance systems~\cite{wang2019semi}, and molecular graphs~\cite{wang2022molecular}. Graph Neural Networks (GNNs) have emerged as a popular approach to learn graph representations by adopting a message passing scheme~\cite{kipf2016semi,xu2018powerful,xiao2021learning}, which updates a node's representation by aggregating information from its neighbors. Recently, graph contrastive learning has gained popularity as an unsupervised approach to learn node or graph representations~\cite{zhu2020vf, Sun2020InfoGraph,You2020GraphCL}. The graph contrastive learning creates  augmented views and minimizes the distances among positive pairs while maximizing the distances among negative pairs in the embedding space. 

Despite the great success of GNNs, some existing works~\cite{dai2018adversarial,zugner2018adversarial,xu2019topology,wang2020evasion} have shown that GNNs are vulnerable to adversarial attacks where the attackers can deliberately manipulate the graph structures and/or node features to degrade the model's performance. The representations learned by Graph Contrastive Learning (GCL) are also susceptible to such attacks, which can lead to poor performance in downstream tasks with a small amount of perturbations~\cite{jovanovic2021towards,feng2022adversarial,zhang2022unsupervised}. 
To defend against graph adversarial attacks, several empirical-based works have been conducted in robust GNNs for classification tasks~\cite{dai2022towards,dai2023unified} and representation learning with contrastive learning~\cite{jovanovic2021towards,feng2022adversarial}. 
Nevertheless, with the advancement of new defense strategies, new attacks may be developed to invalidate the defense methods~\cite{chen2022understanding,dai2023unnoticeable,wang2022training,wang2022rethinking}, leading to an endless arms race. Therefore, it is important to develop a certifiably robust GNNs under contrastive learning, which can provide certificates to nodes/graphs that are robust to potential perturbations in considered space. There are several efforts in certifiable robustness of GNNs under supervised learning, which requires labels for the certifiable robustness analysis~\cite{zugner2019certifiable,wang2021certifiedgnn,gao2020certified,jin2020certgraphcls}.
For instance, \cite{zugner2019certifiable} firstly analyze the certifiable robustness against the perturbation on node features. Some following works~\cite{wang2021certifiedgnn,gao2020certified,jin2020certgraphcls} further study the certified robustness of GNN under graph topology attacks. However, the certifiable robustness of GCL, which is in unsupervised setting, is still rarely explored. 

Certifying the robustness of GCL is rather challenging. 
{\textit{First}, quantifying the robustness of GCL consistently and reliably is a difficult task. Existing GCL methods~\cite{zhu2021gca,jovanovic2021towards,feng2022adversarial} often use empirical robustness metrics (e.g., robust accuracy) that rely on the knowledge of specific downstream tasks (e.g., task types, node/graph labels) for evaluating GCL robustness. 
However, these criteria may not be universally applicable across different downstream tasks.
The lack of clarity in defining and assessing robustness within GCL further compounds the difficulty of developing certifiably robust GCL methods.}
\textit{Second}, the absence of labels in GCL makes it challenging to study certifiable robustness. Existing works~\cite{zugner2019certifiable, bojchevski2019certifiable, bojchevski2020efficient, gao2020certified, wang2021certifiedgnn} on certifiable robustness of GNNs are designed for supervised or semi-supervised settings, which are not applicable to GCL. They typically try to provide certificates by analyzing the worst-case attack on labeled nodes. However, it is difficult to conduct such analyses due to the absence of labels in GCL. Additionally, some works~\cite{bojchevski2020efficient, wang2021certifiedgnn} that rely on randomized smoothing to obtain robustness certificates can not be directly applied to GCL as they may require injecting noise into the whole graph during data augmentation, introducing too many spurious edges and hurting downstream task performance, particularly for large and sparse graphs. 
\textit{Third}, it is unclear whether {certifiable robustness of GCL can be converted into robust performance in downstream tasks}. The challenge arises due to the misalignment of objectives between GCL and downstream tasks. For instance, GCL usually focuses on discriminating between different instances, while some downstream tasks may also require understanding the relations among instances. This discrepancy engenders a divide between unsupervised GCL and its downstream tasks, 
making it difficult to prove the robust performance of the certified robust GNN encoder in downstream tasks.

In this work, we propose the first certifiably robust GCL framework, RES (\underline{R}andomized \underline{E}dgedrop \underline{S}moothing). 
{\textbf{(i)} To address the ambiguity in evaluating and certifying robustness of GCL, we propose a unified criteria based on the {semantic} similarity between node or graph representations in the latent space.}
{\textbf{(ii)} To address the label absence challenges and avoid adding too many noisy edges for certifying the robustness in GCL, we employ the robustness defined at (i) for unlabeled data and introduce a novel technique called randomized edgedrop smoothing, which can transform any base GNN encoder trained by GCL into a certifiably robust encoder. }
Therein, some randomized edgedrop noises are injected into the graphs by stochastically dropping observed edges with a certain probability, preventing the introduction of excessive noisy edges and yielding effective and efficient performance. {\textbf{(iii)}} We theoretically demonstrate that the representations learned by our robust encoder can achieve provably robust performance in downstream tasks. {\textbf{(iv)} Moreover, we present an effective training method that enhances the robustness of GCL by incorporating randomized edgedrop noise.}
{Our experimental results show that GRACE~\cite{zhu2020grace} with our RES can outperform the state-of-the-art baselines against several structural attacks and also achieve 43.96\% {certified accuracy} in OGB-arxiv dataset when the attacker arbitrarily adds at most 10 edges. }

Our \textbf{main contributions} are: \textbf{(i)} We study a novel problem of certifying the robustness of GCL to various perturbations. We introduce a unified definition of robustness in GCL and design a novel framework RES to certify the robustness;  \textbf{(ii)} Theoretical analysis demonstrates that the representations learned by our robust encoder can achieve provably robust performance in downstream tasks; {\textbf{(iii)} We design an effective training method for robust GCL by incorporating randomize edgedrop noise;} and \textbf{(iv)}  Extensive empirical evaluations show that our method can provide {certifiable robustness} for downstream tasks (i.e., node and graph classifications) on real-world datasets, and also enhance the robust performance of any GNN encoder through GCL. 
\section{Related Work}
\textbf{Graph Contrastive Learning. }
GCL has recently gained significant attention and shows promise for improving graph representations in scenarios where labeled data is scarce~\cite{velickovic2019deep,hassani2020contrastive, You2020GraphCL,zhu2020grace, zhu2021gca, feng2022adversarial, mo2022simple,hassani2020contrastive,xiao2022decoupled,dai2021towards}. Generally, GCL creates two views through data augmentation and contrasts representations between two views. 
Several recent works~\cite{zhang2022unsupervised,zhang2023GCLBackdoor} show that GCL is vulnerable to adversarial attacks. Despite very few empirical works~\cite{You2020GraphCL,feng2022adversarial,jovanovic2021towards} on robustness of GCL, there are no existing works studying the certified robustness of GCL. In contrast, we propose to provide robustness certificates for GCL by using randomized edgedrop smoothing. More details are shown in Appendix~\ref{appendix:related_work_GCL}.
To the best of our knowledge, our method is the first work to study the certified robustness of GCL. 

\textbf{Certifiable Robustness of GNNs.} 
Several recent studies investigate the certified robustness of GNNs in the (semi)-supervised setting\cite{zugner2019certifiable, bojchevski2019certifiable, bojchevski2020efficient, jin2020certgraphcls, wang2021certifiedgnn, gao2020certified, dai2022comprehensive}. Zunger et al. \cite{zugner2019certifiable} are the first to explore certifiable robustness with respect to node feature perturbations. Subsequent works\cite{bojchevski2019certifiable, jin2020certgraphcls, bojchevski2020efficient, wang2021certifiedgnn, zugner2020certifiable} extend the analysis to certifiable robustness under topological attacks.
{More details are shown in Appendix~\ref{appendix:related_work_certified_robustness}.}
Our work is inherently different from them: {(i) we provide an unified definition to evalute and certify the robustness of GCL; (ii) we theoretically provide the certified robustness for GCL in the absence of labeled data, and this certified robustness can provably sustained in downstream tasks; (iii) we design an effective training method to enhance the robustness of GCL.} 

\section{Backgrounds and Preliminaries}
\subsection{Graph Contrastive Learning}
\label{sec:gcl_preliminary}
\textbf{Notations.} Let $\mathcal{G}=(\mathcal{V},\mathcal{E}, \mathbf{X})$ denote a graph, where $\mathcal{V}=\{v_1,\dots,v_N\}$ is the set of $N$ nodes, $\mathcal{E} \subseteq \mathcal{V} \times \mathcal{V}$ is the set of edges, and $\mathbf{X}=\{\mathbf{x}_1,...,\mathbf{x}_N\}$ is the set of node attributes with $\mathbf{x}_i$ being the node attribute of $v_i$. $\mathbf{A} \in \mathbb{R}^{N \times N}$ is the adjacency matrix of $\mathcal{G}$, where $\mathbf{A}_{ij}=1$ if nodes ${v}_i$ and ${v}_j$ are connected; otherwise $\mathbf{A}_{ij}=0$. 
In this paper, we focus on unsupervised graph contrastive learning, where label information of nodes and graphs are unavailable during training for both node- and graph-level tasks. For the node-level task, given a graph $\mathcal{G}$, 
the goal is to learn a GNN encoder $h$ to produce representation $\mathbf{z}_v$ for ${v}$, which can be used to conduct prediction for node $v$ in downstream tasks. For the graph-level task, given a set of graph $\mathbb{G}=\{\mathcal{G}_1,\mathcal{G}_2,\cdots\}$, the goal is to learn the latent representation $\mathbf{z}_\mathcal{G}$ for each graph $\mathcal{G}$, which can be used to predict downstream label $y_\mathcal{G}$ of $\mathcal{G}$.
For simplicity and clarity, in this paper, we uniformly denote a node or graph as a concatenation vector $\mathbf{v}$. 
The representation of $\mathbf{v}$ from encoder $h$ is denoted as $h(\mathbf{v})$. More details are shown in Appendix~\ref{appendix:discussion_concatenation_vector}.

\textbf{Graph Contrastive Learning. } Generally, GCL consists of three steps: (i) multiple views are generated for each instance through stochastic data augmentation.  Positive pairs are defined as two views generated from the same instance; while negative pairs are sampled from different instances; (ii) these views are fed into a set of GNN encoders $\{h_1,h_2,\cdots\}$, which may share weights; (iii) a contrastive loss is applied to minimize the distance between positive pairs and maximize the distance between negative pairs in latent space. {To achieve the certified robustness of GCL to downstream tasks,} 
we introduce \textit{latent class} to formalize the contrastive loss for GCL, which is inspired by~\cite{saunshi2019theoretical, wang2022rvcl}. 
\begin{definition}[Latent Class]
\label{def:latent_class}
We utilize the latent class to
formalize semantic similarity of positive and negative pairs in GCL~\cite{saunshi2019theoretical}. 
Consider a GNN encoder $h$ learnt by GCL. Let $\mathcal{V}$ denote the set of all possible nodes/graphs in the input space. {There exists a set of \textit{latent class}, denoted as $\mathcal{C}$, where each sample $\mathbf{v} \in \mathcal{V}$ is associated with a latent class $c\in \mathcal{C}$.}
Each class $c \in \mathcal{C}$ is associated with a distribution $\mathcal{D}_{c}$ over the latent space of {samples belonging to class $c$} under $h$. The distribution on $\mathcal{C}$ is denoted as $\eta$. Intuitively, $\mathcal{D}_c(\mathbf{v})$ captures how relevant $\mathbf{v}$ is to class $c$ in the latent space under $h$, similar to the meaning of class in supervised settings. {The latent class is related to the specific downstream task.} For instance, when the downstream task is a classification problem, the latent class can be interpreted as the specific class to which the instances belong.
\end{definition}
Typically,  GCL is based on the label-invariant augmentation intuition: the augmented operations preserve the nature of graphs and make the augmented positive views have consistent latent classes with the original ones~\cite{yuelabel}. Let $c^+$,$c^-$$\in \mathcal{C}$ denote the positive and negative latent classes drawn from $\eta$, $\mathcal{D}_{c^{+}}$ and $\mathcal{D}_{c^{-}}$ are the distributions to sample positive and negative samples, respectively. For each positive pair $(\mathbf{v},\mathbf{v}^+)\sim\mathcal{D}_{c^+}^{2}$ associated with $n$ negative samples $\{\mathbf{v}^-_i\sim\mathcal{D}_{c^-}|i\in[n] \}$, the widely-used loss for GCL, which is also known as InfoNCE loss~\cite{oord2018representation}, can be written as follows:
\begin{equation}   \label{eq:loss_un_unified}
\small
\mathcal{L}_{GCL}=\underset{\substack{c^+,c^- \sim \eta^2}} {\mathbb{E}}\Big[\underset{\substack{\mathbf{v},\mathbf{v}^{+}\sim \mathcal{D}^{2}_{c^{+}}, \mathbf{v}^-_{i}\sim \mathcal{D}_{c^{-}}}}{\mathbb{E}}[-\log(\frac{e^{h(\mathbf{v})^{\top}h(\mathbf{v}^{+})}}{e^{h(\mathbf{v})^{\top}h(\mathbf{v}^{+})}+\sum_{i=1}^{n}e^{h(\mathbf{v})^{\top}h(\mathbf{v}^{-}_{i})}})]\Big],
\end{equation}

\subsection{Threat Model}

\textbf{Attacker's Goal. } We consider an attacker conducts an evasion attack in GCL. Given a well-trained GNN encoder $h$ on a clean graph $\mathcal{G}$ via GCL, an attacker aims to degrade the performance of $h$ in downstream tasks by injecting noise into the graph. For instance, an attacker may attempt to manipulate a social network by injecting fake edges, which could affect the performance of a well-trained $h$ in tasks such as community detection, node classification, or link prediction. The noise can take different forms, including adding/deleting nodes/edges or augmenting node features.

\textbf{Attacker's Knowledge and Capability.} We assume that the attacker follows the grey-box setting to conduct an evasion attack. Specifically, the attacker has access to a benign GNN encoder $h$ trained on a clean dataset through GCL, and the training data to train downstream classifier is available to the attacker. The model architecture and other related information of the GNN encoder are unknown to the attacker. During inference phase, the attacker is capable of injecting noises to the graph within a given budget to degrade the performance of target nodes/graphs in downstream tasks. In this paper, we focus on perturbations on the graph structure $\mathbf{A}$, i.e, only structural noises such as adding new edges/nodes are injected into the graph as structure attack is more effective than feature attack~\cite{zugner2018adversarial,wang2021certifiedgnn,wang2023turning}. We leave the extension to feature attack and defense of GCL as future work. We denote $\delta\in\{0,1\}^N$ as the structural noise to a node/graph $\mathbf{v}$, where {$N$ denote the number of nodes on the graph $\mathbf{v}$ or $K$-hop subgraph of node $\mathbf{v}$ and} $\delta_{i}=1$ indicates the addition of a noisy edge to $\mathbf{v}$ in the $i$-th entry, and its L$_{0}$-norm $||\delta||_0$ indicates the number of noisy edges. We then have the perturbed version of $\mathbf{v}$ by the attacker, denoted as $\mathbf{v}' = \mathbf{v}\oplus\delta$.

\subsection{Problem Statement}
Our objective is to develop a certifiably robust GCL. We aim to train a GNN encoder that exhibits provably benign behaviors in downstream tasks. The problem can be formulated as follows:
\begin{problem}
Given a GNN encoder $h$, a node or graph $\mathbf{v}$. Let $\mathbf{v}' = \mathbf{v}\oplus\delta$ denote the perturbed version of $\mathbf{v}$ by the attacker, where the structural noise $\delta$ with $\|\delta\|_0 \leq k$ is injected into $\mathbf{v}$. 
Suppose that $c^{*}$ is the latent class of $\mathbf{v}$ and $f$ is the downstream classifier. Our goal is to develop a certifiably robust GCL for $h$ such that for any $c \neq c^{*}$, the representation of $\mathbf{v}{'}$ under $h$ can satisfy the following requirement in downstream tasks: 
\begin{equation}
\small
\label{eq:worst_case_margin}
m(\mathbf{v}{'},c^{*};h) = \underset{c \ne c^{*}}{\min}\left[\mathbb{P}(f(h(\mathbf{v}')) = {c^{*}})- \mathbb{P}(f(h(\mathbf{v}')) = {c})\right]>0, 
\end{equation}
where $m(\mathbf{v}{'},c^{*};h)$ is the worst-case margin of $h(\mathbf{v}{'})$ between $c^{*}$ and $y$ in the downstream task for any $c \neq c^{*}$, and $\mathbb{P}(f(h(\mathbf{v}')) = c)$ denotes the probability of the representation $h(\mathbf{v}')$ being classified into class $c$ by $f$ in the downstream task. Eq.~\eqref{eq:worst_case_margin} implies that $f(h(\mathbf{v}')) = f(h(\mathbf{v})) = c^{*}$ for any $\delta$ within the budget $k$, i.e., $h$ is a certifiably robust at $(\mathbf{v},c^*)$ when perturbing at most $k$ edges.

\end{problem}

\section{Certifying Robustness for GCL}
In this section, we present the details of our RES, which aims to provide certifiable robustness for GCL models. There are mainly three challenges: (i) how to define the certified robustness of GCL; (ii) how to derive the certified robustness; and (iii) how to transfer the certified robustness of GCL to downstream tasks. To address these challenges, we first define certified robustness for GCL in Sec.~\ref{sec:robust_certificates4gcl}. We then propose the RES method to derive the certificates and theoretically guarantee certifiable robustness in Sec.~\ref{sec:RES4GCL}. 
Finally, we show that the certifiably robust representations learned from our approach are still provably robust in downstream tasks in Sec.~\ref{sec:translate_to_downstream_tasks}.
\subsection{Certified Robustness of GCL}
\label{sec:robust_certificates4gcl}
To give the definition of certified robustness of GCL,
we first give the conditions for successfully attacking the GNN encoder. 
The core idea behind it comes from supervised learning, where the objective is to judge whether the predictions of target nodes/graphs are altered under specific perturbations.
{Inspired by~\cite{wang2022rvcl},}
we consider the following scenario of a successful attack against the GNN encoder.

Given a clean input $\mathbf{v}$, with positive and negative samples $\mathbf{v}^{+}$ and $\mathbf{v}^{-}$ used in the learning process of GCL, respectively, we consider $\mathbf{v}{'}=\mathbf{v}\oplus\delta$, where $\delta$ represents structural noise on $\mathbf{v}$ and $\mathbf{v}^{-}$ is the attack target of $\mathbf{v}$. The attacker's goal is to produce an adversarial example $\mathbf{v}{'}$ that can deceive the model into classifying $\mathbf{v}$ as similar to $\mathbf{v}^{-}$. Formally, given a well-trained GNN encoder $h$ via GCL, we say that $h$ has been successfully attacked at $\mathbf{v}$ if the cosine similarity $s\left(h(\mathbf{v}{'}),h(\mathbf{v}^{+})\right)$ is less than ${s\left(h(\mathbf{v}{'}),h(\mathbf{v}^{-})\right)}$. This indicates that $\mathbf{v}{'}$ is more similar to $\mathbf{v}^{-}$ than to $\mathbf{v}^{+}$ in the latent space. Otherwise, we conclude that $h$ has not been successfully attacked. 
This definition of successful attack provides the basis for the definition of certified robustness of GCL. 
The formal definition of certified robustness problem for GCL is then given as
\begin{definition}[Certified Robustness of GCL]
\label{tm:certified_robustness_of_GCL_main}
Given a well trained GNN encoder $h$ via GCL. Let $\mathbf{v}$ be a clean input. $\mathbf{v}^{+}$ is the positive sample of $\mathbf{v}$ and $\mathbf{V}^{-}=\{\mathbf{v}^-_1,\cdots,\mathbf{v}^-_n\}$ denotes all possible negative samples of $\mathbf{v}^{+}$, which are sampled as discussed in Sec.~\ref{sec:gcl_preliminary}. Suppose that $\mathbf{v}{'}$ is a perturbed sample obtained by adding structural noise $\delta$ to $\mathbf{v}$, where $||\delta||_0\leq k$, and $s(\cdot,\cdot)$ is a cosine similarity function. Then, $h$ is certifiably $l_0^k$-robust at ($\mathbf{v}, \mathbf{v^{+}}$) if the following inequality is satisfied:
\begin{equation}
\label{eq:def_of_certified_robustness_of_GCL}
\small
s(h(\mathbf{v}{'}),h(\mathbf{v}^+))>\max_{\mathbf{v}^-\in\mathbf{V^{-}}}{s(h(\mathbf{v}^-),h(\mathbf{v}{'}))},~\forall{\delta}:\|\delta\|_{0} \leq k.
\end{equation}
\end{definition}
\vskip -1em
Similar to the supervised certified robustness, this problem is to prove that the point $\mathbf{v'} = \mathbf{v}\oplus\delta$ under the perturbation within budget $k$ is still the positive sample of $\mathbf{v}^{+}$. However, unlike the supervised learning, where we can estimate the prediction distribution of $\mathbf{v}$ by leveraging the labels of the training data, it is difficult to estimate such a distribution in GCL because the label information is unavailable. To address this issue, we first consider a space 
$\mathbb{B}$ for a sample $\mathbf{v}^{+}$, where each sample within it is the positive sample of $\mathbf{v}^{+}$. Formally, given a sample $\mathbf{v}^{+}$ with the latent class $c^+$ from the probability distribution $\mathcal{D}_c^+$, $\mathbb{B}(\mathbf{v}^{+})$ is defined as 
\begin{equation}
\small
\label{eq:B_Space}
    \mathbb{B}(\mathbf{v}^{+}):=\{{\mathbf{v}|\mathbf{v}\in c^+}\}.
\end{equation}
GCL maximizes the agreement of the positive pair in the latent space and assume access to \textit{similar} data in the form of pairs $(\mathbf{v},\mathbf{v}^+)$ that comes from a distribution {$\mathcal{D}^2_{c^+}$} given the latent class $c^+$. Thus, it is natural to connect the latent class and the similarity between representations. 
The probability that $\mathbf{v}$ is the positive sample of $\mathbf{v}^{+}$ can be given by the following theorem.

\begin{theorem}
\label{tm:prob_margin_main}
Let $\mathbb{B}(\mathbf{v}^{+})$ be a space around $\mathbf{v}^{+}$ {as defined in Eq.~\eqref{eq:B_Space}}.
Given an input sample $\mathbf{v}$ and a GNN encoder $h$ learnt by GCL in Eq.~\eqref{eq:loss_un_unified}, the probability of $\mathbf{v}$ being the positive sample of $\mathbf{v}^+$ is: 
\begin{equation}
\small
\label{eq:prob_include_margin_main}
\text{Pr}(\mathbf{v} \in \mathbb{B}(\mathbf{v}^+);h) =\exp\big[-(\frac{1-s(h(\mathbf{v}),h(\mathbf{v}^+))}{a})^\sigma\big],
\end{equation}
\vskip -1em
where $s(\cdot,\cdot)$ is the cosine similarity function. $a$, $\sigma > 0$ are Weibull shape and scale parameters~\cite{coles2001introduction}.
\end{theorem}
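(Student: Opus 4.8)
The plan is to recognize the target expression as the survival function of a Weibull law and to derive it through an extreme value theory (EVT) argument in the spirit of the Extreme Value Machine. Writing $d(\mathbf{v},\mathbf{v}^+) = 1 - s(h(\mathbf{v}),h(\mathbf{v}^+))$ for the cosine distance in the latent space, the claimed probability $\exp[-(d/a)^\sigma]$ equals $1$ at $d=0$ and decays monotonically to $0$ as $d$ grows, matching the behavior expected of the membership probability $\mathrm{Pr}(\mathbf{v}\in\mathbb{B}(\mathbf{v}^+))$: a point whose representation coincides with that of $\mathbf{v}^+$ is surely a positive sample, while increasingly dissimilar points are increasingly unlikely to share the latent class $c^+$. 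So the first step is to cast the membership event $\{\mathbf{v}\in\mathbb{B}(\mathbf{v}^+)\}$ as a radial event governed by $d(\mathbf{v},\mathbf{v}^+)$ and to identify a random ``margin radius'' whose distribution I will control.

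Second, I would construct the margin radii. For the reference $\mathbf{v}^+$ of latent class $c^+$, draw samples from the complementary classes (the negatives $\mathbf{v}^-\sim\mathcal{D}_{c^-}$ with $c^-\neq c^+$) and measure the half-distance from $\mathbf{v}^+$ to each in the latent space. The smallest such half-distances delimit how far the positive region $\mathbb{B}(\mathbf{v}^+)$ can radially extend before colliding with a foreign class; a candidate $\mathbf{v}$ at cosine distance $d$ is a positive sample precisely when $d$ does not exceed this extreme margin. Hence $\mathrm{Pr}(\mathbf{v}\in\mathbb{B}(\mathbf{v}^+))$ equals the probability that the extreme (minimal) margin radius exceeds $d$, i.e. the survival function of the margin-radius distribution evaluated at $d$.

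Third, I would invoke the Fisher--Tippett--Gnedenko theorem. Because cosine distances lie in the bounded interval $[0,2]$, the margin radii have bounded support, which is exactly the regime whose normalized extreme-order statistics fall in the Weibull domain of attraction. Applying EVT to the minimal margin radii therefore yields a Weibull limiting law, whose survival function has the form $\exp[-(d/a)^\sigma]$ with scale $a>0$ and shape $\sigma>0$. Substituting $d = 1 - s(h(\mathbf{v}),h(\mathbf{v}^+))$ gives Eq.~\eqref{eq:prob_include_margin_main} and completes the argument.

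The main obstacle will be justifying the EVT step rigorously rather than heuristically. Concretely, I must (i) argue that the margin radii, induced by sampling from $\eta$ and the class-conditional distributions $\mathcal{D}_{c^+},\mathcal{D}_{c^-}$, satisfy the independence and domain-of-attraction (von Mises / regular-variation-at-the-endpoint) conditions needed for the Weibull limit, and (ii) make precise the reduction of the latent-class membership event $\{\mathbf{v}\in\mathbb{B}(\mathbf{v}^+)\}$ to the one-dimensional radial event in $d$, which implicitly requires that membership be determined by latent-space proximity to $\mathbf{v}^+$ --- a property that should follow from the label-invariant augmentation assumption and the InfoNCE objective in Eq.~\eqref{eq:loss_un_unified}, but must be stated as an explicit modeling hypothesis. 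Once these conditions are in place, identifying the limiting survival function with the stated closed form is routine.
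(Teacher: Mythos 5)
Your proposal follows essentially the same route as the paper's own proof: the paper defines the half-distance margins $D_i = (1-s(h(\mathbf{v}^+),h(\mathbf{v}^-_i)))/2$ to the negative samples, takes the minimum margin $M=\min_i D_i$, converts it to a maximum of bounded (negative) variables so that the Fisher--Tippett--Gnedenko theorem places it in the Reverse Weibull family, and then identifies $\Pr(\mathbf{v}\in\mathbb{B}(\mathbf{v}^+);h)$ with $\mathbb{P}(1-s(h(\mathbf{v}),h(\mathbf{v}^+)) < M)$, exactly the survival-function reduction you describe. The obstacles you flag (i.i.d./domain-of-attraction conditions on the margins and the reduction of latent-class membership to a radial event) are indeed the weak points; the paper handles them by assumption rather than by proof, so your plan is, if anything, more candid about what must be hypothesized.
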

\vskip -0.5em
The proof is in Appendix~\ref{appendix:certified_robustness_gcl}. 
Theorem~\ref{tm:prob_margin_main} manifests that we can derive the probability of $\mathbf{v}$ being the positive sample of $\mathbf{v}^+$ based on the cosine similarity under $h$, providing us a feasible way to study the certified robustness problem for GCL without the label information.

\subsection{Certifying Robustness by the Proposed Randomized EdgeDrop Smoothing (RES)}
\label{sec:RES4GCL}
With the definition of certifiable robustness of GCL, we introduce our proposed Randomized Edgedrop Smoothing (RES), which provides certifiable robustness for GCL. RES injects randomized edgedrop noise by randomly dropping each edge of the input sample with a certain probability. We will also demonstrate the theoretical basis for the certifiable robustness of GCL achieved by RES. 

Firstly, we define an randomized
edgedrop noise $\epsilon$ for $\mathbf{v}$ that remove each edge of $\mathbf{v}$ with the probability $\beta$. Formally, 
given a sample $\mathbf{v}$, the probability distribution of $\epsilon$ for $\mathbf{v}$ is given as:
\begin{equation}
\small
\mathbb{P}{(\epsilon_i = 0|\mathbf{v}_i=0)}= 1,\ \mathbb{P}{(\epsilon_i = 0|\mathbf{v}_i=1)}= \beta,\ \text{and } \mathbb{P}{(\epsilon_i = 1|\mathbf{v}_i=1)}=1-
\beta,
\label{eq:noise_distribution}
\end{equation}
where $\mathbf{v}_i$ denotes the connection status of $i$-th entry of $\mathbf{v}$.  Given a well-trained GNN encoder $h$ via GCL and a noise $\epsilon$ with the distribution in Eq.~\eqref{eq:noise_distribution}, the  smoothed GNN encoder $g$ is defined as:
\begin{equation}
\small
    g(\mathbf{v}) = h(\mathbf{v}\oplus\epsilon), ~~ \text{where}~~ \mathbf{v}\oplus\epsilon \in \mathbb{B}(\underset{\hat{\mathbf{v}}\in{\mathbf{V}}}{\arg\max\ }\mathbb{P}(\mathbf{v}\oplus\epsilon\in\mathbb{B}(\hat{\mathbf{v}});h)).
\end{equation}
Here $\mathbf{V}$ is the set of all nodes/graphs in the dataset.
It implies $g$ will return the representation of $\mathbf{v}\oplus\delta$ whose latent class is the same as 
that of the most probable instance that $\mathbf{v}\oplus\epsilon$ is its positive sample. Therefore, consider the scenario where some noisy edges are injected into a clean input $\mathbf{v}$, resulting in a perturbed sample $\mathbf{v}{'}$. With a smoothed GNN encoder $g$ learnt through our method, if we set $\beta$ to a large value, it is highly probable that the noisy edges will be eliminated from $\mathbf{v}{'}$. Consequently, by running multiple randomized edgedrop on $\mathbf{v}$ and $\mathbf{v}{'}$, a majority of them will possess identical structural vectors, that is, $\mathbf{v}\oplus\epsilon = \mathbf{v}{'}\oplus\epsilon$, which implies $\mathbf{v}$ and $\mathbf{v}{'}$ will have the same latent class and the robust performance of $\mathbf{v}{'}$ in downstream tasks. We can then derive an upper bound on the expected difference in vote count for each class between $\mathbf{v}$ and $\mathbf{v}{'}$. 
With this, we can theoretically demonstrate that $g$ is certifiably robust at $\mathbf{v}$ against any perturbation within a specific attack budget. 
Specifically, let $p_{\mathbf{v}^+,h}(\mathbf{v}) = \mathbb{P}({\mathbf{v}}\in\mathbb{B}(\mathbf{v}^{+});h)$ for simplicity of notation, $\underline{p_{\mathbf{v}^+,h}}(\mathbf{v}\oplus\epsilon)$ denotes a lower bound on ${p_{\mathbf{v}^+,h}(\mathbf{v}\oplus\epsilon)}$ with $(1-\alpha)$ confidence and $\overline{p_{\mathbf{v}^{-}_{i},h}}(\mathbf{v}\oplus\epsilon)$ denotes a similar upper bound, the formal theorem is presented as follows:
\begin{theorem}
\label{co:certified_robust_random_maksing_main}
Let $\mathbf{v}$ be a clean input and $\mathbf{v'} = \mathbf{v}\oplus\delta$ be its perturbed version, where $||\delta||_0\leq k$. $\mathbf{V}^{-}=\{\mathbf{v}^-_{1},\cdots,\mathbf{v}^-_{n}\}$ is the set of negative samples of $\mathbf{v}$. If for all $\mathbf{v}^-_i \in \mathbf{V}^{-}$:
\begin{equation} 
\small
\label{eq:certified_corollary}
\underline{p_{\mathbf{v}^+,h}}(\mathbf{v}\oplus\epsilon)-\max_{\mathbf{v}^-_i\in\mathbf{V^{-}}}\overline{p_{\mathbf{v}^{-}_i,h}}(\mathbf{v}\oplus\epsilon)> 2\Delta,
\end{equation}
\vskip -1em
where $\Delta = 1- \frac{\binom{d}{e}}{\binom{d+k}{e}}\cdot\beta^{k}$ and $e=\|\mathbf{v}\oplus\epsilon\|_0$ denotes the number of remaining edges of $\mathbf{v}$ after injecting $\epsilon$.
Then, with a confidence level of at least $1 - \alpha$, we have:
\begin{equation}
\small
\label{eq:certified_corollary_perturb_part}
{p_{\mathbf{v}^+,h}(\mathbf{v'}\oplus\epsilon)}>\max_{\mathbf{v}^-_i\in\mathbf{V^{-}}}{p_{\mathbf{v}^{-}_{i},h}(\mathbf{v'}\oplus\epsilon)}.
\end{equation}
\end{theorem}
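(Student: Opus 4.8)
The plan is to read each quantity $p_{\mathbf{v}^{+},h}(\cdot\oplus\epsilon)$ as a \emph{smoothed} probability, i.e.\ as the expectation over the randomized edgedrop noise $\epsilon$ of the functional $\phi^{+}(\mathbf{w})=\mathbb{P}(\mathbf{w}\in\mathbb{B}(\mathbf{v}^{+});h)\in[0,1]$ evaluated at the noised input, with $\phi^{-}_{i}$ defined analogously for each negative $\mathbf{v}^{-}_{i}$. Writing $p^{+}(\mathbf{w})=\mathbb{E}_{\epsilon}[\phi^{+}(\mathbf{w}\oplus\epsilon)]$ and $p^{-}_{i}(\mathbf{w})=\mathbb{E}_{\epsilon}[\phi^{-}_{i}(\mathbf{w}\oplus\epsilon)]$, the certificate then reduces to two essentially independent ingredients: (a) the estimators $\underline{p_{\mathbf{v}^{+},h}}$ and $\overline{p_{\mathbf{v}^{-}_{i},h}}$ sandwich the true smoothed probabilities at the \emph{clean} input with probability at least $1-\alpha$; and (b) replacing $\mathbf{v}$ by $\mathbf{v}'=\mathbf{v}\oplus\delta$ with $\|\delta\|_{0}\le k$ perturbs each smoothed probability by at most $\Delta$. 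Granting (a) and (b), the conclusion Eq.~\eqref{eq:certified_corollary_perturb_part} follows by chaining inequalities, so the real work is in establishing (b).

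The heart of the argument, and the step I expect to be the main obstacle, is the perturbation bound (b). I would first record the elementary fact that for any functional $\phi$ with range $[0,1]$, $|\mathbb{E}_{\epsilon}[\phi(\mathbf{v}\oplus\epsilon)]-\mathbb{E}_{\epsilon}[\phi(\mathbf{v}'\oplus\epsilon)]|\le \mathrm{TV}(\mu_{\mathbf{v}},\mu_{\mathbf{v}'})$, where $\mu_{\mathbf{w}}$ denotes the law of the noised graph $\mathbf{w}\oplus\epsilon$ under Eq.~\eqref{eq:noise_distribution}. It then suffices to bound this total variation distance by $\Delta$. To do this I would construct an explicit coupling that applies \emph{identical} drop decisions to the $d$ edges shared by $\mathbf{v}$ and $\mathbf{v}'$, so that the two noised graphs coincide exactly on the event that all $k$ attacker-added edges are dropped \emph{and} the retained subsets match. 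Counting the probability of this agreement event is precisely where the two factors in $\Delta=1-\frac{\binom{d}{e}}{\binom{d+k}{e}}\cdot\beta^{k}$ arise: the $\beta^{k}$ from independently dropping the $k$ extra edges, and the ratio $\binom{d}{e}/\binom{d+k}{e}$ from the conditional probability that a retained $e$-subset of $\mathbf{v}'$ avoids all added edges, conditioned on the number $e=\|\mathbf{v}\oplus\epsilon\|_{0}$ of surviving edges. Pinning down this combinatorial constant exactly, and in particular handling the dependence on the random retained count $e$ by conditioning on it before bounding, is the delicate part; the bound $\mathrm{TV}(\mu_{\mathbf{v}},\mu_{\mathbf{v}'})\le \Delta$ is what makes the certificate go through.

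For ingredient (a), I would observe that $\underline{p_{\mathbf{v}^{+},h}}(\mathbf{v}\oplus\epsilon)$ and $\overline{p_{\mathbf{v}^{-}_{i},h}}(\mathbf{v}\oplus\epsilon)$ are one-sided confidence bounds produced from Monte Carlo samples of $\epsilon$ via a standard concentration estimate (Clopper--Pearson or Hoeffding), with a union bound over the positive sample and the $n$ negatives so that all of them hold simultaneously at level $1-\alpha$. On this $(1-\alpha)$-probability event, combining (a) and (b) gives $p^{+}(\mathbf{v}'\oplus\epsilon)\ge p^{+}(\mathbf{v}\oplus\epsilon)-\Delta\ge \underline{p_{\mathbf{v}^{+},h}}(\mathbf{v}\oplus\epsilon)-\Delta$ and, for each $i$, $p^{-}_{i}(\mathbf{v}'\oplus\epsilon)\le p^{-}_{i}(\mathbf{v}\oplus\epsilon)+\Delta\le \overline{p_{\mathbf{v}^{-}_{i},h}}(\mathbf{v}\oplus\epsilon)+\Delta$, hence $\max_{i}p^{-}_{i}(\mathbf{v}'\oplus\epsilon)\le \max_{i}\overline{p_{\mathbf{v}^{-}_{i},h}}(\mathbf{v}\oplus\epsilon)+\Delta$. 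Subtracting the two displays and invoking the hypothesis Eq.~\eqref{eq:certified_corollary}, namely $\underline{p_{\mathbf{v}^{+},h}}(\mathbf{v}\oplus\epsilon)-\max_{i}\overline{p_{\mathbf{v}^{-}_{i},h}}(\mathbf{v}\oplus\epsilon)>2\Delta$, yields $p^{+}(\mathbf{v}'\oplus\epsilon)-\max_{i}p^{-}_{i}(\mathbf{v}'\oplus\epsilon)>0$, which is exactly Eq.~\eqref{eq:certified_corollary_perturb_part} and completes the proof of Theorem~\ref{co:certified_robust_random_maksing_main}.
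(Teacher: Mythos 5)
Your proposal is correct and follows essentially the same route as the paper: the paper's law-of-total-probability decomposition over the event $(\mathbf{v}'\oplus\epsilon)\cap\delta=\emptyset$ is exactly your coupling/total-variation bound, yielding $|p_{\mathbf{v}_i,h}(\mathbf{v}'\oplus\epsilon)-p_{\mathbf{v}_i,h}(\mathbf{v}\oplus\epsilon)|\le\Delta$ with the same combinatorial constant, followed by the same chaining of inequalities against the hypothesis gap $>2\Delta$. Your ingredient (a) is likewise how the paper obtains the $(1-\alpha)$ confidence bounds (Clopper--Pearson with a union bound, deferred there to the certification algorithm), so there is no substantive difference.
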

\vskip -1em
The proof is in Appendix~\ref{appendix:certified_robustness_by_RES}. Theorem~\ref{co:certified_robust_random_maksing_main} theoretically guarantees that of $\mathbf{v'}$ is still the positive sample of $\mathbf{v}^{+}$ with $(1-\alpha)$ confidence if the worst-case margin of $\mathbf{v}\oplus\epsilon$ under $h$ is larger than $2\Delta$, which indicates that the certified robustness of $h$ at $(\mathbf{v},\mathbf{v}^+)$ is holds in this case. Moreover, it also paves us a efficient way to compute the certified perturbation size in practical (See in Sec.~\ref{sec:predict_and_certification}).

\subsection{{Transfer} the Certified Robustness to Downstream Tasks}
\vskip -0.5em
\label{sec:translate_to_downstream_tasks}

Though we have theoretically shown the capability of learning $l_0^k$-certifiably robust node/graph representations with RES, it is unclear whether the certified robustness of the smoothed GNN encoder can be preserved in downstream tasks. To address this concern, we propose a theorem that establishes a direct relationship between the certified robustness of GCL and that of downstream tasks.

\begin{theorem}
\label{tm:robust_dt_classifier_main}
Given a GNN encoder $h$ trained via GCL and an clean input $\mathbf{v}$.
$\mathbf{v}^+$ and $\mathbf{V^{-}} = \{\mathbf{v}^-_{1},\cdots,\mathbf{v}^-_{n}\}$ are
the positive sample and the set of negative samples of $\mathbf{v}$, respectively. Let $c^+$ and $c^-_i$ denote the latent classes of $\mathbf{v}^{+}$ and $\mathbf{v}^{-}_i$, respectively. 
Suppose 
$f$ is the downstream classifier that classify a data point into one of the classes in $\mathcal{C}$. Then, we have
\begin{equation}\small
\mathbb{P}(f(h({\mathbf{v}}))=c^+) > \max_{\mathbf{v}^-_i\in\mathbf{V^{-}}} \mathbb{P}(f(h({\mathbf{v}}))=c^-_i)
\end{equation}
\end{theorem}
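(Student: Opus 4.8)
The plan is to reduce the downstream classification probabilities to latent-class membership probabilities and then transport the similarity ordering enforced by GCL through the monotone formula of Theorem~\ref{tm:prob_margin_main}. At a high level, the statement says the downstream classifier $f$ favors the positive latent class $c^+$ over every negative latent class $c^-_i$; since $\mathbf{v}$ and $\mathbf{v}^+$ share the class $c^+$ while each $\mathbf{v}^-_i$ lives in a distinct class $c^-_i$, this should follow from the fact that a well-trained encoder places $\mathbf{v}$ closer to $\mathbf{v}^+$ than to any $\mathbf{v}^-_i$ in the latent space. First I would make the memberships explicit: by the GCL construction in Sec.~\ref{sec:gcl_preliminary}, a positive pair $(\mathbf{v},\mathbf{v}^+)$ is drawn from $\mathcal{D}^2_{c^+}$, so $\mathbf{v}$ belongs to $c^+$, whereas the negatives $\mathbf{v}^-_i$ are drawn from the distinct classes $c^-_i$. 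Using Eq.~\eqref{eq:B_Space}, the ball $\mathbb{B}(\mathbf{v}^+)$ is exactly the set of samples whose latent class is $c^+$ (and $\mathbb{B}(\mathbf{v}^-_i)$ collects the samples of class $c^-_i$), so $\text{Pr}(\mathbf{v}\in\mathbb{B}(\mathbf{v}^+);h)$ is precisely the probability that $\mathbf{v}$ is assigned to $c^+$ under $h$.

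Next I would connect these membership probabilities to the downstream classifier. Because $f$ acts only through the representation $h(\mathbf{v})$ and labels a point by the latent class it most plausibly belongs to, I would model its class posteriors as proportional to the membership probabilities,
\begin{equation}
\mathbb{P}(f(h(\mathbf{v}))=c)=\frac{\text{Pr}(\mathbf{v}\in\mathbb{B}(\mathbf{v}_c);h)}{\sum_{c'\in\mathcal{C}}\text{Pr}(\mathbf{v}\in\mathbb{B}(\mathbf{v}_{c'});h)},
\end{equation}
where $\mathbf{v}_c$ is a representative of class $c$ and the normalization is shared across all classes. Since the denominator is common to every class, the ordering of the classifier posteriors coincides with the ordering of the membership probabilities, and it suffices to show $\text{Pr}(\mathbf{v}\in\mathbb{B}(\mathbf{v}^+);h)>\text{Pr}(\mathbf{v}\in\mathbb{B}(\mathbf{v}^-_i);h)$ for every $i$.

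This last inequality I would obtain from Theorem~\ref{tm:prob_margin_main} together with the similarity gap that GCL enforces. The Weibull expression in Eq.~\eqref{eq:prob_include_margin_main} is strictly increasing in the cosine similarity $s(h(\mathbf{v}),h(\cdot))$ for $a,\sigma>0$, so the membership inequality is equivalent to $s(h(\mathbf{v}),h(\mathbf{v}^+))>s(h(\mathbf{v}),h(\mathbf{v}^-_i))$. A minimizer of the InfoNCE objective in Eq.~\eqref{eq:loss_un_unified} precisely pulls positive pairs together and pushes negatives apart, so a well-trained encoder satisfies this gap for each $i$; taking the maximum over $i$ and pushing it back through the shared normalization then yields the claim.

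The step I expect to be the crux is the reduction from the downstream posteriors $\mathbb{P}(f(h(\mathbf{v}))=c)$ to the latent-class membership probabilities, since the statement keeps $f$ largely unspecified. Making this rigorous requires committing to a concrete model for $f$ (for instance a mean or Bayes classifier whose class posterior is monotone in the similarity between $h(\mathbf{v})$ and the class representative) and verifying that the normalizing constant is genuinely shared across classes so that it cannot reverse the ordering. A secondary subtlety is that the GCL similarity gap is guaranteed at the loss optimum and in expectation, so I would state the \emph{well-trained encoder} assumption carefully enough that the strict gap transfers to the individual sample $\mathbf{v}$ against each negative, rather than holding only on average.
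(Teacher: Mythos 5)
Your first half tracks the paper: the paper likewise starts from the clean-input assumption that $\mathbb{P}(\mathbf{v}\in\mathbb{B}(\mathbf{v}^+);h)>\max_i\mathbb{P}(\mathbf{v}\in\mathbb{B}(\mathbf{v}^-_i);h)$ and uses the strict monotonicity of Eq.~\eqref{eq:prob_include_margin_main} in the cosine similarity to extract the gap $s(h(\mathbf{v}),h(\mathbf{v}^+))>\max_i s(h(\mathbf{v}),h(\mathbf{v}^-_i))$. The divergence --- and the genuine gap --- is in the bridge to the downstream classifier. You posit that the posteriors of $f$ are proportional to the membership probabilities $\text{Pr}(\mathbf{v}\in\mathbb{B}(\mathbf{v}_c);h)$ with a normalizer shared across classes. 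Nothing in the setup licenses this model: $f$ is a classifier trained on the representations, and assuming its posteriors are monotone in latent-class membership is very close to assuming the ordering you are trying to prove. You correctly flag this step as the crux, but the resolution is not to commit to such a posterior model by fiat.

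The paper closes this gap by a different route. It defines the supervised loss $L_{sup}(\mathbf{v},c^+;h)$ as the infimum over linear heads $\mathbf{W}$ of a logistic loss of the margins $f(x)_{c^+}-f(x)_{c'}$ (the linear-evaluation protocol), and invokes the transfer lemma $L_{un}(\mathbf{v};h)\geq(1-\tau)L_{sup}(\mathbf{v},c^+;h)+\tau$ (Lemma~\ref{tm:connect_gcl_dt}, in the style of Saunshi et al.). Concretely, the similarity gap is rewritten as $(\tilde{s}(h(\mathbf{v}^+))-\tilde{s}(h(\mathbf{v}^-_i)))^\top\tilde{s}(h(\mathbf{v}))>0$, i.e., the representative classifier built from the normalized positive and negative representations already achieves positive margins; this forces $L_{un}(\mathbf{v};h)<1$, hence $L_{sup}(\mathbf{v},c^+;h)<1$, and by the definition of $L_{sup}$ the downstream logit for $c^+$ exceeds that for every $c^-_i$, which is the claimed probability ordering. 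The missing ingredient in your argument is precisely this contrastive-to-supervised loss transfer, which is what lets one conclude something about $f$ without postulating how $f$ relates to the latent-class membership probabilities.
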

\vskip -1em
The proof is in Appendix~\ref{appendix:downstream_tasks_robustness}. By applying Theorem~\ref{tm:robust_dt_classifier_main}, we can prove given $\mathbf{v}$'s perturbed version $\mathbf{v}{'} = \mathbf{v}\oplus\delta$, where $||\delta||_0 \leq k$, if $\mathbf{v}$ and $\mathbf{v'}$ satisfy Eq.~\eqref{eq:certified_corollary} and Eq.~\eqref{eq:certified_corollary_perturb_part} in Theorem~\ref{co:certified_robust_random_maksing_main}, we have
\begin{equation} \small
\mathbb{P}(f(h({\mathbf{v'}\oplus\epsilon}))=c^+) > \max_{\mathbf{v}^{-}_{i}\in\mathbf{V^{-}}} \mathbb{P}(f(h({\mathbf{v'}\oplus\epsilon}))=c^-_i),~~\forall ~{\|\delta\|_0 \leq k,}
\end{equation}
\vskip -1em
which implies provable $l^k_0$-certified robustness retention of $h$ at $(\mathbf{v},c^+)$ in downstream tasks.

\begin{figure}[t]
  \centering
\includegraphics[width=0.76\linewidth]{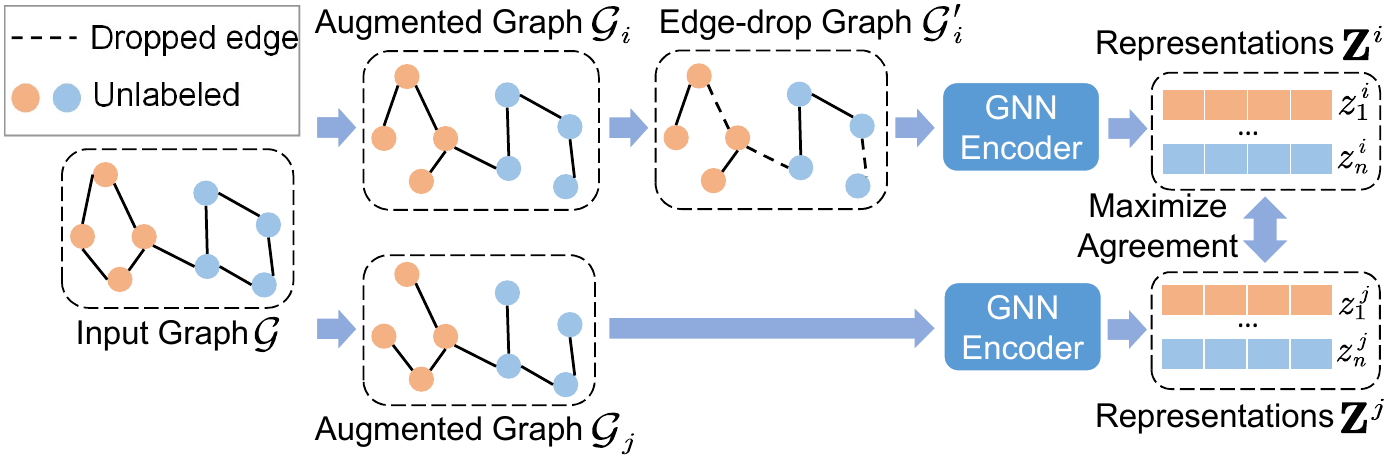}
\vspace{-0.6em}
\caption{General Framework of training GNN encoder via RES.} 
\vspace{-5.8mm}
\label{fig:framework_smoothedEncoder}
\end{figure}

\section{Practical Algorithms}
\vskip -0.5em
In this section, we first propose a simple yet effective GCL method to train the GNN encoder $h$ robustly. Then we introduce the practical algorithms for the prediction and robustness certification. 
\subsection{Training Robust Base Encoder}
\label{sec:training_robust_base_encoder}
\vskip -0.5em
Theorem~\ref{co:certified_robust_random_maksing_main} holds regardless how the base GNN encoder $h$ is trained. However, introducing randomized edgedrop noise solely to test samples during the inference phase could potentially compromise performance in downstream tasks{, which further negatively impact the certified robustness performance based on Eq.~\ref{eq:certified_corollary}}. 
{In order to make $g$ can classify and certify $(\mathbf{v}, c^+)$ correctly and robustly, } 
inspire by \cite{lecuyer2019certified}, 
we propose to train the base GNN encoder $h$ via GCL with randomized edgedrop noise.
Our key idea is to inject randomized edgedrop noise to one augmented view and maximize the probability in Eq.~\ref{eq:prob_include_margin_main} by using GCL to maximize the agreement between two views. Fig.~\ref{fig:framework_smoothedEncoder} gives an illustration of general process of training the base GNN encoder via our RES. 
Given an input graph $\mathcal{G}$, two contrasting views $\mathcal{G}_i$ and $\mathcal{G}_j$ are generated from $\mathcal{G}$ through stochastic data augmentations. Specifically, two views generated from the same instance are usually considered as a positive pair, while two views constructed from different instances are considered as a negative pair. After that, we inject the randomized edgedrop noise $\epsilon$ to one of the two views $\mathcal{G}_i$ and obtain $\mathcal{G}_{i}{'}$. Finally, two GNN encoders are used to generate embeddings of the views and a contrastive loss is applied to maximize the agreement between $\mathcal{G}_{i}{'}$ and $\mathcal{G}_j$. {The training algorithm is summarized in Appendix~\ref{appendix:training_algorithm}.}

\subsection{Prediction \& Certification}
\vskip -0.5em
\label{sec:predict_and_certification}
Following~\cite{cohen2019certified,wang2021certifiedgnn}, we then present the algorithm to use the smoothed GNN encoder $g(\mathbf{v})$ in the downstream tasks and derive the robustness certificates through Monte Carlo algorithms.

\textbf{Prediction on Downstream Tasks. }
We draw $\mu$ samples of $h(\mathbf{v}\oplus\epsilon)$, corrupted by randomized edgedrop noises. Then we obtain their predictions via the downstream classifier. If $c_{A}$ is the class which has the largest frequency $\mu_{c_A}$ among the $\mu$ predictions,  $c_A$ is returned as the final prediction. 

\textbf{Compute Robustness Certificates. }
One of robustness certificates is the certified perturbation size, which is the maximum attack budget that do not change the prediction of an instance no matter what perturbations within the budget are used. 
To derive the certified perturbation size of an instance $\mathbf{v}$, we need to estimate the lower bound $\underline{p_{\mathbf{v}^+,h}}(\mathbf{v}\oplus\epsilon) $ and upper bound $\overline{p_{\mathbf{v}^{-}_{i},h}}(\mathbf{v}\oplus\epsilon)$in Eq.~\ref{eq:certified_corollary}. Since it is challenging to directly estimate them in GCL, inspired by Theorem~\ref{tm:robust_dt_classifier_main}, if the prediction of $h(\mathbf{v}\oplus\epsilon)$ in downstream tasks is correct, $\mathbf{v}\oplus\epsilon$ will also be the positive sample of $\mathbf{v}$ under $h$, which indicates the connection between $\underline{p_{\mathbf{v}^+,h}}(\mathbf{v}\oplus\epsilon)$ and the probability of correctly classified in the downstream tasks. 
Formally, given a label set $\mathcal{C}$ for the downstream task,  let $\underline{p_A}$ denotes the lower bound of the probability that $h(\mathbf{v}\oplus\epsilon)$ is correctly classified as $c_A\in\mathcal{C}$ in the downstream task, $\underline{p_B}$ is the upper bound of the probability that $h(\mathbf{v}\oplus\epsilon)$ is classified as $c$ for $c\in \mathcal{C}\backslash\{c_A\}$.  
we have the following probability bound by a confidence level at least $1-\alpha$:
\begin{equation} \small
\underline{p_{\mathbf{v}^+,h}}(\mathbf{v}\oplus\epsilon) = \underline{p_A} = B(\frac{\alpha}{|\mathcal{C}|};\mu_{c_A},\mu-\mu_{c_A}+1), \quad \overline{p_{\mathbf{v}^{-}_{i},h}}(\mathbf{v}\oplus\epsilon) = \overline{p_B} =\min{(\max_{c\neq c_A}~\overline{p_c},1-\underline{p_A})}
\end{equation}
where $\overline{p_c} =  B(1-\frac{\alpha}{|\mathcal{C}|};\mu_c+1,\mu-\mu_c), \forall c\neq c_A$, and $B(q;u,w)$ is the $q$-th quantile of a beta distribution with shape parameter $u$ and $w$. After that, we calculate $\Delta$ based on Theorem~\ref{co:certified_robust_random_maksing_main} and the maximum $k$ that satisfies Eq.~\ref{eq:certified_corollary} is the certified perturbation size. 
\section{Experiments}
\vskip -0.5em
In this section, we conduct experiments to 
answer the following research questions: (\textbf{Q1}) How robust is RES under various adversarial attacks? (\textbf{Q2}) Can RES effectively provide certifiable robustness for various GCL methods? (\textbf{Q3}) How does RES contribute to the robustness in GCL?

\subsection{Experimental Setup}
\label{sec:experimental_setup}
\vskip -0.5em
\textbf{Datasets.}
We conduct experiments on 4 public benchmark datasets for node classification, i.e., Cora, Pubmed~\cite{sen2008collective}, Coauthor-Physics~\cite{shchur2018pitfalls} and OGB-arxiv~\cite{hu2020ogb}, and 3 widely used dataset for graph classification i.e., MUTAG, PROTEINS~\cite{morris2020tudataset} and OGB-molhiv~\cite{hu2020ogb}.
{We use public splits for Cora and Pubmed, and for five other datasets, we perform a 10/10/80 random split for training, validation, and testing, respectively.
The details and splits of these datasets are summarized in Appendix~\ref{appendix:dataset_statistics}.}

\textbf{Attack Methods. }
To demonstrate the robustness of RES to various structural noises,
we evaluate RES on 4 types of structural attacks in an evasion setting, i.e., Random attack, Nettack~\cite{zugner2018adversarial}, PRBCD~\cite{geisler2021robustness}, CLGA~\cite{zhang2022unsupervised} for both node and graph classification. {In our evaluation, we first train a GNN encoder on a clean dataset using GCL, and then subject RES to attacks during the inference phase by employing perturbed graphs in downstream tasks.
Especially, CLGA is an poisoning attack methods for GCL. To align it with our evasion setting, we directly employ the poisoned graph generated by CLGA in downstream tasks for evaluation.}
The details of these attacks are given in Appendix \ref{appendix:noisy_graphs}.

\textbf{Compared Methods. }
We employ 4 state-of-the-art GCL methods, i.e., GRACE~\cite{zhu2020grace}, BGRL~\cite{thakoor2021bootstrapped},  DGI~\cite{velickovic2019deep} and GraphCL~\cite{You2020GraphCL}, as baselines. More specifically, GRACE, BGRL and DGI are for node classification, GraphCL and BGRL are for graph classification. We apply our RES on them to train the smoothed GNN encoders. 
Recall one of our goal is to validate that our method can enhance the robustness of GCL against structural noises, we also consider two robust GCL methods (i.e., Ariel~\cite{feng2022adversarial} and GCL-Jaccard~\cite{xu2019topology}) and two unsupervised methods ( i.e., Node2Vec~\cite{grover2016node2vec} and GAE~\cite{kipf2016variational}) as the baselines. All hyperparameters of the baselines are tuned based on the validation set to make fair comparisons.
The detailed descriptions of the baselines are given in Appendix~\ref{appendix:baselines}

\begin{table*}[t]
    \centering
    \small
    \caption{ Robust accuracy results for node classification.}
    \vskip -0.8em
    \resizebox{0.9\textwidth}{!}
    {\begin{tabularx}{0.98\linewidth}{p{0.08\linewidth}p{0.06\linewidth}CCCCCc}
    \toprule
    {Dataset} & 
    {Graph} & {Node2Vec}& {GAE}& {GCL-Jac.}& {Ariel}& {GRACE}& {RES-GRACE}\\
     \midrule
\multirow{4}{*}{Cora}
& Raw & 67.6$\pm$1.0 & 76.8$\pm$0.9& 76.1$\pm$2.0& \textbf{79.8$\pm$0.6} & 77.1$\pm$1.6 & {79.7$\pm$1.0}\\
    
& Random & 57.7$\pm$0.7 & 74.2$\pm$0.9 & 74.1$\pm$2.0 & 76.3$\pm$0.6 & 74.5$\pm$2.1 & \textbf{79.1$\pm$1.2}\\
& CLGA & 64.7$\pm$0.7 & 72.7$\pm$1.3 & 73.7$\pm$1.2 &  76.6$\pm$0.4 & 74.9$\pm$2.0 & \textbf{78.2$\pm$1.0} \\
& PRBCD & 63.3$\pm$3.2 & 75.6$\pm$2.1 & 75.3$\pm$1.2 & 75.6$\pm$0.2 & 75.8$\pm$2.5 & \textbf{78.5$\pm$1.7}\\
\midrule
\multirow{4}{*}{Pubmed}
& Raw & 66.4$\pm$2.0 & 78.4$\pm$0.4& 78.2$\pm$2.7& 78.0$\pm$1.1 & 79.5$\pm$2.9 & \textbf{79.5$\pm$1.2}\\
    
& Random & 56.8$\pm$1.4 & 71.8$\pm$1.0 & 75.8$\pm$2.4 & 75.9$\pm$0.2 & 75.0$\pm$1.0 & \textbf{78.2$\pm$0.9}\\
& CLGA & 61.9$\pm$1.2 & 77.6$\pm$0.5 & 76.2$\pm$2.8 &  77.5$\pm$1.1 & 76.6$\pm$2.5 & \textbf{78.3$\pm$1.1} \\
& PRBCD & 55.9$\pm$1.5 & 74.8$\pm$2.5 & 73.1$\pm$2.0 & 73.5$\pm$1.6 & 73.2$\pm$2.3 & \textbf{78.8$\pm$1.7}\\
\midrule
\multirow{3}{*}{Physics}
& Raw & 92.9$\pm$0.1 & 95.2$\pm$0.1& 94.1$\pm$0.3& \textbf{95.3$\pm$0.3} & 94.0$\pm$0.4 & {94.7$\pm$0.2}\\
    
& Random & 85.7$\pm$0.3 & 93.7$\pm$0.1 & 93.4$\pm$0.3 & 93.8$\pm$0.1 & 92.6$\pm$0.5 & \textbf{94.2$\pm$0.3}\\
& PRBCD & 81.8$\pm$0.6& 91.1$\pm$0.7 & 91.6$\pm$0.2 & 91.0$\pm$0.9 & 89.2$\pm$0.6 & \textbf{94.1$\pm$0.2}\\
\midrule
\multirow{3}{*}{OGB-arxiv}
& Raw & 64.6$\pm$0.1 & 61.5$\pm$0.5& 64.7$\pm$0.2 & 64.7$\pm$0.3 & {65.1$\pm$0.5} & \textbf{65.2$\pm$0.1}\\
    
& Random & 52.4$\pm$0.1 & 57.4$\pm$0.5 & 59.0$\pm$0.1 & 59.4$\pm$0.4 & 59.0$\pm$0.2 & \textbf{60.0$\pm$0.1}\\
& PRBCD & 56.5$\pm$0.5& 54.1$\pm$0.5 & 56.5$\pm$0.4 & 57.0$\pm$0.9 & 55.7$\pm$0.4 & \textbf{58.3$\pm$0.4}\\
    \bottomrule 
        \end{tabularx}}
    \vskip -1.em
    \label{tab:Robust_acc_Node_Clf}
\end{table*}

\textbf{Evaluation Protocol.}
In this paper, we conduct experiments on both transductive node classification and inductive graph classification tasks. For each experiment, a 2-layer GCN is employed as the backbone GNN encoder. We adopt the common used linear evaluation scheme~\cite{velickovic2019deep}, where each model is firstly trained via GCL and then the resulting embeddings are used to train and test a $l_2$-regularized logistic regression classifier. The certified accuracy and robust accuracy on test nodes are used to evaluate the robustness performance. Specifically, \textit{certified accuracy}~\cite{cohen2019certified,wang2021certifiedgnn} denotes the fraction of correctly predicted test nodes/graphs whose certified perturbation size is no smaller than the given perturbation size. 
Each experiment is conduct 5 times and the average results are reported.
\begin{figure}[t]
    \small
    \centering
    \begin{subfigure}{0.245\linewidth}
        \includegraphics[width=0.99\linewidth]{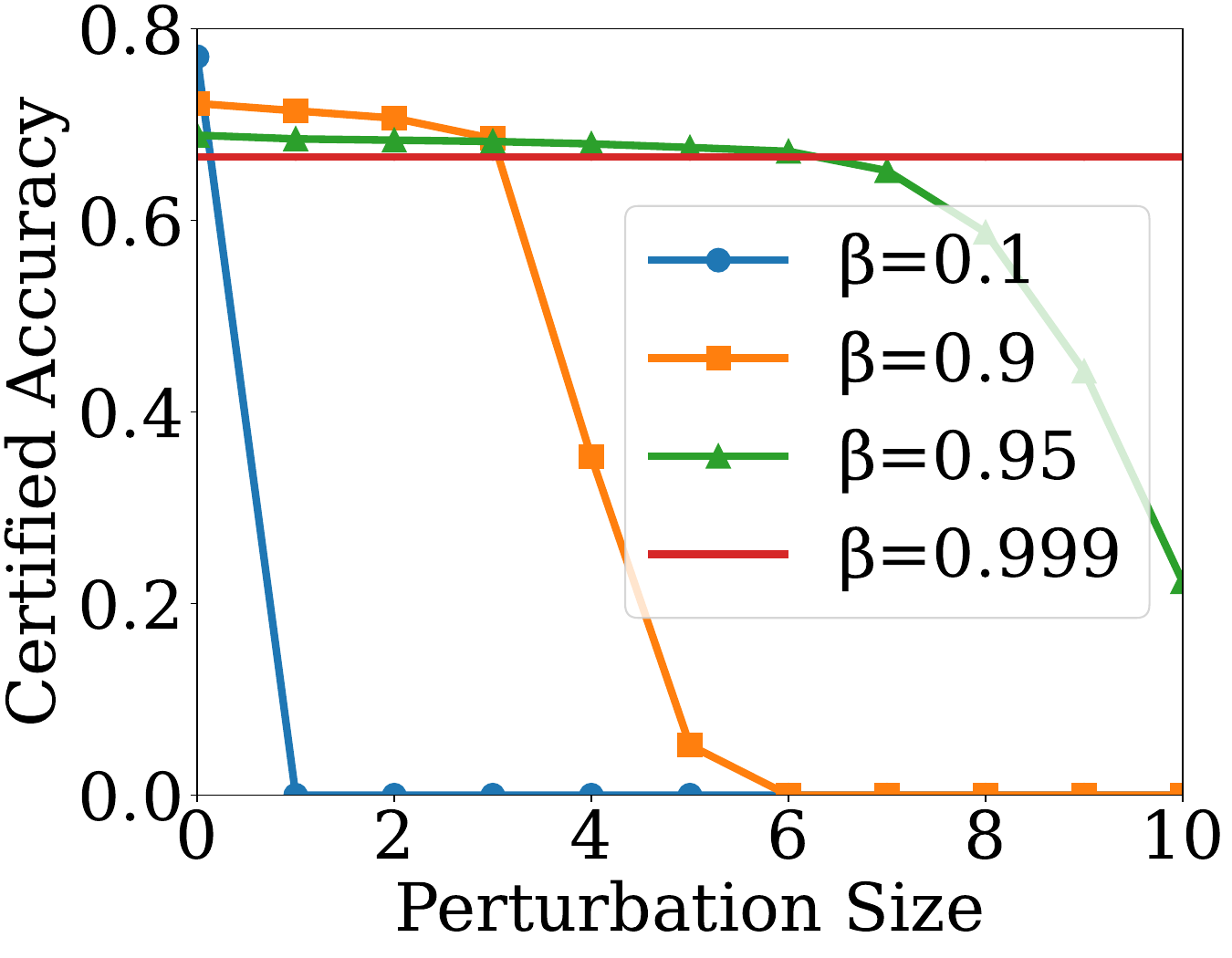}
        \vskip -0.5em
        \caption{Cora}
    \end{subfigure}
    \begin{subfigure}{0.245\linewidth}
        \includegraphics[width=0.99\linewidth]{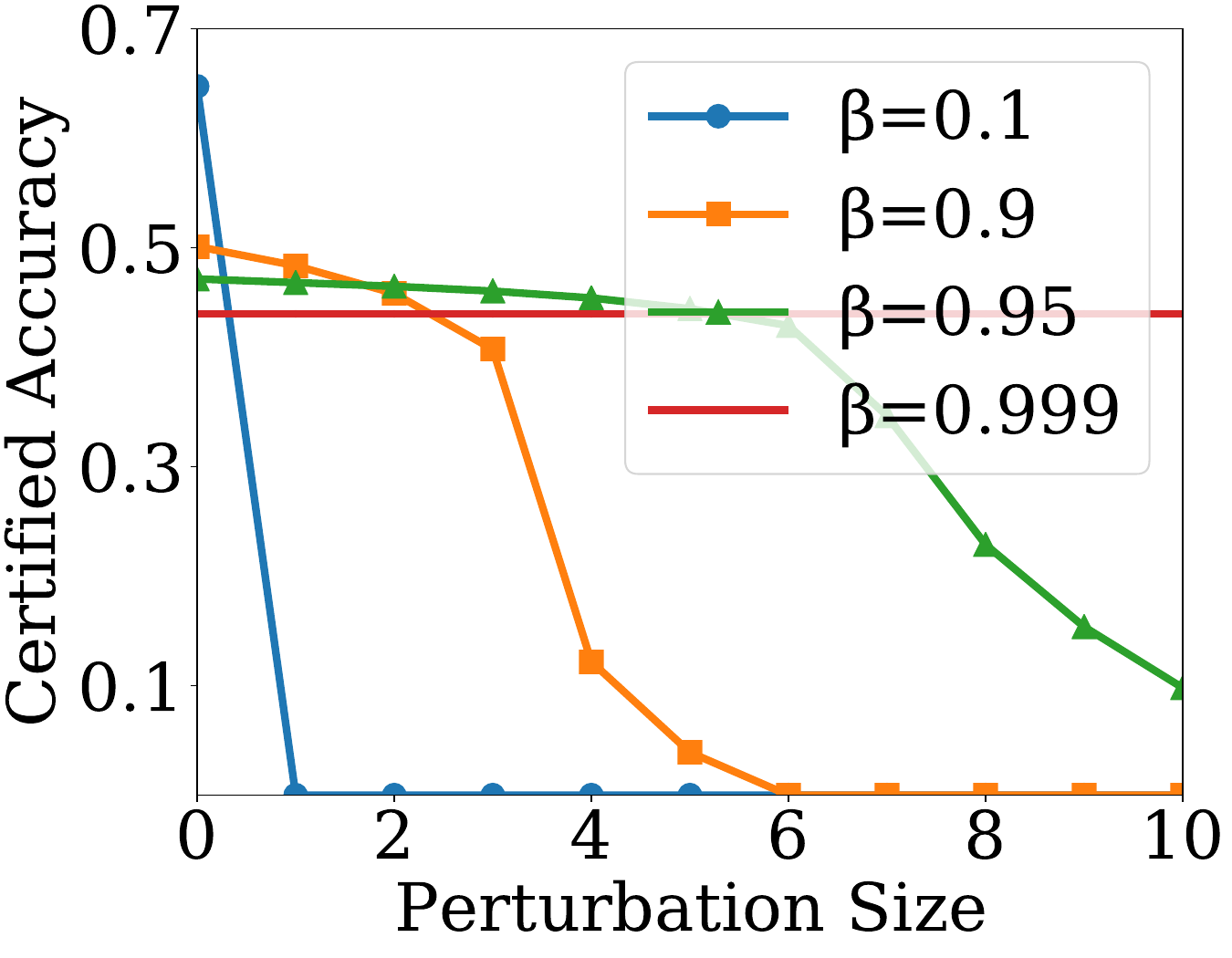}
        \vskip -0.5em
        \caption{OGB-arxiv}
    \end{subfigure}
    \begin{subfigure}{0.245\linewidth}
        \includegraphics[width=0.99\linewidth]{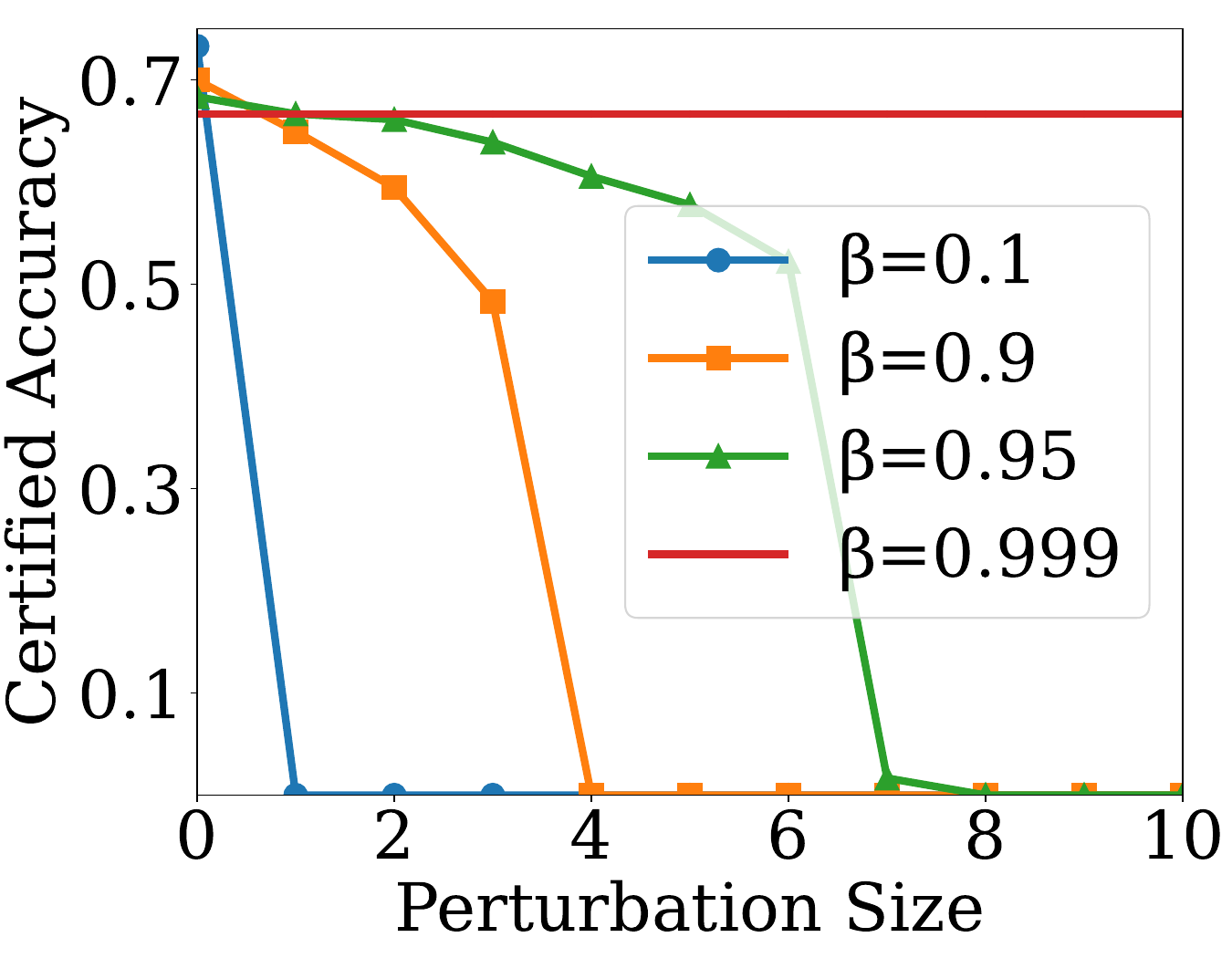}
        \vskip -0.5em
        \caption{MUTAG}
    \end{subfigure}
    \begin{subfigure}{0.245\linewidth}
        \includegraphics[width=0.99\linewidth]{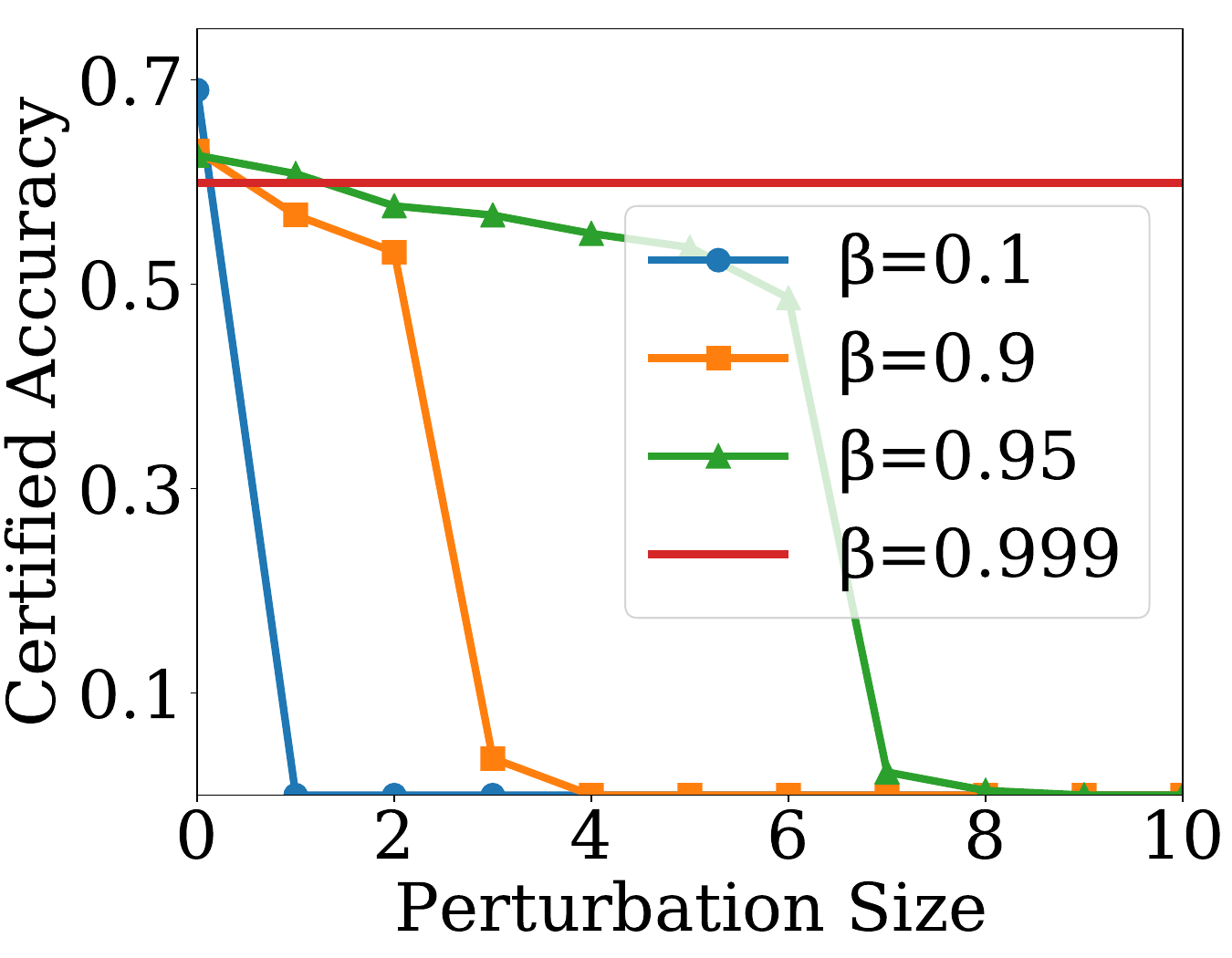}
        \vskip -0.5em
        \caption{PROTEINS}
    \end{subfigure}
    \vskip -0.9em
    \caption{Certified accuracy of smoothed GCL}
    \label{fig:certified_acc_smooth_gcl1}
    \vskip -1.em
\end{figure}
\begin{figure}[t]
    \small
    \centering
    \includegraphics[width=0.96\linewidth]{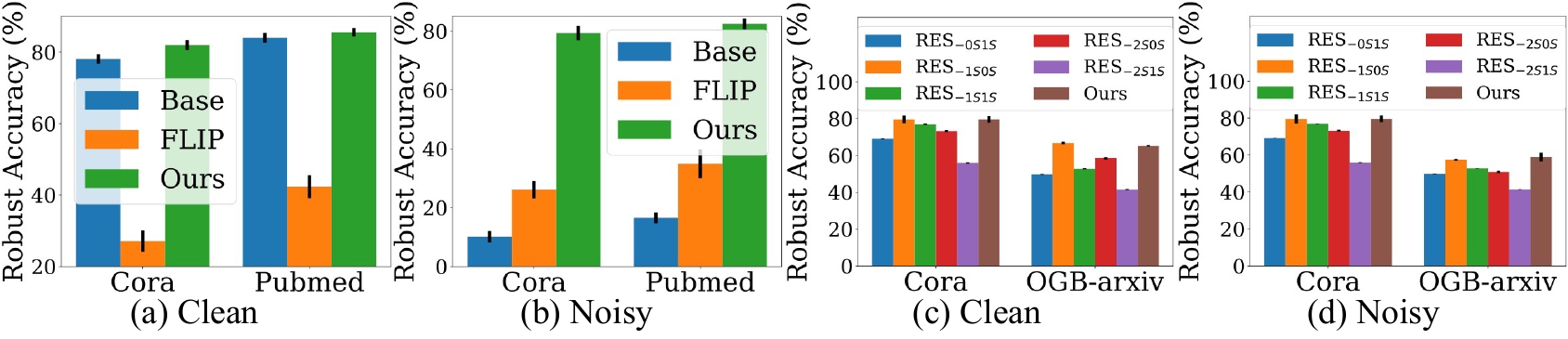}
    \vskip -0.8em
    \caption{Ablation Study Results: Comparisons (a) and (b) with FLIP, (c) and (d) with RES$_{i\textbf{S}j\textbf{S}}$}
    \label{fig:ablation_overall}
    \vskip -3.3em
\end{figure}

\subsection{Performance of Robustness}
\label{sec:performance_of_robustness}
\vskip -0.5em
To answer \textbf{Q1}, we compare RES with the baselines on various noisy graphs. We also conduct experiments to demonstrate that RES can improve the robustness against different noisy levels, which can be found in Appendix~\ref{appendix:results_under_different_noisy_levels}. 
We focus on node classification as the downstream task on four types of noisy graphs, i.e., raw graphs, random attack perturbed graphs, CLGA perturbed graphs and PRBCD perturbed graphs. 
The perturbation rate of noisy graphs is $0.1$. The details of the noisy graphs are presented in Appendix~\ref{appendix:noisy_graphs}. 
GRACE is used as the target GCL method to train a GCN encoder. The smoothed version of GRACE is denoted as RES-GRACE. The results on Cora, Coauthor-Physics and OGB-arxiv are given in Table~\ref{tab:Robust_acc_Node_Clf}. From the table, we observe: \textbf{(i)} When no attack is applied on the clean graph, our RES-GRACE achieve state-of-the-art performance, especially on large-scale datasets, which indicates RES is beneficial to learn good representation by injecting random edgedrop noise to the graph. \textbf{(ii)} The structural noises degrade the performances of all baselines. However, its impact to RES-GRACE is negligible. RES-GRACE outperforms all baselines including two robust GCL methods, which indicates RES could eliminate the effects of the noisy edges. More results on graph classification are shown in Appendix~\ref{appendix:results_graph_classification}.

\subsection{Performance of Certificates}
\label{sec:performance_of_certificates}
\vskip -0.5em
To answer \textbf{Q2}, we use certified accuracy as the metric to evaluate the performance of robustness certificates. We choose GCN as the GNN encoder and employ our method on GRACE~\cite{zhu2020grace} and GraphCL~\cite{You2020GraphCL} for node classification and graph classification, respectively. We select the overall test nodes as the target nodes and set $\mu=200$, $(1-\alpha)=99.9\%$ to compute the certified accuracy. The results for various $\beta$ are shown in Fig.~\ref{fig:certified_acc_smooth_gcl1}, where the x-axis denotes the given perturbation size.
From the figures, 
we can observe that $\beta$ controls the tradeoff between robustness and model utility. When $\beta$ is larger, the certified accuracy on clean graph is larger, but it drops more quickly as the perturbation size increases. Especially, when $\beta=0.999$, the certified accuracy is nearly independent of the perturbation size. Our analysis reveals that when $\beta$ is low, more structural information is retained on the graph, which benefits the performance of RES under no attack, but also results in more noisy edges being retained on the noisy graph, leading to lower certified accuracy. Conversely, when $\beta$ is large, less structural information is retained on the graph, which may affect the performance of RES under no attack, but also results in fewer noisy edges remaining on the noisy graph, leading to higher certified accuracy for larger perturbation sizes.

\subsection{Ablation Study}
\vskip -0.5em
\label{sec:ablation_study}
To answer \textbf{Q3}, we conduct several ablation studies to investigate the effects of our proposed RES method. {More results of ablation studies are shown in Appendix~\ref{appendix:ablation_study}. We also investigate how hyperparameters $(1-\alpha)$ and $\mu$ affect the performance of our RES, which can be found in Appendix~\ref{appendix:parameter_analysis}.}

Our first goal is to demonstrate RES is more effective in GCL compared with vanilla binary randomized smoothing in~\cite{wang2021certifiedgnn}, 
{We implement a variant of our model, FLIP, which replaces the random edgedrop noise with binary random noise~\cite{wang2021certifiedgnn}, thereby flipping the connection status within the graph with the probability $\beta$}. For our method, we set $\beta=0.9$, $1-\alpha=99\%$, and $\mu=50$. {For FLIP, to ensure a fair comparison, we set $\alpha=99\%$ and $\mu=50$, and vary $\beta$ over $\{0.1,0.2,\cdots,0.9\}$ and select the value which yields the best performance on the validation set of clean graphs.}
The target GCL method is selected as GRACE. We compare the robust accuracy of the two methods on the clean graph and noisy graph under Nettack with attack budget $3$. 
The average robust accuracy and standard deviation on Cora and Pubmed are reported in Fig.~\ref{fig:ablation_overall} (a) and (b). We observe:  
\textbf{(i)} RES achieves better results on various graphs (i.e., clean and noisy graph) compared to FLIP and the base GCL method.
\textbf{(ii)} FLIP performs much worse than the base GCL method and RES on the clean graph, suggesting that vanilla binary randomized smoothing is ineffective in GCL. 
That is because FLIP introduces excessive noisy edges, which in turn makes it challenging for GCL to learn accurate representations.

Second goal is to understand how RES contributes to the robustness of GCL. We implement several variants of our model by removing structural information in the training and testing phases, which are named  RES$_{i\textbf{S}j\textbf{S}}$, where $i \in \{0,1,2\}$ and $j \in \{0,1\}$ denote the number of removed structures in the training and testing phases, respectively. {For our method, we set $\beta=0.9$, $1-\alpha=99\%$ and $\mu = 50$.} 
We compare the robust accuracy on clean and noisy graphs and use PRBCD to perturb 10\% of the total number of edges in the graph (before attack) for noisy graphs. The overall test set was selected as the target nodes. The results on Cora and OGB-arxiv are shown in Fig.~\ref{fig:ablation_overall} (c) and (d).
We observe: 
\textbf{(i)} Our method significantly outperforms the ablative methods on various graphs, corroborating the effectiveness of randomized edgedrop smoothing for GCL. 
\textbf{(ii)} Our method and RES$_{1\textbf{S}0\textbf{S}}$ 
achieve better robust accuracy on various graphs compared to other ablative methods. This is because dropping edges in only one augmented view during training both alleviates over-fitting and increases the worst-case margin, which is helpful for robustness and model utility.
\textbf{(iii)} RES$_{2\textbf{S}j\textbf{S}}$ and RES$_{i\textbf{S}1\textbf{S}}$ perform worse than other methods because of the absence of structural information in training and test phases, highlighting the importance of structural information in GCL. {Moreover, the ablation studies on the effectiveness of our approach for training the GNN encoder are in Appendix~\ref{appendix:ablation_RES_training}.}

\section{Conclusion}
In this paper, we present the first work to study the certifiable robustness of GCL. 
We address the existing ambiguity in quantifying the robustness of GCL to perturbations by introducing a unified definition for robustness in GCL. 
Our proposed approach, Randomized Edgedrop Smoothing, injects randomized edgedrop noise into graphs to provide certified robustness for GCL on unlabeled data, while minimizing the introduction of spurious edges.
Theoretical analysis establishes the provable robust performance of our encoder in downstream tasks. Additionally, we present an effective training method for robust GCL.
Extensive empirical evaluations on various real-world datasets show that our method guarantees certifiable robustness and enhances the robustness of any GCL model.

\section{Acknowledgements}
This material is based upon work supported by, or in part by, the National Science Foundation (NSF) under grant number IIS-1909702, the Army Research Office (ARO) under grant number W911NF21-1-0198, and Department of Homeland Security (DNS) CINA under grant number E205949D. The findings in this paper do not necessarily reflect the view of the funding agencies.

\bibliographystyle{unsrt}
\bibliography{ref}

\appendix

\appendix
\newpage
\renewcommand{\theequation}{\thesection.\arabic{equation}}
\setcounter{theorem}{0}
\setcounter{problem}{0}
\setcounter{lemma}{0}
\setcounter{equation}{0}
\setcounter{definition}{0}
\section{Details of Related Works}
\label{appendix:related_work}
\subsection{Graph Contrastive Learning}
\label{appendix:related_work_GCL}
GCL has recently gained significant attention and shows promise for improving graph representations in scenarios where labeled data is scarce~\cite{velickovic2019deep,hassani2020contrastive, You2020GraphCL,zhu2020grace, zhu2021gca, feng2022adversarial, mo2022simple,hassani2020contrastive}. Generally, GCL creates two views through data augmentation and contrasts representations between two views. 
GraphCL~\cite{You2020GraphCL} focuses on graph classification by exploring four types of augmentations including node dropping, edge perturbation, attribute masking and subgraph sampling. GRACE~\cite{zhu2020grace} and GCA~\cite{zhu2021gca} adapt SimCLR~\cite{chen2020simclr} to graphs to maximize the mutual information between two views for each node through a variety of data augmentation.
DGI~\cite{velickovic2019deep} applied InfoMax principle~\cite{linsker1988self} to train a GNN encoder by maximizing the mutual information between node-level and graph-level representations. MVGRL~\cite{hassani2020contrastive} proposes to learn representations by maximizing the mutual information between the cross-view representations of nodes and graphs. 
Several recent works~\cite{zhang2022unsupervised,zhang2023GCLBackdoor} show that GCL is vulnerable to adversarial attacks. 
CLGA~\cite{zhang2022unsupervised} is an unsupervised poisoning attacks for attack graph contrastive learning. In detail, the gradients of the adjacency matrices for both views are computed, and edge flipping is performed using gradient ascent to maximize the contrastive loss. Despite very few empirical works~\cite{You2020GraphCL,feng2022adversarial,jovanovic2021towards} on robustness of GCL, there are no existing works studying the certified robustness of GCL. In contrast, we propose to provide robustness certificates for GCL by using randomized edgedrop smoothing.
To the best of our knowledge, our method is the first work to study the certified robustness of GCL. 

\subsection{Certifiable Robustness of Graph Neural Networks} 
\label{appendix:related_work_certified_robustness}
Several recent studies investigate the certified robustness of GNNs in the supervised setting~\cite{zugner2019certifiable, bojchevski2019certifiable, bojchevski2020efficient, jin2020certgraphcls, wang2021certifiedgnn, gao2020certified, dai2022comprehensive}. Z{\"u}gner et al.~\cite{zugner2019certifiable} are the first to explore certifiable robustness with respect to node feature perturbations. Subsequent works\cite{bojchevski2019certifiable, jin2020certgraphcls, bojchevski2020efficient, wang2021certifiedgnn, zugner2020certifiable} extend the analysis to certifiable robustness under topological attacks. For instance,
Bojchevski et al.~\cite{bojchevski2019certifiable} propose a branch-and-bound algorithm to achieve tight bounds on the global optimum of certificates for topological attacks. 
Bojchevski et al.~\cite{bojchevski2020efficient} further adapt the randomized smoothing technique to sparse settings, deriving certified robustness for GNNs. This approach involves injecting random noise into test samples to mitigate the negative effects of adversarial perturbations. Wang et al.~\cite{wang2021certifiedgnn} further refine this technique to provide theoretically tight robust certificates. Our work is inherently different from them: {(\textbf{i}) existing work focues on the certified robustness of GNN under (semi)-supervised setting; while we provide a unified definition to evaluate and certify the robustness of GCL for unsupervised representation learning. (\textbf{ii}) we theoretically provide the certified robustness for GCL in the absence of labeled data, and this certified robustness can provably sustained in downstream tasks. (\textbf{iii}) we design an effective training method to enhance the robustness of GCL.} 
\section{Preliminary of Randomized Smoothing}
One potential solution for achieving certified robustness in GNNs is binary randomized smoothing presented in~\cite{wang2021certifiedgnn}. Specifically, consider a noisy vector ${\epsilon}$ in the discrete space $\{0,1\}^N$
\begin{equation}
\small
\mathbb{P}{(\epsilon_i = 0)}= \beta, \quad \mathbb{P}{(\epsilon_i = 1)}=1-\beta,
\end{equation}
where $i=1,2,\cdots,N$. 
This indicates that the connection status of the $i$-th entry of $\mathbf{v}$
will be flipped with probability $1-\beta$ and preserved with probability $\beta$. Given a base node or graph classifier $f(\mathbf{v})$ that returns the class label with highest label probability, the smoothed classifier $g$ is defined as:
\begin{equation}
\small
g(\mathbf{v}) = \underset{y\in \mathcal{Y}}{\arg\max}\ \mathbb{P}(f(\mathbf{v}\oplus\epsilon)=y),
\end{equation}
where $\mathcal{Y}$ denotes the label set, and $\oplus$ represents the XOR operation between two binary variables, which combines the structural information of both variables. $\mathbb{P}(f(\mathbf{v}\oplus\epsilon)=y)$ is the probability that the base classifier $f$ predicts class $y$ when random noise $\epsilon$ is added to $\mathbf{v}$. 
{According to ~\cite{wang2021certifiedgnn},} by injecting multiple $\epsilon$ to $\mathbf{v}$ and returning the most likely class {$y_{A}$} (the majority vote), if it is holds that:
\begin{equation}
\begin{aligned}
\small
\min{\mathbb{P}(f(\mathbf{v}\oplus\epsilon)=y_{A})} \geq \max_{y\neq y_{A}}{\mathbb{P}(f(\mathbf{v}\oplus\epsilon)=y)}, {\text{~s.t.~} \|\epsilon\|_0 \le k,}
\end{aligned}
\end{equation}
where $y_{A}$ is the class with the highest label probability. Then we can guarantee there exists a certified perturbation size $k$ such that for any perturbation $\delta$ with $||\delta||_0 \leq k$, the predictions of $\mathbf{v}\oplus\delta$ will remain unchanged, that is, $g(\mathbf{v}\oplus\delta) = f(\mathbf{v}\oplus\delta\oplus\epsilon) = f(\mathbf{v}\oplus\epsilon) = y_{A}$.  Thus, $f$ is certifiably robust at $(\mathbf{v}, y_{A})$ when perturbing at most $k$ edges. {For more deatils of the proof, please refer to~\cite{wang2021certifiedgnn}.}

However, directly applying the vanilla randomized smoothing approach to certify robustness in GCL is inapplicable. This approach may require injecting random noise into the entire graph during data augmentation, resulting in an excessive number of noisy/spurious edges that can significantly harm downstream task performance, particularly for large and sparse graphs.
In contrast, our proposed method, Randomized Edgedrop Smoothing (RES), aims to prevent the introduction of excessive spurious edges in graphs. RES achieves this by injecting randomized edgedrop noises, where observed edges are randomly removed from the graphs with a certain probability. Additionally, we propose an effective training method for robust GCL.
Our experimental results in Sec.~\ref{sec:ablation_study} demonstrate that our method significantly outperforms the ablative method, FLIP, that directly applies vanilla randomized smoothing~\cite{wang2021certifiedgnn} in GCL. Specifically, on the clean Cora and Pubmed graphs, our method achieves accuracies of 82\% and 85.6\% respectively. On the Nettack perturbed~\cite{zugner2018adversarial} Cora and Pubmed graphs, our method achieves robust accuracies of 79.4\% and 82.5\%, respectively. In contrast, FLIP achieves only up to 27\% and 42\% robust accuracies on the Cora and Pubmed datasets, respectively. These results strongly validate the effectiveness of our RES approach,

\section{Detailed Proofs}
\subsection{Proof of Theorem~\ref{tm:prob_margin_main}}
\label{appendix:certified_robustness_gcl}
To begin, we provide a formalization of \textit{similarity} in relation to the latent class.

Given a GNN encoder $h$ and 
an input sample $\mathbf{v}$, which consists of a positive sample $\mathbf{v}^+$ and a negative sample $\mathbf{v}^-$
an input sample $\mathbf{v}$ with positive sample $\mathbf{v}^+$ and negative sample $\mathbf{v}^-$,
according to Def.~\ref{def:latent_class}, we assume that $c^+,c^-\in\mathcal{C}$ represent the latent classes of $\mathbf{v^+}$ and $\mathbf{v^-}$ in the latent space of $h$, respectively. These latent classes are based on the distribution $\eta$ over $\mathcal{C}$. 
we assume that $c^+,c^-\in\mathcal{C}$ are the latent class of $\mathbf{v^+}$ and $\mathbf{v^-}$ in the latent space of $h$, respectively, which are randomly determined based on the distribution $\eta$ on $\mathcal{C}$. Similar data $h(\mathbf{v}),h(\mathbf{v}^+)$ are i.i.d. draws from the same class distribution $\mathcal{D}_{c^+}$, whereas negative samples originate from the marginal of $\mathcal{D}_{sim}$, which are formalized as follow:
\begin{equation}
\begin{aligned}
    \label{eq:similarity_formalization}
    \mathcal{D}_{sim}(\mathbf{v},\mathbf{v}^+) &= \underset{c^+\sim\eta}{\mathbb{E}}\mathcal{D}_{c^+}(\mathbf{v})\mathcal{D}_{c^+}(\mathbf{v^+}). \\
    \mathcal{D}_{neg}(\mathbf{v}^-) &= \underset{c^-\sim\eta}{\mathbb{E}}\mathcal{D}_{c^-}(\mathbf{v^-}).
\end{aligned}
\end{equation}
Since classes are allowed to overlap and/or be fine-grained~\cite{saunshi2019theoretical}, this is a plausible formalization of "similarity". The formalization connects similarity with latent class, thereby offering a viable way to employ the similarity between samples under the latent space $h$ for defining the probability of $\mathbf{v}$ being the positive sample of $\mathbf{v}^+$.

\begin{lemma}[The Generalized Extreme Value Distribution~\cite{coles2001introduction}]
\label{lm:extreme_dependent}
Let $\mathbf{v}_1, \mathbf{v}_2, \cdots$ be a sequence of i.i.d samples from a common distribution function $F$. Let $M_N = \max\{\mathbf{v}_1, \cdots, \mathbf{v}_N \}$, {indicting the maximum of $\{\mathbf{v}_1, \cdots, \mathbf{v}_N \}$}. Then if $\{a_N>0\}$ and $\{b_N\in\mathbb{R}\}$ are sequences of constants such that
\begin{equation}
\begin{aligned}
    \mathbb{P}\{(M_N-b_N)/a_N\leq z\}\rightarrow G(z),
\end{aligned}
\end{equation}
where $G$ is a non-degenerate distribution function. 
It is a member of the generalized extreme value family of distributions, 
which belongs to either the Gumbel family (Type I), the Fr{\'e}chet family (Type II) or the Reverse Weibull family (Type III) with their CDFs as follows:
\begin{equation}
\label{eq:gev_distribution}
\begin{aligned}
&\text{Gumbel family (Type I):}\quad G(z)=\exp\{-\exp[-(\frac{z-b}{a})]\},\quad z\in \mathbb{R},\\
&\text{Fr{\'e}chet family (Type II):} \quad G(z)=\begin{cases}
0,& \text{z<b},\\
\exp\{-(\frac{z-b}{a})^{-\sigma}\},& \text{z$\geq$b},
\end{cases}\\
&\text{Reverse Weibull family (Type III):}\quad G(z)=\begin{cases}
\exp\{-(\frac{b-z}{a})^{-\sigma}\},& \text{z<b},\\
1,& \text{z$\geq$b},\end{cases}\\
\end{aligned}
\end{equation}
where $a > 0$, $b \in R$ and $\sigma > 0$ are the scale, location and shape parameters, respectively.
\label{lm:FTG_theorem}
\end{lemma}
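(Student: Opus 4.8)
The plan is to prove the Extremal Types Theorem by characterizing the admissible non-degenerate limits $G$ through their \emph{max-stability} and then solving the functional equation that max-stability forces. Concretely, I would proceed through three ingredients, in order: (i) show that any limit $G$ arising from linearly normalized maxima must be max-stable; (ii) invoke the convergence-to-types lemma (Khinchin's theorem) to turn the heuristic max-stability into an exact identity with explicit normalizing constants; and (iii) solve that functional equation, showing its only non-degenerate solutions are the Gumbel, Fr{\'e}chet, and Reverse Weibull forms displayed in \eqref{eq:gev_distribution}. The families then unify into the single GEV representation via a shape parameter.

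For step (i) I would first rewrite the hypothesis in a tractable form. By independence, $\mathbb{P}\{(M_N-b_N)/a_N \le z\} = \mathbb{P}\{\mathbf{v}_i \le a_N z + b_N,\ \forall i \le N\} = F^N(a_N z + b_N)$, so the assumption is $F^N(a_N z + b_N) \to G(z)$ at every continuity point of $G$. Fix an integer $k \ge 1$ and examine the maximum of $kN$ samples in two ways. Grouping all $kN$ directly, $F^{kN}(a_{kN} z + b_{kN}) \to G(z)$. Splitting the $kN$ variables into $k$ independent blocks of size $N$ and normalizing each by $(a_N,b_N)$ gives $F^{kN}(a_N z + b_N) = \left[F^N(a_N z + b_N)\right]^k \to G(z)^k$. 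Thus one and the same sequence $F^{kN}$, under two affine normalizations, converges to the two non-degenerate laws $G$ and $G^k$.

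In step (ii) I apply the convergence-to-types theorem: if $F_n(a_n x + b_n) \to G(x)$ and $F_n(\alpha_n x + \beta_n) \to H(x)$ with $G,H$ non-degenerate, then $a_n/\alpha_n \to A > 0$ and $(b_n-\beta_n)/\alpha_n \to B$, whence $H(x) = G(Ax+B)$. Applied to the two limits of step (i), this produces constants $\alpha_k > 0$ and $\beta_k$ with $G(z)^k = G(\alpha_k z + \beta_k)$ for every integer $k \ge 1$ — the \emph{max-stability} identity. For step (iii) I would extend this relation from integers to all real $s > 0$, writing $G^s(\alpha_s z + \beta_s) = G(z)$, using monotonicity of $G$ and continuity of the normalizing functions, and then take $-\log$ twice. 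Setting $U(z) = -\log G(z)$ on $\{G > 0\}$, the relation becomes $s\, U(\alpha_s z + \beta_s) = U(z)$, while composing it at scales $s$ and $t$ forces $\alpha_{st} = \alpha_s \alpha_t$ and $\beta_{st} = \alpha_t \beta_s + \beta_t$. Under measurability and monotonicity these Cauchy-type equations yield $\alpha_s = s^{-\gamma}$ for some $\gamma \in \mathbb{R}$, and solving the residual equation for $U$ on the sign of $\gamma$ — while tracking the finite endpoint of the support of $G$ — gives exactly the three families: $\gamma > 0$ is Fr{\'e}chet (Type II), $\gamma < 0$ is Reverse Weibull (Type III), and the limit $\gamma \to 0$ is Gumbel (Type I), with $a$, $b$, $\sigma$ identified as scale, location, and $1/|\gamma|$.

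The main obstacle is step (iii). The extension from integer to real $s$ and the rigorous solution of the Cauchy-type functional equation under only monotonicity (rather than an assumed continuity) is the technically delicate part, and handling the three regimes uniformly — in particular the degenerate $\gamma \to 0$ Gumbel case together with the precise location of the endpoint of $G$'s support — is where the real work lies. I would also flag that the convergence-to-types lemma used in step (ii) is itself nontrivial and must either be cited (it is standard, e.g.\ \cite{coles2001introduction}) or established separately via convergence of the quantile functions of $F_n$ at two fixed probability levels.
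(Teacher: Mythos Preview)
Your proposal is a correct and standard outline of the classical Fisher--Tippett--Gnedenko argument: max-stability via blocking, the convergence-to-types lemma to extract the functional equation $G^k(z)=G(\alpha_k z+\beta_k)$, and then solving the resulting Cauchy-type equations case by case on the sign of the shape exponent. This is exactly the route in, e.g., Leadbetter--Lindgren--Rootz\'en or the reference \cite{coles2001introduction} cited in the statement.

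However, the paper does not prove this lemma at all. It is stated purely as a quoted result from \cite{coles2001introduction} and used as a black box in the proof of Theorem~\ref{tm:prob_margin_main}. So there is no ``paper's own proof'' to compare against: you are supplying a full derivation where the authors simply cite the literature. Your sketch is substantially more than what the paper requires, and the technical caveats you flag (extension from integer to real $s$, the endpoint analysis, and the need to invoke or prove Khinchin's convergence-to-types lemma) are genuine but are precisely the parts that the cited reference handles --- which is presumably why the authors chose to cite rather than reprove.
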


\textbf{Theorem~\ref{tm:prob_margin_main}.} \textit{Let $\mathbb{B}(\mathbf{v}^{+})$ be a space around $\mathbf{v}^{+}$ {as defined in Eq.~\ref{eq:B_Space}}.
Given an input sample $\mathbf{v}$ and a GNN encoder $h$ learnt by GCL in Eq.~\ref{eq:loss_un_unified}, the probability of $\mathbf{v}$ being the positive sample of $\mathbf{v}^+$ is: 
\begin{equation}
\begin{aligned}
\label{eq:prob_include_margin}
\text{Pr}(\mathbf{v} \in \mathbb{B}(\mathbf{v}^+);h) =\exp\big[-(\frac{1-s(h(\mathbf{v}),h(\mathbf{v}^+))}{a})^\sigma\big],
\end{aligned}
\end{equation}
where $s(\cdot,\cdot)$ is the cosine similarity function. $a$, $\sigma > 0$ are Weibull shape and scale parameters~\cite{coles2001introduction}.}
\begin{proof}

We first assume $\{h(\mathbf{v}_1),h(\mathbf{v}_2),\cdots\}$ are i.i.d samples under the latent space $h$. Then, suppose $\mathbf{V}^-=\{\mathbf{v}^-_1,\cdot,\mathbf{v}^-_n\}$ is the set of negative samples of $\mathbf{v}$.
We assume there exists a continuous non-degenerate margin distribution $M$, which is denoted as follows:
\begin{equation}
\label{eq:worse_case_margin_distribution}
    M:=\min_{i\in[n]}{D_i}, \quad \text{with}~D_i := (1-s(h(\mathbf{v}^+),h(\mathbf{v}^-_{i})))/2,
\end{equation}
where $M, D_i \in [0,1]$, and $s(\cdot,\cdot)$ is the cosine similarity function. $M$ indicates the half of the minimum distance between $h(\mathbf{v}^+)$ and $h(\mathbf{v}^-_i)$. 

According to Lemma~\ref{lm:FTG_theorem}, we know that there exists $G(z)$ in the three types of generalized extreme value family. And since Lemma~\ref{lm:extreme_dependent} is applied to the maximum, refer to Eq.~\eqref{eq:worse_case_margin_distribution}, we transfer the variable $M$ to $\overline{M}:=\max_{i\in[n]}{-D_i}$, 
Since $-D_i$ is bounded $(-D_i<0)$, let $b=0$, the marginal distribution of $D_i$ series, for $i=1,2,\cdots$, can be Reverse Weibull family. Therefore, the asymptotic marginal distribution of $\overline{M}$ fits into the Reverse Weibull distribution:
\begin{equation*}
G(z)= \begin{cases}
\exp\{-(\frac{-z}{a})^{-\sigma}\},& \text{z<0},\\
1,& \text{z$\geq$0},\end{cases}
\end{equation*}
where $\sigma>0$ is the shape parameter and $a$ is the scale parameter. Compared to Eq.~\eqref{eq:gev_distribution}, here, $b=0$ because $-D_i$ is bound ($-D_i<0$). We use margin distances $D_i$ of the $\lambda$ closest samples with $\mathbf{v}^+$ to estimate the parameters $\alpha$ and $\sigma$, which means to estimate $\widehat{W}$ of the distribution function $W$.

Following Eq.~\eqref{eq:similarity_formalization}, if the similarity $s(h(\mathbf{v}),h(\mathbf{v}^+))$ is larger, $\mathbf{v}$ is more possible to be from the same latent class as $\mathbf{v}^+$, which implies $\mathbf{v}$ is the positive sample of $\mathbf{v}^+$. Since the distance between $h(\mathbf{v})$ and $h(\mathbf{v}^+)$ can be denoted as $1-s(h(\mathbf{v}^+),h(\mathbf{v}^-_{i}))$, the probability of $\mathbf{v}$ is the positive sample of $\mathbf{v}^+$ can be written as:
\begin{equation}
\begin{aligned}
\label{eq:prob_margin_old}
& \mathbb{P}(\mathbf{v}\in\mathbb{B}(\mathbf{v}^+);h) \\
=&  \mathbb{P}(1-s(h(\mathbf{v}),h(\mathbf{v}^+)) < \min\{D_1,\cdots,D_n\})\\
=&  \mathbb{P}(s(h(\mathbf{v}),h(\mathbf{v}^+) - 1)> \max\{-D_1,\cdots,-D_n\})\\
=& \mathbb{P}(\overline{M} < s(h(\mathbf{v}),h(\mathbf{v}^+)) - 1)\\
=& \widehat{W}(s(h(\mathbf{v}),h(\mathbf{v}^+)) - 1).
\end{aligned}
\end{equation}
Let $b_n=0$, $a_n=0$ and $z=s(h(\mathbf{v}),h(\mathbf{v}^+)) - 1$, since $z=s(h(\mathbf{v}),h(\mathbf{v}^+)) - 1<0$, we can rewrite Eq.~(\ref{eq:prob_margin_old}) as (\ref{eq:prob_include_margin}) according to Lemma~\ref{lm:extreme_dependent}. 
\end{proof}

\subsection{Proof of Theorem~\ref{co:certified_robust_random_maksing_main}}
\label{appendix:certified_robustness_by_RES}
For simplicity of notation, let $p_{\mathbf{v}^+,h}(\mathbf{v}) = \mathbb{P}({\mathbf{v}}\in\mathbb{B}(\mathbf{v}^{+});h)$. Then, given a randomized edgedrop noise $\epsilon \in \mathcal{D}_{\epsilon}$ with the following probability distribution:
\begin{equation}
\small
\mathbb{P}{(\epsilon_i = 0|\mathbf{v}_i=0)}= 1,\ \mathbb{P}{(\epsilon_i = 0|\mathbf{v}_i=1)}= \beta,\ \text{and } \mathbb{P}{(\epsilon_i = 1|\mathbf{v}_i=1)}=1-
\beta,
\label{eq:noise_distribution_appendix}
\end{equation}
where $\epsilon_i$ is the $i$-entry of noisy vector $\epsilon$ and $\mathbf{v}_i$ is the connection status of $i$-th entry of $\mathbf{v}$. Then, we let
\begin{equation}
    \underline{p_{\mathbf{v}^+,h}}(\mathbf{v}\oplus\epsilon) = \underset{\epsilon\in \mathcal{D}_{\epsilon}}{\inf}~p_{\mathbf{v}^+,h}(\mathbf{v}\oplus\epsilon),~~ \overline{p_{\mathbf{v}^+,h}}(\mathbf{v}\oplus\epsilon) = \underset{\epsilon\in\mathcal{D}_{\epsilon}}{\sup}~p_{\mathbf{v}^+,h}(\mathbf{v}\oplus\epsilon),
\end{equation}
where $\underline{p_{\mathbf{v}^+,h}}(\mathbf{v}\oplus\epsilon)$ and $\overline{p_{\mathbf{v}^{-}_{i},h}}(\mathbf{v}\oplus\epsilon)$ denote the lower bound and the upper bound on ${p_{\mathbf{v}^+,h}(\mathbf{v}\oplus\epsilon)}$, respectively. The formal theorem is presented as follows:

\textbf{Theorem~\ref{co:certified_robust_random_maksing_main}.} 
\textit{Let $\mathbf{v}$ be a clean input and $\mathbf{v'} = \mathbf{v}\oplus\delta$ be its perturbed version, where $||\delta||_0\leq k$. $\mathbf{V}^{-}=\{\mathbf{v}^-_{1},\cdots,\mathbf{v}^-_{n}\}$ is the set of negative samples of $\mathbf{v}$. If for all $\mathbf{v}^-_i \in \mathbf{V}^{-}$:
\begin{equation} 
\small
\label{eq:certified_corollary_appendix}
\underline{p_{\mathbf{v}^+,h}}(\mathbf{v}\oplus\epsilon)-\max_{\mathbf{v}^-_i\in\mathbf{V^{-}}}\overline{p_{\mathbf{v}^{-}_i,h}}(\mathbf{v}\oplus\epsilon)> 2\Delta,
\end{equation}
where
\begin{equation}
    \Delta = 1- \frac{\binom{d}{e}}{\binom{d+k}{e}}\cdot\beta^{k},
\end{equation}
and $e=\|\mathbf{v}\oplus\epsilon\|_0$ denotes the number of remaining edges of $\mathbf{v}$ after injecting $\epsilon$,
then with a confidence level of at least $1 - \alpha$, we have:
\begin{equation}
\label{eq:certified_corollary_perturb_part_appendix}
{p_{\mathbf{v}^+,h}(\mathbf{v'}\oplus\epsilon)}>\max_{\mathbf{v}^-_i\in\mathbf{V^{-}}}{p_{\mathbf{v}^{-}_i,h}(\mathbf{v'}\oplus\epsilon)}.
\end{equation}}

\begin{proof}
Suppose $\mathbf{V} = \{\mathbf{v}^+,\mathbf{v}^-_1,\cdots,\mathbf{v}^-_n\}$ is the set of positive and negative samples of $\mathbf{v}$. For any $\mathbf{v}_i\in\mathbf{V}$, let $p_{\mathbf{v}_{i},h}(\mathbf{v}) = \mathbb{P}({\mathbf{v}}\in\mathbb{B}(\mathbf{v}_{i});h)$ for simplicity, we have:
\begin{equation}
\begin{aligned}
    p_{\mathbf{v}_{i},h}(\mathbf{v}\oplus\epsilon) &= \mathbb{P}({\mathbf{v}\oplus\epsilon}\in\mathbb{B}(\mathbf{v}_{i});h) \\ 
    p_{\mathbf{v}_{i},h}(\mathbf{v'}\oplus\epsilon) &= \mathbb{P}({\mathbf{v'}\oplus\epsilon}\in\mathbb{B}(\mathbf{v}_{i});h).
\end{aligned}
\end{equation}    
As stated by the law of total probability, we have:
\begin{equation}
\begin{aligned}
    \label{eq:prob_by_total_law}
    p_{\mathbf{v}_{i},h}(\mathbf{v}\oplus\epsilon) &= \mathbb{P}([{\mathbf{v}\oplus\epsilon}\in\mathbb{B}(\mathbf{v}_{i});h]\wedge[(\mathbf{v'}\oplus\epsilon)\cap\delta=\emptyset]) \\
    &+ \mathbb{P}([{\mathbf{v}\oplus\epsilon}\in\mathbb{B}(\mathbf{v}_{i});h]\wedge[(\mathbf{v'}\oplus\epsilon)\cap\delta\neq\emptyset]) \\ 
    p_{\mathbf{v}_{i},h}(\mathbf{v'}\oplus\epsilon) &= \mathbb{P}([{\mathbf{v'}\oplus\epsilon}\in\mathbb{B}(\mathbf{v}_{i});h]\wedge[(\mathbf{v'}\oplus\epsilon)\cap\delta=\emptyset])\\
    &+\mathbb{P}([{\mathbf{v'}\oplus\epsilon}\in\mathbb{B}(\mathbf{v}_{i});h]\wedge[(\mathbf{v'}\oplus\epsilon)\cap\delta\neq\emptyset]),
\end{aligned}
\end{equation}   
where $(\mathbf{v'}\oplus\epsilon)\cap\delta$ represent the intersection of the edge sets $(\mathbf{v'}\oplus\epsilon)$ and $\delta$, that is, the set of edges shared in the two structure vectors.
Therefore, for $\mathbf{v'}\oplus\epsilon\cap\delta=\emptyset$, it means that no noisy edge of $\delta$ exists in $(\mathbf{v'}\oplus\epsilon)$, which indicates that
$\mathbf{v}\oplus\epsilon$ and $\mathbf{v'}\oplus\epsilon$ are structural identical at all indices, i.e., $\mathbf{v'}\oplus\epsilon = \mathbf{v}\oplus\epsilon$. Therefore, we can then derive the following equality:
\begin{equation}
\label{eq:cond_prob}
\mathbb{P}([{\mathbf{v}\oplus\epsilon}\in\mathbb{B}(\mathbf{v}_{i});h]\ |\ [(\mathbf{v'}\oplus\epsilon)\cap\delta=\emptyset]) 
=  \mathbb{P}([{\mathbf{v'}\oplus\epsilon}\in\mathbb{B}(\mathbf{v}_{i});h]\ |\ [(\mathbf{v'}\oplus\epsilon)\cap\delta=\emptyset]).
\end{equation}
By multiplying $\mathbb{P}((\mathbf{v'}\oplus\epsilon)\cap\delta=\emptyset)$ in both sides of Eq.~(\ref{eq:cond_prob}), we have:
\begin{equation}
\label{eq:equal_two_probs}
\mathbb{P}([{\mathbf{v}\oplus\epsilon}\in\mathbb{B}(\mathbf{v}_{i});h]\wedge[(\mathbf{v'}\oplus\epsilon)\cap\delta=\emptyset]) \\
= \mathbb{P}([{\mathbf{v'}\oplus\epsilon}\in\mathbb{B}(\mathbf{v}_{i});h]\wedge[(\mathbf{v'}\oplus\epsilon)\cap\delta=\emptyset]).
\end{equation}
Combining Eq.~(\ref{eq:prob_by_total_law}) with Eq.~(\ref{eq:equal_two_probs}) results in:
\begin{equation}
\begin{aligned}
    \label{eq:connect_two_probs}
    p_{\mathbf{v}_{i},h}(\mathbf{v'}\oplus\epsilon) - p_{\mathbf{v}_{i},h}(\mathbf{v}\oplus\epsilon)
    = & \mathbb{P}([{\mathbf{v'}\oplus\epsilon}\in\mathbb{B}(\mathbf{v}_{i});h]\wedge[(\mathbf{v'}\oplus\epsilon)\cap\delta\neq\emptyset])-\\ 
    & \mathbb{P}([{\mathbf{v}\oplus\epsilon}\in\mathbb{B}(\mathbf{v}_{i});h]\wedge[(\mathbf{v'}\oplus\epsilon)\cap\delta\neq\emptyset]).\\
\end{aligned}
\end{equation}
Since probabilities are non-negative, we can rewrite Eq.~(\ref{eq:connect_two_probs}) into the following inequality:
\begin{equation}
\begin{aligned}
    &p_{\mathbf{v}_{i},h}(\mathbf{v}\oplus\epsilon) - \mathbb{P}([{\mathbf{v}\oplus\epsilon}\in\mathbb{B}(\mathbf{v}_{i});h]\wedge[(\mathbf{v'}\oplus\epsilon)\cap\delta\neq\emptyset])\\
    &\leq
    p_{\mathbf{v}_{i},h}(\mathbf{v'}\oplus\epsilon)\leq \\
    &p_{\mathbf{v}_{i},h}(\mathbf{v}\oplus\epsilon) + \mathbb{P}([{\mathbf{v'}\oplus\epsilon}\in\mathbb{B}(\mathbf{v}_{i});h]\wedge[(\mathbf{v'}\oplus\epsilon)\cap\delta\neq\emptyset]).
\end{aligned}
\end{equation}
Applying the conjunction rule, we have:
\begin{equation}
\begin{aligned}
    \label{eq:prob_inequality_3}
    &p_{\mathbf{v}_{i},h}(\mathbf{v}\oplus\epsilon) - \mathbb{P}((\mathbf{v'}\oplus\epsilon)\cap\delta\neq\emptyset)
    \leq
    p_{\mathbf{v}_{i},h}(\mathbf{v'}\oplus\epsilon)\leq \\
    &p_{\mathbf{v}_{i},h}(\mathbf{v}\oplus\epsilon) + \mathbb{P}((\mathbf{v'}\oplus\epsilon)\cap\delta\neq\emptyset).
\end{aligned}
\end{equation}
Since $\mathbb{P}((\mathbf{v'}\oplus\epsilon)\cap\delta\neq\emptyset) = 1 - \mathbb{P}((\mathbf{v'}\oplus\epsilon)\cap\delta=\emptyset)$,
and $(\mathbf{v'}\oplus\epsilon)\cap\delta=\emptyset$ implies that remaining edges of $\mathbf{v}\oplus\epsilon$ and $\mathbf{v'}\oplus\epsilon$ are identical after injecting the random masking noise, we can derive the following probability:
\begin{equation}
    \mathbb{P}((\mathbf{v'}\oplus\epsilon)\cap\delta=\emptyset) = \frac{\binom{d}{e}}{\binom{d+|\delta|}{e}}\cdot\beta^{|\delta|},
\end{equation}
where $e$ denotes the number of remaining edges, and $k=|\delta|$ denotes the certified perturbation size. $\beta^{e}(1-\beta)^{d-e+|\delta|}$ represents the probability that $|\delta|$ noise edges are all dropped and $e$ edges of $\mathbf{v}$ are retained in $\mathbf{v'}\oplus\epsilon$. Therefore,  we have:
\begin{equation}
\begin{aligned}
    \label{eq:prob_cap_neq_empty}
    \mathbb{P}((\mathbf{v'}\oplus\epsilon)\cap\delta\neq\emptyset) = 1- \frac{\binom{d}{e}}{\binom{d+|\delta|}{e}}\cdot\beta^{|\delta|}\leq 1- \frac{\binom{d}{e}}{\binom{d+k}{e}}\cdot\beta^{k}= \Delta.
\end{aligned}
\end{equation}
Substituting Eq.~(\ref{eq:prob_cap_neq_empty}) into Eq.~(\ref{eq:prob_inequality_3}), then Eq.~(\ref{eq:prob_inequality_3}) can be rewritten as
\begin{equation}
    \label{eq:difference_Delta}
    p_{\mathbf{v}_{i},h}(\mathbf{v}\oplus\epsilon) - \Delta
    \leq
    p_{\mathbf{v}_{i},h}(\mathbf{v'}\oplus\epsilon)\leq
    p_{\mathbf{v}_{i},h}(\mathbf{v}\oplus\epsilon) + \Delta.
\end{equation}

Referring to Eq.~\eqref{eq:difference_Delta}, for any $\mathbf{v}^+$ and corresponding $\mathbf{v}^{-}_{i}\in\mathbf{V}^{-}$ we have 
\begin{equation}
\begin{aligned}
    \label{eq:two_inequality_corollary}
    & p_{\mathbf{v}^+,h}(\mathbf{v'}\oplus\epsilon)\geq p_{\mathbf{v}^+,h}(\mathbf{v}\oplus\epsilon)-\Delta, \\
    & p_{\mathbf{v}^-_i,h}(\mathbf{v}\oplus\epsilon)+\Delta\geq p_{\mathbf{v}^-_i,h}(\mathbf{v'}\oplus\epsilon). \\
\end{aligned}
\end{equation}
Thus, we can derive the following inequality based on Eq.~(\ref{eq:two_inequality_corollary}):
\begin{equation}
\begin{aligned}
    &p_{\mathbf{v}^+,h}(\mathbf{v'}\oplus\epsilon) \geq p_{\mathbf{v}^+,h}(\mathbf{v}\oplus\epsilon) - \Delta \geq \underline{p_{\mathbf{v}^+,h}}(\mathbf{v}\oplus\epsilon) - \Delta\\
    \geq &\max_{\mathbf{v}^-_i\in \mathbf{V}}\overline{p_{\mathbf{v}^{-}_i,h}}(\mathbf{v}\oplus\epsilon)+\Delta \geq \overline{p_{\mathbf{v}^{-}_{i},h}}(\mathbf{v}\oplus\epsilon)+\Delta \geq p_{\mathbf{v}^{-}_{i},h}(\mathbf{v}\oplus\epsilon)+\Delta\\
    \geq&p_{\mathbf{v}^{-}_{i},h}(\mathbf{v'}\oplus\epsilon),
\end{aligned}
\end{equation}
which can be restated as Eq.~\eqref{eq:certified_corollary_perturb_part_appendix}. This completes our proof.
\end{proof}

\subsection{Proof of Theorem~\ref{tm:robust_dt_classifier_main}}
\label{appendix:downstream_tasks_robustness}

In order to transfer the certified robustness of GCL to downstream tasks, we first introduce two loss functions, namely, unsupervised loss and supervised loss. Subsequently, we introduce a lemma to establish the relationship between  GCL and downstream tasks. Finally, a theorem is proposed to prove that the certified robustness of GCL is provably preserved in downstream tasks.
\paragraph{Unsupervised Loss} Given an input sample $\mathbf{v}$ with its positive sample $\mathbf{v}^+$ and $n$ negative samples $\{\mathbf{v}^-_1,\dots,\mathbf{v}^-_n\}$.
Let $h:\{0,1\}^n\rightarrow{\mathbb{R}^d}$ be the GNN encoder based on GCL to obtain representations. The unsupervised loss for $h$ at point $\mathbf{v}$ is defined as:
\begin{equation}
\begin{aligned}
    L_{un}(\mathbf{v};h):=\sum_{i=1}^{n}\ell(h(\mathbf{v})^{\top}(h(\mathbf{v}^+)-h(\mathbf{v}^-_i))),
    \label{eq:loss_un}
\end{aligned}
\end{equation}
where $l$ is logistic loss $l(\mathbf{v})=\log{(1+\sum_{i=1}^{n}\exp{(-\mathbf{v}_i)})}$ according to \cite{ash2021investigating,saunshi2019theoretical}. {Note that this loss  is essentially equivalent to the InfoNCE loss~\cite{oord2018representation,ash2021investigating} shown in Eq.~\ref{eq:loss_un_unified}, which is widely-used for GCL.}

\paragraph{Supervised Loss}
Linear evaluation, which learns a downstream linear layer after the base encoder, is a common way to evaluate the performance of GCL model in downstream tasks. Let $\mathcal{C}$ be denoted as the set of latent classes, where $|\mathcal{C}|=m$. We consider the standard supervised learning tasks that classify a data point into one of the classes in $\mathcal{C}$. To connect the GCL task with the downstream classification task,
the supervised loss of downstream classifier $f$ at $(\mathbf{x},y)$ is defined as:
\begin{equation}
 \begin{aligned}
L_{sup}(\mathbf{x},y;f):=\ell(\{f(x)_{y}-f(x)_{y^{'}}\}_{y^{'}\neq y}),
\end{aligned}
\end{equation}
 where $l$ is the same as the loss function used in the unsupervised loss in Eq.~\eqref{eq:loss_un}. To evaluate the learned representations on downstream tasks, we {typically} fix $h$ and train a linear classifier $\mathbf{W}\in\mathbb{R}^{m\times d}$ on the top of the encoder $h$. Therefore, the supervised loss of $h$ at at $(\mathbf{x},y)$ is defined as:
\begin{equation}
 \begin{aligned}
L_{sup}(\mathbf{x},y;h):=\inf_{\mathbf{W}\in\mathbb{R}^{m\times d}} L_{sup}(\mathbf{x},y;\mathbf{W}h),
\label{eq:loss_sup}
\end{aligned}
\end{equation}

\begin{lemma}[Connection between GCL and Downstream Tasks~\cite{saunshi2019theoretical}]
\label{tm:connect_gcl_dt}
Given a input sample $\mathbf{v}$ with its positive sample $\mathbf{v}^+$ and the set of negative samples $\mathbf{V}^-=\{\mathbf{v}^-_1,\dots,\mathbf{v}^-_n\}$. $\mathcal{C}$ is the set of latent class with distribution $\eta$.
The latent class of $\mathbf{v}^+$ is denoted as $c^+\in\mathcal{C}$. 
Suppose any $\mathbf{v^-}\in\mathbf{V}^-$ has the latent class $c^-\in\mathcal{C}$.
Then we have 
\begin{equation}
\begin{aligned}
    L_{un}(\mathbf{v};h)\geq (1-\tau)L_{sup}(\mathbf{v},c^+;h)+\tau,
\end{aligned}
\end{equation}
where $\tau = \underset{c^+,c^-\sim\eta}{\mathbb{E}}\mathbf{1}\{c^+=c^-\}$, which indicates the expectation that $c^+=c^-$.
\end{lemma}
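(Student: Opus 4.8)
The plan is to adapt the mean-classifier argument of Saunshi et al.\ (cited as \texttt{saunshi2019theoretical}), which bridges the contrastive objective and the downstream supervised objective through the class-conditional means. I first introduce, for each latent class $c\in\mathcal{C}$, the mean representation $\mu_c := \mathbb{E}_{\mathbf{u}\sim\mathcal{D}_c}[h(\mathbf{u})]$, and the associated \emph{mean classifier} $\mathbf{W}^{\mu}\in\mathbb{R}^{m\times d}$ whose $c$-th row is $\mu_c$. The key structural fact I would exploit is that $\ell$ is convex in its vector argument and that, for fixed $\mathbf{v}$, each map $\mathbf{v}^{+}\mapsto h(\mathbf{v})^{\top}h(\mathbf{v}^{+})$ (and likewise for the negatives) is linear. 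Applying Jensen's inequality to push the expectation over the positive draw $\mathbf{v}^{+}\sim\mathcal{D}_{c^{+}}$ and the negative draws $\mathbf{v}^{-}_i\sim\mathcal{D}_{c^{-}}$ inside $\ell$ then yields a lower bound in which every sampled representation is replaced by its class mean:
\[
L_{un}(\mathbf{v};h)\;\geq\;\mathbb{E}_{c^{+},c^{-}\sim\eta}\Big[\ell\big(h(\mathbf{v})^{\top}(\mu_{c^{+}}-\mu_{c^{-}})\big)\Big].
\]

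Next I would split the outer expectation over the pair $(c^{+},c^{-})$ according to the collision event $\{c^{+}=c^{-}\}$, whose probability is exactly $\tau=\mathbb{E}_{c^{+},c^{-}\sim\eta}\mathbf{1}\{c^{+}=c^{-}\}$. On the complementary event $\{c^{+}\neq c^{-}\}$, occurring with probability $1-\tau$, the inner term $\ell\big(h(\mathbf{v})^{\top}(\mu_{c^{+}}-\mu_{c^{-}})\big)$ is exactly the per-example loss incurred by the fixed linear head $\mathbf{W}^{\mu}$ at $(\mathbf{v},c^{+})$; aggregating these contributions gives $(1-\tau)\,L_{sup}(\mathbf{v},c^{+};\mathbf{W}^{\mu}h)$. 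On the collision event the difference $\mu_{c^{+}}-\mu_{c^{-}}$ vanishes, so the inner term collapses to the constant $\ell(\mathbf{0})$, contributing the additive $\tau$ term.

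Finally I would close the gap between the mean classifier and the \emph{optimal} downstream head. Since $L_{sup}(\mathbf{v},c^{+};h)=\inf_{\mathbf{W}} L_{sup}(\mathbf{v},c^{+};\mathbf{W}h)$ is an infimum over all linear layers, and $\mathbf{W}^{\mu}$ is one admissible choice, we have
\[
L_{sup}(\mathbf{v},c^{+};h)\;\leq\;L_{sup}(\mathbf{v},c^{+};\mathbf{W}^{\mu}h).
\]
Chaining this with the previous two displays produces $L_{un}(\mathbf{v};h)\geq (1-\tau)\,L_{sup}(\mathbf{v},c^{+};h)+\tau$, which is the claimed inequality.

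The main obstacle I anticipate is the Jensen step: I must verify that $\ell$ viewed as a function of the full vector of inner products is jointly convex, so that substituting the conditional means for the random representations genuinely lower-bounds the loss and does not flip the inequality. A secondary delicate point is the treatment of the constant $\ell(\mathbf{0})$ produced on the collision event, which must be reconciled with the stated additive $\tau$ (rather than $\tau\,\ell(\mathbf{0})$); this is where a normalization of $\ell$, or an explicit accounting of the number of negatives, has to be made precise.
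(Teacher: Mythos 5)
Your proposal follows essentially the same route as the paper's proof: Jensen's inequality on the convex loss $\ell$ to replace the sampled representations $h(\mathbf{v}^+)$, $h(\mathbf{v}^-)$ by their class-conditional means, a split of the outer expectation over the collision event $\{c^+=c^-\}$ of probability $\tau$, and the infimum definition of $L_{sup}(\cdot\,;h)$ to pass from the mean classifier to the optimal linear head. The normalization issue you flag is real but is shared by the paper's own argument, which silently treats $\ell(\mathbf{0})$ as $1$ even though the stated logistic loss gives $\ell(\mathbf{0})=\log(1+n)$; this is an imprecision inherited from the normalization convention of Saunshi et al., not a gap specific to your approach.
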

\begin{proof}
Based on the definitions of $L_{un}(\mathbf{v};h)$ and $L_{sup}(\mathbf{v};h)$ in Eq.~\eqref{eq:loss_un} and Eq.~\eqref{eq:loss_sup}, we have
\begin{equation}
\begin{aligned}
    L_{un}(\mathbf{v};h)&=\underset{c^+,c^-\sim \eta^2}{\mathbb{E}}[\underset{\mathbf{v}^+\sim\mathcal{D}_{c^+},\mathbf{v}^-\sim\mathcal{D}_{c^-}}{\mathbb{E}}\ell(h(\mathbf{v})^{\top}({h(\mathbf{v}^+)-h(\mathbf{v}^-)}))] \\
    & \geq \underset{c^+,c^-\sim \eta^2}{\mathbb{E}}\ell(h(\mathbf{v})^{\top}({\underset{\mathbf{v}^+\sim\mathcal{D}_{c^+}}{\mathbb{E}}[h(\mathbf{v}^+)]-\underset{\mathbf{v}^-\sim\mathcal{D}_{c^-}}{\mathbb{E}}[h(\mathbf{v}^-)]}))\\
    &=(1-\tau)\underset{c^+,c^-\sim \eta^2}{\mathbb{E}}[L_{sup}(\mathbf{v},c^+;f)|c^{+}\neq c^{-}]+\tau\\
    &=(1-\tau)L_{sup}(\mathbf{v},c^+;h)+\tau.
\end{aligned}
\end{equation}
This completes proof.
\end{proof}
Note that the above bound is similar to Lemma 4.3 in \cite{saunshi2019theoretical}.  By leveraging our Theorem~\ref{tm:connect_gcl_dt}, we can establish the connection between GCL and downstream tasks, and use this connection to prove the transferability of the certified robustness of GCL to downstream tasks.

\textbf{Theorem~\ref{tm:robust_dt_classifier_main}.}
\textit{Given a GNN encoder $h$ trained via GCL and an clean input $\mathbf{v}$.
$\mathbf{v}^+$ and $\mathbf{V^{-}} = \{\mathbf{v}^-_{1},\cdots,\mathbf{v}^-_{n}\}$ are
the positive sample and the set of negative samples of $\mathbf{v}$, respectively. Let $c^+$ and $c^-_i$ denote the latent classes of $\mathbf{v}^{+}$ and $\mathbf{v}^{-}_i$, respectively. 
Suppose 
$f$ is the downstream classifier that classify a data point into one of the classes in $\mathcal{C}$. Then, we have
\begin{equation}
\label{eq:theorem_3_appendix}
\mathbb{P}(f(h({\mathbf{v}}))=c^+) > \max_{\mathbf{v}^-_i\in\mathbf{V^{-}}} \mathbb{P}(f(h({\mathbf{v}}))=c^-_i)
\end{equation}}

\begin{proof}
Since $\mathbf{v}$ is a clean input, according to Eq.~\eqref{eq:similarity_formalization} and Sec.~\ref{sec:robust_certificates4gcl}, the positive pair $(\mathbf{v},\mathbf{v}^+)$ a pair of similar data that come from the same class distribution $\mathcal{D}_{c^+}$ and they have the following relationship:
\begin{equation}
\begin{aligned}
    \mathbb{P}({\mathbf{v}}\in\mathbb{B}(\mathbf{v}^+);h)>\max_{i}\mathbb{P}({\mathbf{v}}\in\mathbb{B}(\mathbf{v}^-_i);h).
\end{aligned}
\end{equation}
Then, based on Theorem~\ref{tm:prob_margin_main}, we know that $\mathbb{P}({\mathbf{v}}\in\mathbb{B}(\mathbf{v}^+);h)$ is monotonically increasing as $s(h({\mathbf{v}}),h(\mathbf{v}^+))$ increases. Therefore, we have
\begin{equation}
\begin{aligned}
    \label{eq:sim_compare}
    & s(h({\mathbf{v}}),h(\mathbf{v}^+)) > \max_{i}{s(h({\mathbf{v}}),h(\mathbf{v}^-_i))}\\
\end{aligned}
\end{equation}
According to Eq.~(\ref{eq:sim_compare}), we can obtain that:
\begin{equation}
\begin{aligned}
\label{eq:sum_sim_compare}
\sum_{i=1}^{n}[s(h({\mathbf{v}}),h(\mathbf{v}^+)) - s(h({\mathbf{v}}),h(\mathbf{v}^-_i))] >0,
\end{aligned}
\end{equation}
and the equivalent form of Eq.~(\ref{eq:sum_sim_compare}) is given as:
\begin{equation}
    (\tilde{s}(h(\mathbf{v}^+))-\tilde{s}(h(\mathbf{v}^-)))^\top h({\mathbf{v}}) > 0, \forall{\mathbf{v}^-\in \{\mathbf{v}^-_1,\cdots,\mathbf{v}^-_n\}},
    \label{eq:sim_compare_equi_form}
\end{equation}
where $\tilde{s}$ is the $l_2$-normalization operation on vector $\mathbf{v}$. Then, we can obtain:
\begin{equation}
    (\tilde{s}(h(\mathbf{v}^+))-\tilde{s}(h(\mathbf{v}^-)))^\top \tilde{s}(h({\mathbf{v}})) > 0, \forall{\mathbf{v}^-\in \{\mathbf{v}^-_1,\cdots,\mathbf{v}^-_n\}}.
    \label{eq:sim_compare_equi_form_after_adding_linear}
\end{equation}

To relate the robustness of GCL to that of downstream tasks, we select the negative samples whose latent class ${c^-_i}$ is different from ${c^+}$ and obtain the following relationship:
\begin{equation}
    \sum_{i=1}^{n}[(\tilde{s}(h(\mathbf{v}^+))-\tilde{s}(h(\mathbf{v}^-)))^\top \tilde{s}(h({\mathbf{v}}))|{c^+}\neq {c^-_{i}}] > 0.
    \label{eq:sim_compare_equi_form3}
\end{equation}

Substitute the left side of Eq.~(\ref{eq:sim_compare_equi_form3}) into Eq~(\ref{eq:loss_un}), we have $L_{un}(\mathbf{v};h)<1$. According to Lemma~\ref{tm:connect_gcl_dt}, we know that:
\begin{equation}
\begin{aligned}
L_{sup}(\mathbf{v},{c^+};h)\leq \frac{L_{un}(\mathbf{v};h)-\tau}{1-\tau}<1.
\end{aligned}
\end{equation}

Hence, according to the definition of $L_{sup}$ in Eq.~(\ref{eq:loss_sup}), we have:
\begin{equation}
\label{eq:clean_input_latent_relationship}
\begin{aligned}
\mathbb{P}(f(h({\mathbf{v}}))=c^+) > \mathbb{P}(f(h({\mathbf{v}}))=c^-),~\forall{{c^-}\in\{{c^-_1},\cdots,{c^-_n}\}}
\end{aligned}
\end{equation}
which means that the logit output of the positive latent class $c^+$ is always larger than any negative latent class $c^-_i$ for $\mathbf{v}$. Thus, we can rewrite Eq.~\eqref{eq:clean_input_latent_relationship} to Eq.~\eqref{eq:theorem_3_appendix}, and conclude the proof.
\end{proof}
By applying Theorem~\ref{tm:robust_dt_classifier_main}, we can demonstrate that given $\mathbf{v}$'s perturbed version $\mathbf{v}{'} = \mathbf{v}\oplus\delta$, where $||\delta||_0 \leq k$, if $\mathbf{v}$ and $\mathbf{v'}$ satisfy Eq.~\eqref{eq:certified_corollary} and Eq.~\eqref{eq:certified_corollary_perturb_part} in Theorem~\ref{co:certified_robust_random_maksing_main}, we have
\begin{equation} \small
\mathbb{P}(f(h({\mathbf{v'}\oplus\epsilon}))=c^+) > \max_{\mathbf{v}^-_i\in\mathbf{V^{-}}} \mathbb{P}(f(h({\mathbf{v'}\oplus\epsilon}))=c^-_i),~~\forall ~{\|\delta\|_0 \leq k,}
\end{equation}
which implies provable $l^k_0$-certified robustness retention of $h$ at $(\mathbf{v},c^+)$ in downstream tasks. The proof is completed.

\section{Training Algorithms}
\label{appendix:training_algorithm}
We summarize the training method of Sec.~\ref{sec:training_robust_base_encoder} for training smoothed GNN encoders in Algorithm~\ref{alg:Framwork}.
{Specifically, at each training epoch, we first generate two augmented graphs $\mathcal{G}_i$ and $\mathcal{G}_j$ via $q_i(\mathcal{G})$ and $q_j(\mathcal{G})$, respectively, where $q_i(\mathcal{G})$ and $q_j(\mathcal{G})$ are two graph augmentations sampled from an augmentation pool $\mathcal{T}$. The graph augmentation includes edge perturbation, feature masking, node dropping, etc. (line 3). Then we inject randomized edgedrop noise $\epsilon$ to one of the augmented graphs $\mathcal{G}_i$ (line 4). From line 5 to line 10, we train the GNN encoder $h_\theta$ through GCL by maximizing the agreement of representations in these two views. In detail, we apply $h_\theta$ to obtain node or graph representations $\mathbf{Z}_i$ and $\mathbf{Z}_j$ from $\mathcal{G}_i$ and $\mathcal{G}_j$, respectively (line 5), then we do gradient descent on $\theta$ based on Eq.~\eqref{eq:loss_un_unified}.}
\begin{algorithm}[t!] 
\caption{The Training Algorithm of  RES.} 
\label{alg:Framwork} 
\begin{algorithmic}[1]
\REQUIRE $\mathcal{G}=(\mathcal{V}, \mathcal{E}, \mathbf{X})$.
\ENSURE Trained GNN encoder $h_{\theta}$.
\STATE Randomly initialize $\theta$ for $h_{\theta}$;
\FOR{epoch=$1,2,\dots,$}
    \STATE Generate two augmented graphs $\mathcal{G}_i$ and $\mathcal{G}_j$ by $q_i(\mathcal{G})\sim\mathcal{T}$ and $q_j(\mathcal{G})\sim\mathcal{T}$;
    \STATE Inject randomized edgedrop noise $\epsilon$ to $\mathcal{G}_i$;
    \STATE Obtain node or graph representation $\mathbf{Z}_i$ and $\mathbf{Z}_j$ from $\mathcal{G}_i$ and $\mathcal{G}_j$ by using $h_{\theta}$;
    \STATE Update $\theta$ by applying gradient descent to minimize Eq.~\eqref{eq:loss_un_unified}.
\ENDFOR
\RETURN $h_{\theta}$;
\end{algorithmic}
\end{algorithm}

\section{Discussions}
\subsection{Difference between RES and Edge-dropping Augmentations in~\cite{You2020GraphCL} and~\cite{suresh2021adversarial}}
Our RES is inherently different from the random edge-dropping augmentation in GraphCL~\cite{You2020GraphCL} and the learnable edge-dropping augmentation in ADGCL~\cite{suresh2021adversarial}: (\textbf{i}) Random edge-dropping is an augmentation method to generate different augmented views and maximize the agreement between views, and the learnable edge-dropping~\cite{suresh2021adversarial} is also an augmentation method to enhance downstream task performance. 
However, RES is devised from the robustness perspective, providing certifiable robustness and enhancing the robustness of any GCL method. (\textbf{ii}) While random edge-dropping and learnable edge-dropping are only applied to augment graphs for GCL, RES extends beyond this. Following the generation of two augmented views as shown in Sec.~\ref{sec:training_robust_base_encoder}, RES injects randomized edgedrop noise into one augmented view during GCL training. Then, it performs randomized edgedrop smoothing in the inference phase through Monte Carlo, as shown in Sec.~\ref{sec:predict_and_certification}. Specifically, for inference using RES, $\mu$ samples of $h(\mathbf{v\oplus\epsilon})$ are drawn by injecting randomized edge-drop noise $\epsilon$ to $\mathbf{v}$ $\mu$ times. The final prediction is from Monte Carlo, selecting the $\mu$ predictions with the highest frequency in $\mu$ samples. 

To demonstrate the effectiveness of RES, we further compare ADGCL with RES-GraphCL on MUTAG and PROTEINS. We also add RES-ADGCL into comparisons. More details are shown in Appendix.~\ref{appendix:RES_ADGCL_RGCL}.

\subsection{Additional Details of Concatenation Vector}
\label{appendix:discussion_concatenation_vector}
The concatenation vector $\mathbf{v}$ is a vector to depict the structure of the node/graph for learning representations. For node-level tasks, it represents the connection status of any pair of nodes in the K-hop subgraph of the node $v$. For graph-level tasks, it represents the connection status of any pair of nodes in the graph $\mathcal{G}$. To construct such a vector, we select the upper triangular part of the adjacency matrix of the K-hop subgraph of $v$ or the graph $\mathcal{G}$ and flatten it into the vector, where each item in this vector can denote the connection status of any pair of nodes in the K-hop subgraph of the node $v$ or the graph $\mathcal{G}$.

The motivation for using this notation is that since we focus on perturbations on the graph structure $\mathbf{A}$ in this paper, we treat the feature vector of $v$ as a constant and use the adjacency matrix of the K-hop subgraph of the node or the adjacency matrix of the graph to represent the structure of the node or graph. For simplicity and clarity, given a GNN encoder $h$ and the concatenation vector $\mathbf{v}$ of the node $v$ or the graph $\mathcal{G}$ as above, we then omit the node feature matrix $\mathbf{X}$ and simply write the node $v$’s representation $h_{v}(A,X)$ and the graph $\mathcal{G}$’s representation $h(\mathcal{G})$ as $h(\mathbf{v})$. Therefore, we use a unified notation $\mathbf{v}$ to denote the node $v$ or the graph $\mathcal{G}$, and further facilitate our theoretical derivations. 

\subsection{Definition of Well-trained GNN encoders}
The well-trained GNN encoder is defined as an encoder that can extract meaningful and discriminative representations by mapping the positive pairs closer in the latent space while pushing dissimilar samples away.

To evaluate whether a GNN encoder $h$ is well trained or not mathematically, we introduce criteria based on the similarity between node/graph representations in the latent space. For each positive pair $(\mathbf{v}, \mathbf{v}^+)$ with its negative samples $\mathbf{V}^- = \{\mathbf{v}_1,\cdots,\mathbf{v}_n\}$, we clarify that $h$ is well-trained at $(\mathbf{v},\mathbf{v}^+)$ if the following inequality is satisfied:
\begin{equation}
\small
\label{eq:definition_of_well_trained_GNN_encoder}
s(h(\mathbf{v}),h(\mathbf{v}^+))>\max_{\mathbf{v}^-\in\mathbf{V^{-}}}{s(h(\mathbf{v}),h(\mathbf{v}^-))},
\end{equation}
where $s(\cdot,\cdot)$ is a cosine similarity function. This implies that $h$ can effectively discriminate $\mathbf{v}$ from all its negative samples in $\mathbf{V}^-$ and learn the meaningful representations for $\mathbf{v}$ in the latent space. Therefore, based on Eq.~(\ref{eq:definition_of_well_trained_GNN_encoder}), we can further extend the criteria for certifying robustness in GCL, which is shown in Definition~\ref{tm:certified_robustness_of_GCL_main}.

\subsection{Rationale behind Setting High Values for $\beta$}
The proposed robust encoder training method in Sec.~\ref{sec:training_robust_base_encoder} improves the model utility of GCL. Even setting a large $\beta$ for RES, we can still obtain high robust accuracy on clean graphs, further leading to high certified accuracies. 

Specifically, as shown in Sec.~\ref{sec:experimental_setup}, certified accuracy denotes the fraction of correctly predicted test nodes/graphs whose certified perturbation size is not smaller than the given perturbation size. It implies that these certified robust samples should also be correctly predicted by RES in the clean datasets. However, as the reviewer said, introducing randomized edgedrop solely to test samples during the inference could potentially hurt downstream task performance and further negatively impact the certified robustness based on Eq.(\ref{eq:certified_corollary}). Thus, we propose robust encoder training for RES in Sec.~\ref{sec:training_robust_base_encoder} by injecting randomized edgedrop noise into one augmented view during GCL . It ensures the samples with randomized edgedrop noises align in latent class with clean samples under the encoder, thereby mitigating the negative impacts of such noises and further benefiting the robustness and certification of RES.

\section{Code}
Our code is available at \url{https://github.com/ventr1c/RES-GCL}.
\section{Additonal Details of Experiment Settings}
\subsection{Dataset Statistics}
\label{appendix:dataset_statistics}
For node classification, we conduct experiments on 4 public benchmark datasets: Cora, Pubmed~\cite{sen2008collective}, Amazon-Computers~\cite{shchur2018pitfalls} and OGB-arxiv~\cite{hu2020ogb}, Cora and Pubmed are small citation networks. Amazon-Computers is a network of goods represented as nodes, and edges between nodes represent that the two goods are frequently bought together. OGB-arxiv is a large-scale citation network. For graph classification, we use 3 well-known dataset: MUTAG, PROTEINS~\cite{morris2020tudataset} and OGB-molhiv~\cite{hu2020ogb}. MUTAG is a collection of nitroaromatic compounds. PROTEINS is a set of proteins that are classified as enzymes or non-enzymes. OGB-molhiv is a dataset contains molecules, which is adopted from MoleculeNet~\cite{wu2018moleculenet}.  Among these datasets, for Cora and Pubmed, we evaluate the models on the public splits. Regarding to the other 5 datasets, Coauthor-Physics, OGB-arxiv, MUTAG, PROTEINS and OGB-molhiv, we instead randomly select $10\%$, $10\%$, and $80\%$ nodes/graphs for the training, validation and test, respectively. 
The statistics details of these datasets are summarized in Table~\ref{tab:dataset_appendix}. 
\begin{table}[h]
    \centering
    \caption{Dataset Statistics}
    \small
    \begin{tabularx}{\linewidth}{p{0.2\linewidth}p{0.08\linewidth}XXXX}
    \toprule 
    Datasets & \#Graphs & \#Avg. Nodes & \#Avg. Edges & \#Avg. Feature & \#Classes \\
    \midrule
    
    Cora & 1 & 2,708 & 5,429 & 1,443 & 7 \\ 
    Pubmed & 1 & 19,717 & 44,338 & 500 & 3 \\ 
    Coauthor-Physics & 1 & 34,493 & 495,924 & 8415 & 5 \\ 
    OGB-arxiv & 1 & 169,343 & 1,166,243 & 128 & 40 \\ 
    MUTAG & 188 & 17.9 & 39.6 & 7 & 2\\ 
    PROTEINS & 1,113 & 39.1 & 145.6 & 3 & 2\\
    OGB-molhiv & 41,127 & 25.5 & 27.5 & 9 & 2\\
    
    \bottomrule
    \end{tabularx}
    \label{tab:dataset_appendix}
\end{table}

\subsection{Attack Methods}
\label{appendix:noisy_graphs}
One of our goals is to show RES is robust to various structural noises, we evaluate RES on 4 types of structural attacks in evasion setting, i.e., Random attack, Nettack~\cite{zugner2018adversarial}, PRBCD~\cite{geisler2021robustness}, CLGA~\cite{zhang2022unsupervised} for both node and graph classification. The procedure of the evasion attack against GCL in transductive node classification is shown in Algorithm \ref{alg:evasion_atk}. 
The details of these attacks are described following:
\begin{enumerate}[leftmargin=2em]
\item \textbf{Random Attack}: We randomly add some noisy edges to the graphs for node classification and graph classification, respectively, until the attack budget is satisfied. Specifically, we consider two kinds of attack settings for node classification, that is, global random attack and targeted random attack. For global random attack, which is used in Sec.~\ref{sec:performance_of_robustness}, we randomly inject some fake edges (10\% in our setting) into the whole graph. {For targeted attack, we randomly connect some fake edges to the direct neighbors of target nodes.}

\item \textbf{Nettack}~\cite{zugner2018adversarial}: It is a targeted attacks for node classification that manipulate the graph structure to mislead the prediction of target nodes. 
\item \textbf{PRBCD}~\cite{geisler2021robustness}: It is a scalable global attack for node classification that aims to decrease the overall accuracy of the graph.
\item \textbf{CLGA}~\cite{zhang2022unsupervised}: It is an unsupervised gradient-based poisoning attack targeting graph contrastive learning for node classification. Since we focus on evasion attacks, we directly use the poisoned graph generated by CLGA in the downstream tasks to evaluate the performance and regard it as an evasion global attack. 
\end{enumerate}

Moreover, we further consider two graph injection attack methods, i.e., TDGIA~\cite{zou2021tdgia} and AGIA~\cite{chen2022hao}, as baselines to demonstrate the robustness of RES. The details of experimental results are shown in Sec.~\ref{appendix:addition_results_for_node_clf}.
\begin{algorithm}[t] 
\caption{The Evasion Attack Procedure against GCL in Transductive Node Classification} 
\label{alg:evasion_atk} 
\begin{algorithmic}[1]
\REQUIRE Clean graph $\mathcal{G}=(\mathcal{V}, \mathcal{E}, \mathbf{X})$, GNN encoder $h_{\theta}$, target node $v$ with class label $c$, perturbation $\delta$, downstream classifier $f$.
\ENSURE success (i.e., attack $v$ successfully or not)

\STATE Train $h_{\theta}$ on $\mathcal{G}$ via GCL;
\STATE Obtain perturbed node $v'$ by adding perturbation $\delta$ to $v$; 
\STATE Generate the representation of $v'$ as $h(v')$;
\IF{$f(h(v')) = c$} 
    \RETURN success $\gets$ {\FALSE};
    \ELSE
    \RETURN success $\gets$ {\TRUE};
\ENDIF

\end{algorithmic}
\end{algorithm}

\subsection{Compared Methods}
\label{appendix:baselines}
We select four state-of-the-art GCL methods and employ RES on them to train smoothed GNN encoders:
\begin{enumerate}[leftmargin=2em]
\item \textbf{GRACE}~\cite{zhu2020grace}: It is node-level GCL method which creates multiple augmented views by perturbing graph structure and masking node features. Then it encourages consistency between the same nodes in different views.
\item \textbf{BGRL}~\cite{thakoor2021bootstrapped}:Insipred by BYOL~\cite{grill2020bootstrap}, it is performs graph contrastive learning that does not require negative samples. Specifically, it applies two graph encoders (i.e., online and targeted encoders), and update them iteratively to make the predicted representations closer to the true representations for each node.

\item \textbf{DGI}~\cite{velickovic2019deep}: It is a state-of-the-art GCL method which adapted from Deep InfoMax~\cite{hjelm2018learning} to maximize the mutual information between local and global features.
\item \textbf{GraphCL}~\cite{You2020GraphCL}: It is the first work to study GCL at graph-level. Specifically, it constructs four types of graph augmentations and adapts SimCLR~\cite{chen2020simclr} to learn graph-level embeddings.
\end{enumerate}
Moreover, we also compare RES-GCL with several representative and state-of-the-art graph representation learning methods and robust GCLs against structural noises:
\begin{enumerate}[leftmargin=2em]
\item \textbf{Node2Vec}~\cite{grover2016node2vec}: It is a traditional unsupervised methods. Its key idea is to perform random walks on the graph to generate sequences of nodes that capture both the local and global structure of the graph. 
\item \textbf{GAE}~\cite{kipf2016variational}: It is a representative unsupervised learning method which learns a low-dimensional representation of a graph by encoding its nodes and edges into a vector space, and then decoding this representation back into a graph structure. It is trained to minimize the reconstruction error between the original graph and the reconstructed graph.
\item \textbf{GCL-Jaccard}: It is implemented by removing dissimilar edges based on Jaccard similarity before and after the training phase, respectively, which is inspired from GCN-Jaccard~\cite{wu2019gcnjaccard}. 
\item \textbf{Ariel}~\cite{feng2022adversarial}: It is a robust GCL method which uses an additional adversarial graph view in graph contrastive learning to improve the robustness. An information regularizer is also applied to stabilize its performance.
\end{enumerate}
\subsection{Implementation details}
A 2-layer GCN is employed as the backbone GNN encoder and a common used linear evaluation scheme~\cite{velickovic2019deep} is adopt in the downstream tasks. More specifically, each GNN encoder is firstly trained via GCL method and then the resulting embeddings are used to train and test a $l_2$-regularized logistic regression classifier. GRACE is implemented based on the source code published by authors~\footnote{https://github.com/CRIPAC-DIG/GRACE}. BGRL and DGI methods are implemented based on PyGCL library~\cite{Zhu:2021tu}. {All hyperparameters of all methods are tuned based on the validation set for fair comparison.}
All models are trained on an A6000 GPU with 48G memory. 

{\subsection{Attack Settings}
In this paper, we assume that the attacker can conduct attack in two settings: transductive node classification and inductive graph classification. These two settings are described below:
\begin{itemize}[leftmargin=2em]
    \item \textbf{Transductive Setting:} In this setting, test instances (nodes/graphs) are visible during both training the GNN encoder and inference in downstream tasks. Specifically, we first train a GNN encoder $h$ via GCL on a clean dataset that includes test nodes to generate node representations. Then, an attacker adds perturbations to the test nodes, causing $h$ to produce poor representations that degrade performance on downstream tasks. For example, an attacker may attempt to manipulate a social network by injecting fake edges, which could affect the performance of a well-trained $h$ in tasks such as community detection, influential node identification, or link prediction. 
    \item \textbf{Inductive Setting:} In this setting, test instances only appear in downstream tasks and are invisible during the training phase. This setting is similar to the transductive setting, but the test nodes/graphs are not seen during training. This scenario commonly arises in real-world applications such as new drug discovery, where an attacker may attempt to manipulate a new molecular graph in the test set to mislead the model, resulting in incorrect predictions in downstream tasks.
\end{itemize}}
\section{Additional Results of the Performance of Certificates}
In this section, we extend the experiments in Sec. \ref{sec:performance_of_certificates} and present comprehensive results on the performance of robustness certificates. Our aim is to demonstrate that RES can effectively provide certifiable robustness for various GCL methods. We select GRACE, BGRL, DGI, and GraphCL as the target GCL methods and integrate them with RES. We perform experiments on Cora, Pubmed, Coauthor-Physics, and OGB-arxiv for node classification tasks, as well as MUTAG, PROTEINS, and OGB-molhiv for graph classification tasks. $\textit{Certified accuracy}$ is selected as the evaluation metric. Specifically, \textit{certified accuracy}~\cite{cohen2019certified,wang2021certifiedgnn} denotes the fraction of correctly predicted test nodes/graphs whose certified perturbation size is no smaller than the given perturbation size. 
The complete results are reported in Fig.~\ref{fig:certified_acc_Cora_appendix} to Fig.~\ref{fig:certified_acc_mutag_proteins_appendix}.
\begin{figure}[h]
    \centering
    \begin{subfigure}{0.32\linewidth}
        \includegraphics[width=0.99\linewidth]{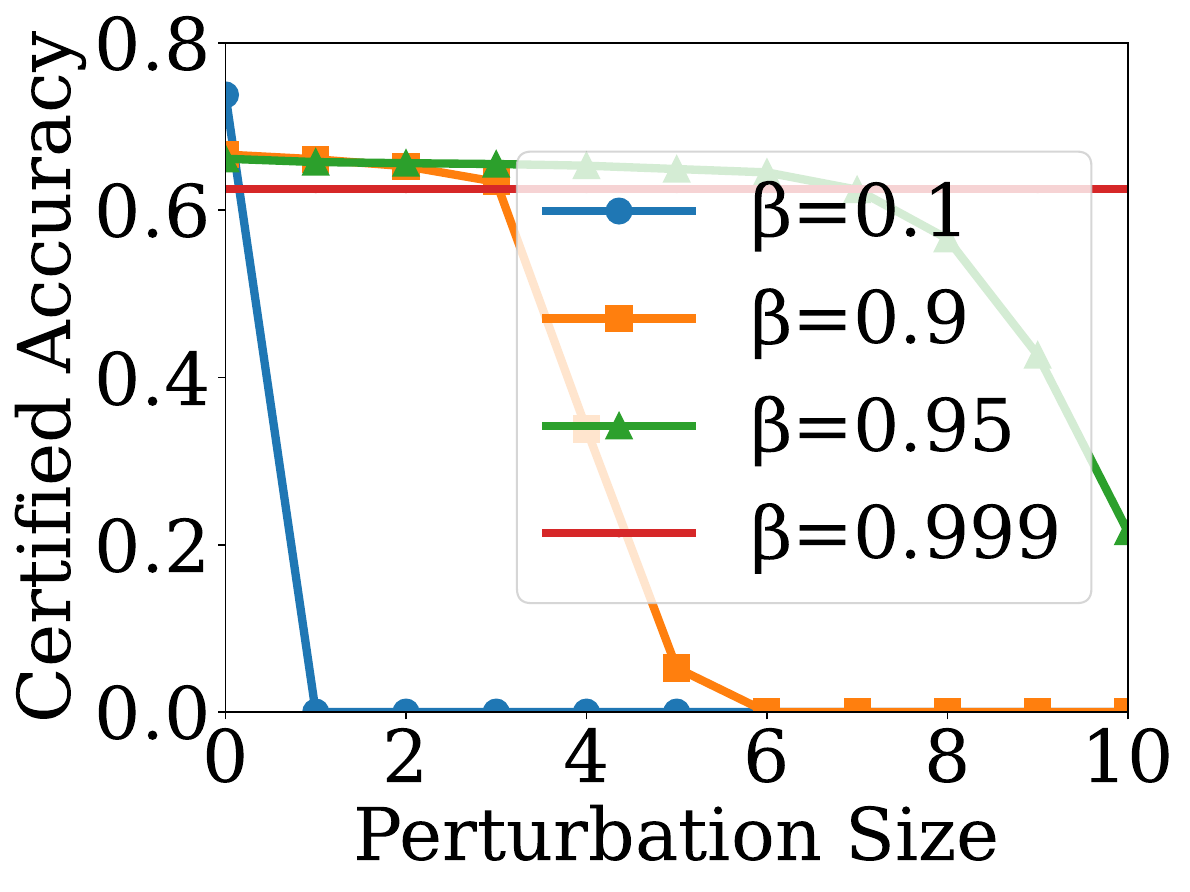}
        \vskip -0.5em
        \caption{GRACE}
    \end{subfigure}
    \begin{subfigure}{0.32\linewidth}
        \includegraphics[width=0.99\linewidth]{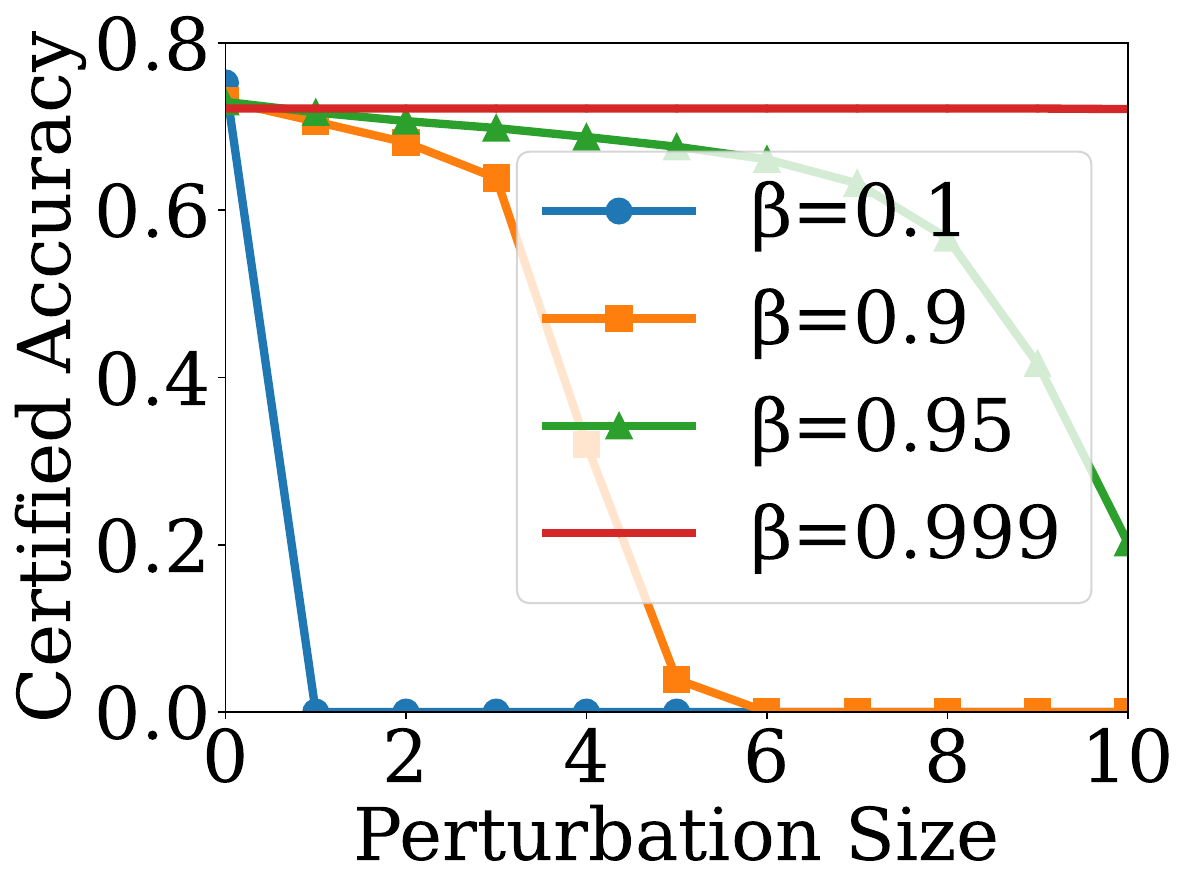}
        \vskip -0.5em
        \caption{BGRL}
    \end{subfigure}
    \begin{subfigure}{0.32\linewidth}
        \includegraphics[width=0.99\linewidth]{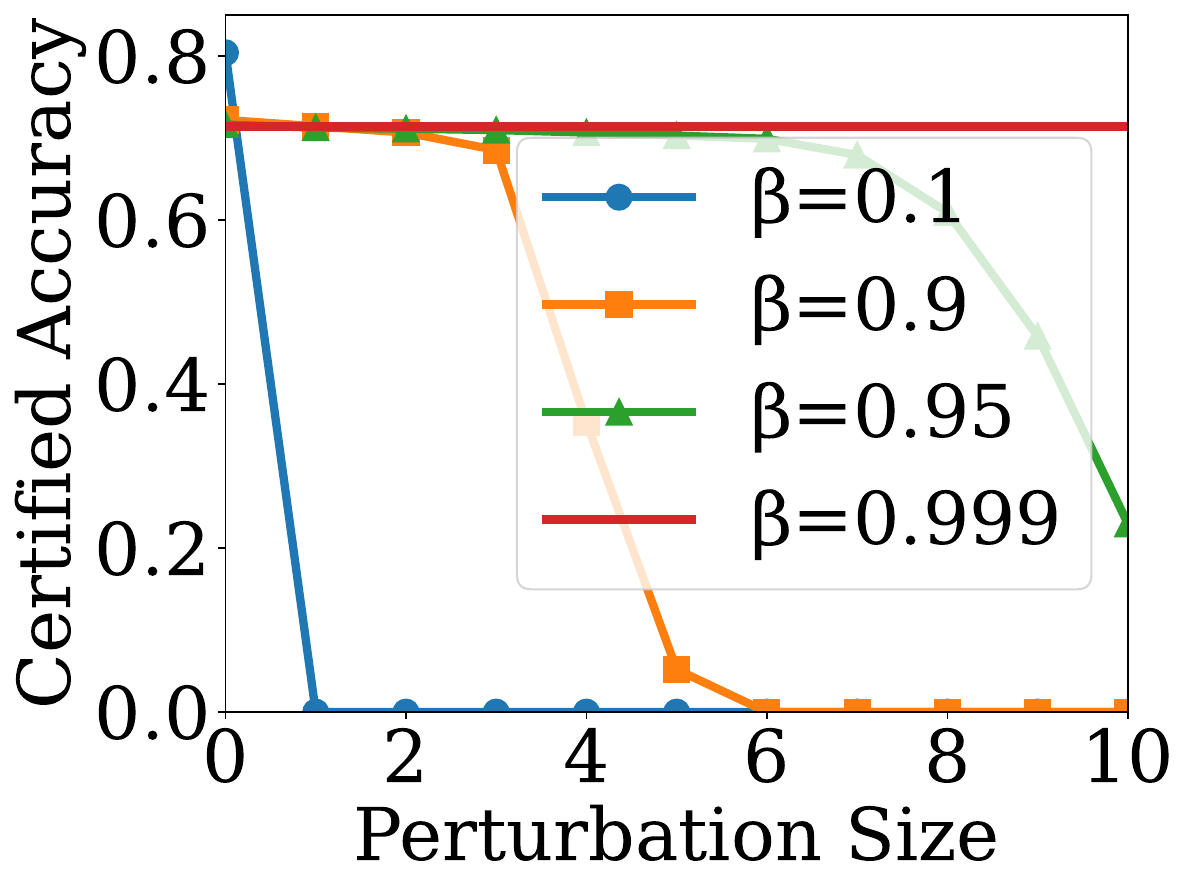}
        \vskip -0.5em
        \caption{DGI}
    \end{subfigure}
    \vskip -0.9em
    \caption{Certified accuracy of RES-GCL on Cora}
    \label{fig:certified_acc_Cora_appendix}
    \vspace{-2.0em}
\end{figure}
\begin{figure}[h]
    \begin{subfigure}{0.32\linewidth}
        \includegraphics[width=0.99\linewidth]{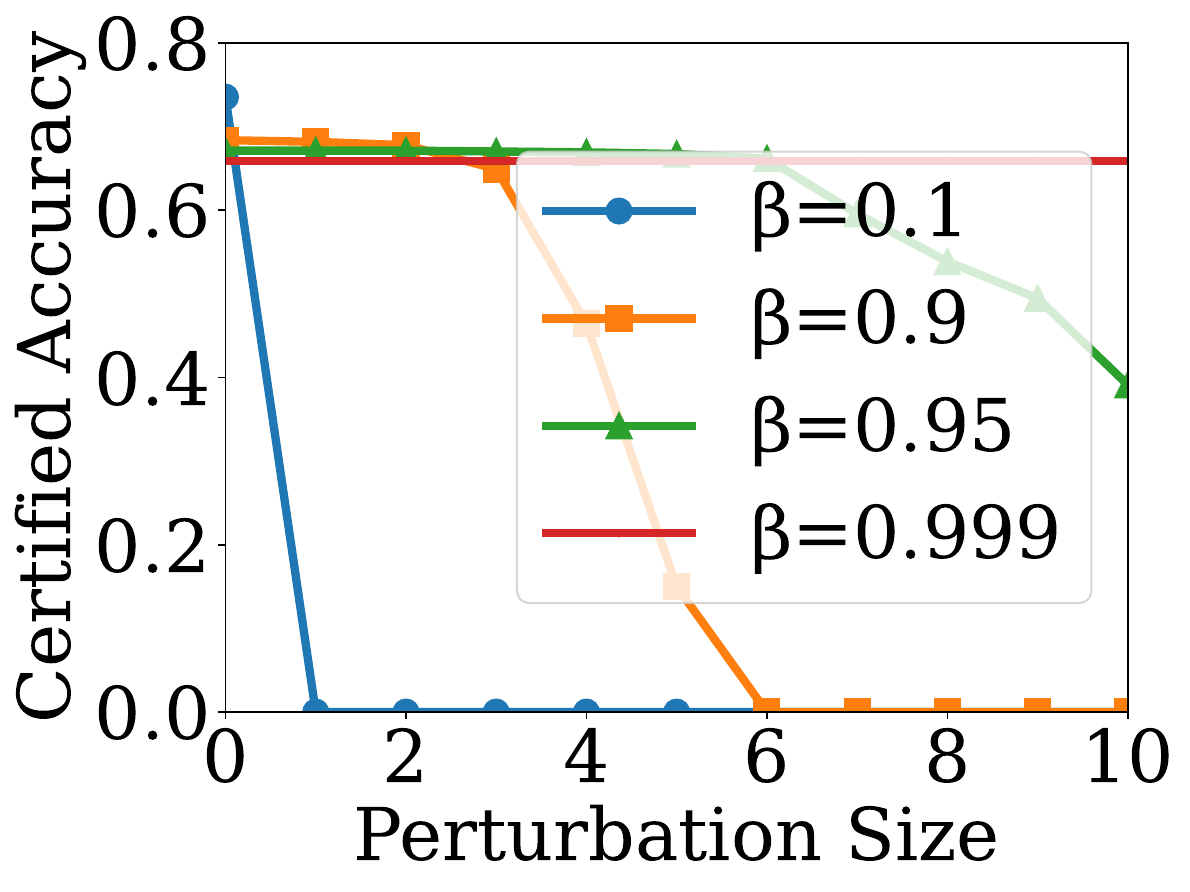}
        \vskip -0.5em
        \caption{GRACE}
    \end{subfigure}
    \begin{subfigure}{0.32\linewidth}
        \includegraphics[width=0.99\linewidth]{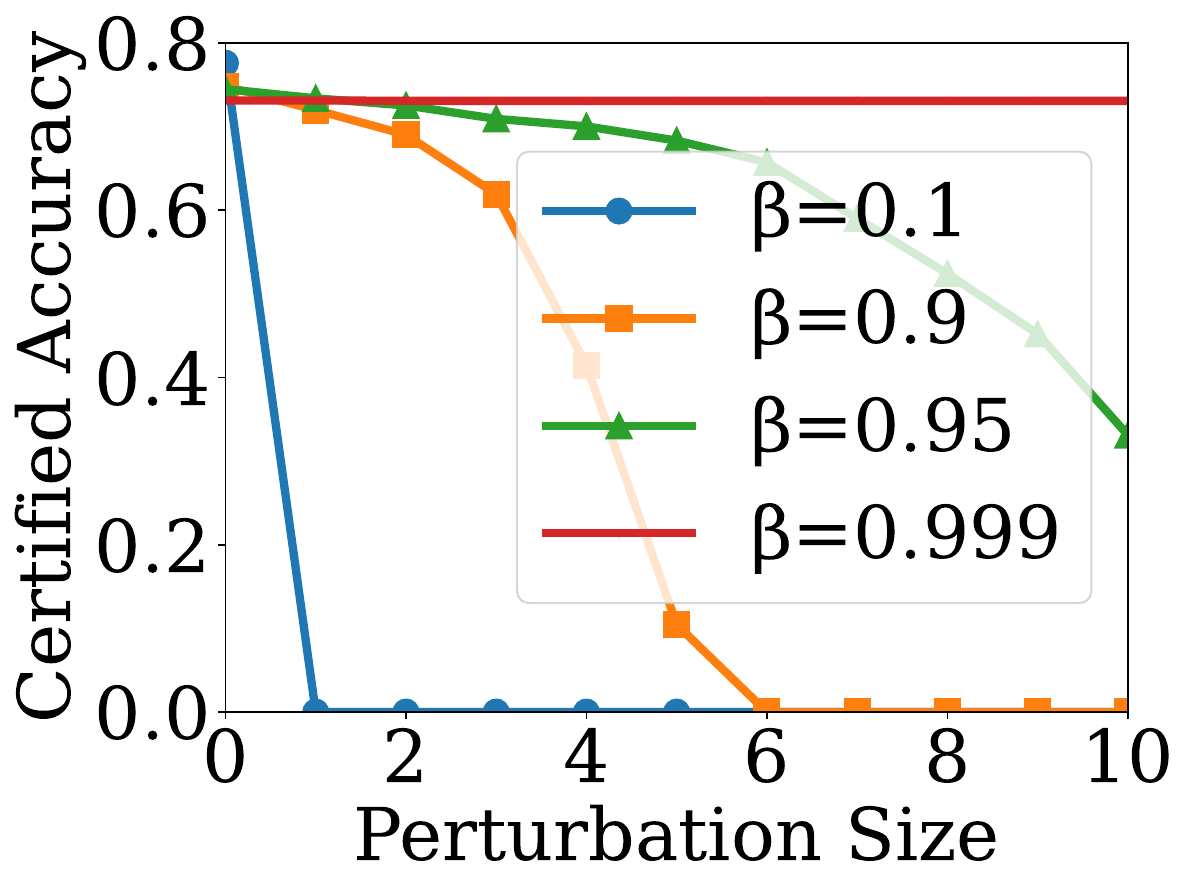}
        \vskip -0.5em
        \caption{BGRL}
    \end{subfigure}
    \begin{subfigure}{0.32\linewidth}
        \includegraphics[width=0.99\linewidth]{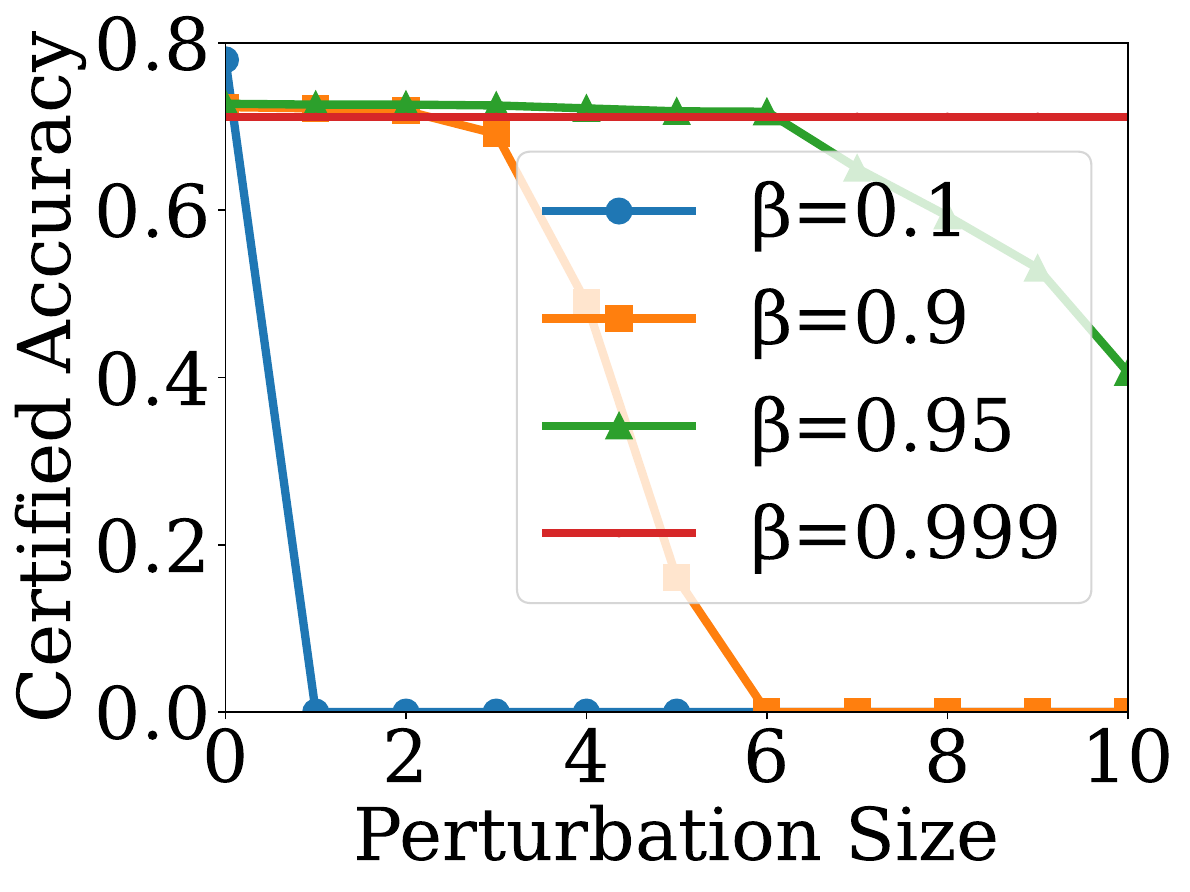}
        \vskip -0.5em
        \caption{DGI}
    \end{subfigure}
    \vskip -0.9em
    \caption{Certified accuracy of RES-GCL on Pubmed}
    \label{fig:certified_acc_Pubmed_appendix}
    \vspace{-2.0em}
\end{figure}
\begin{figure}[h]
    \centering
    \begin{subfigure}{0.32\linewidth}
        \includegraphics[width=0.99\linewidth]{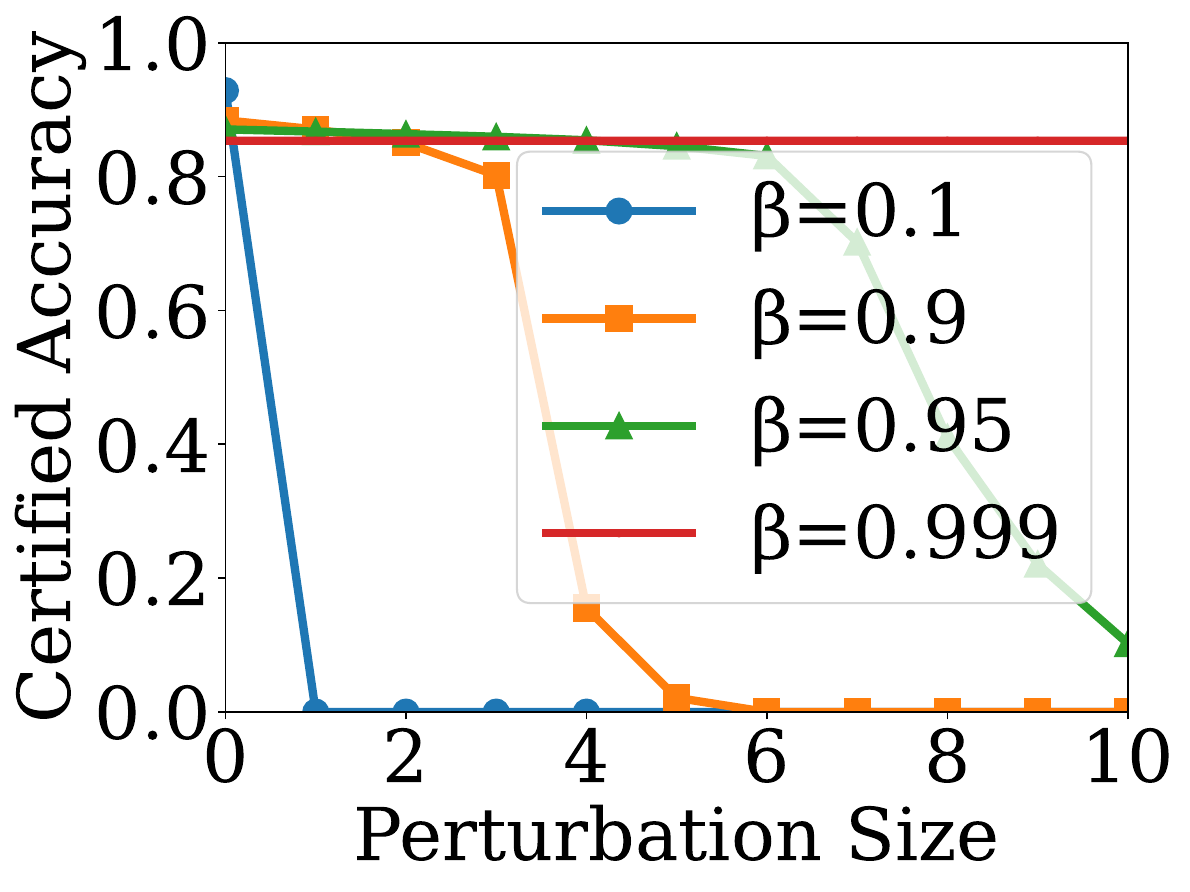}
        \vskip -0.5em
        \caption{GRACE}
    \end{subfigure}
    \begin{subfigure}{0.32\linewidth}
        \includegraphics[width=0.99\linewidth]{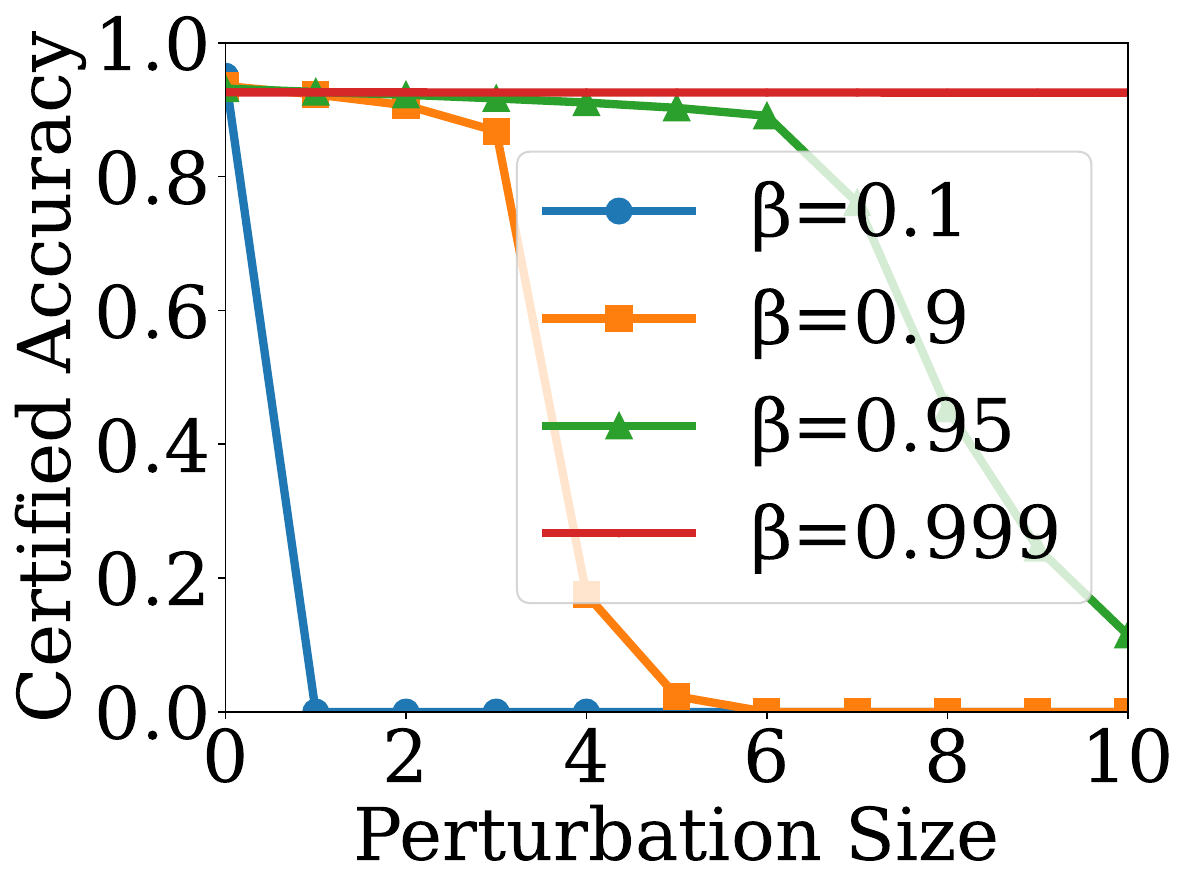}
        \vskip -0.5em
        \caption{BGRL}
    \end{subfigure}
    \begin{subfigure}{0.32\linewidth}
        \includegraphics[width=0.99\linewidth]{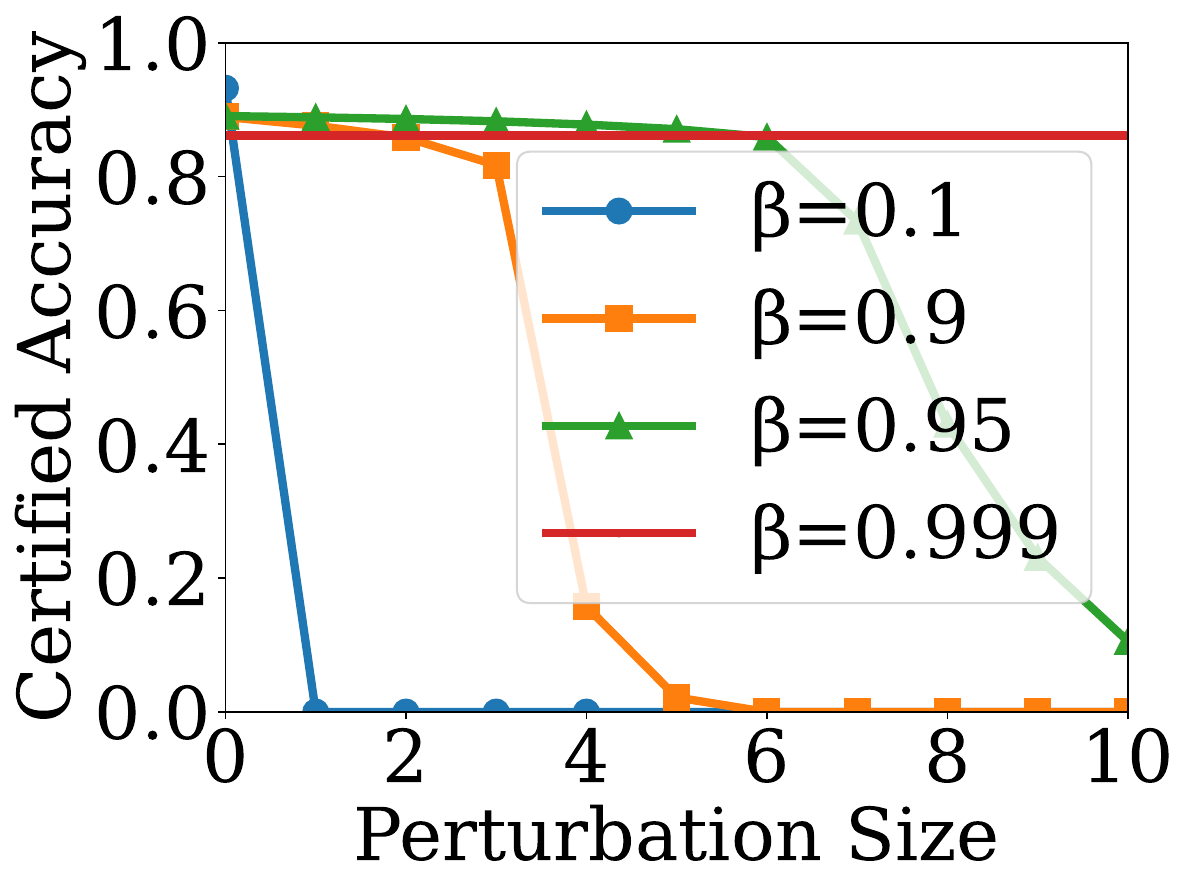}
        \vskip -0.5em
        \caption{DGI}
    \end{subfigure}
    \vskip -0.9em
    \caption{Certified accuracy of RES-GCL on Coauthor-Physics}
    \label{fig:certified_acc_Physics_appendix}
    \vspace{-2.0em}
\end{figure}
\begin{figure}[h]
  \centering
    \vskip -1.2em
    \begin{subfigure}{0.245\linewidth}
        \includegraphics[width=0.99\linewidth]{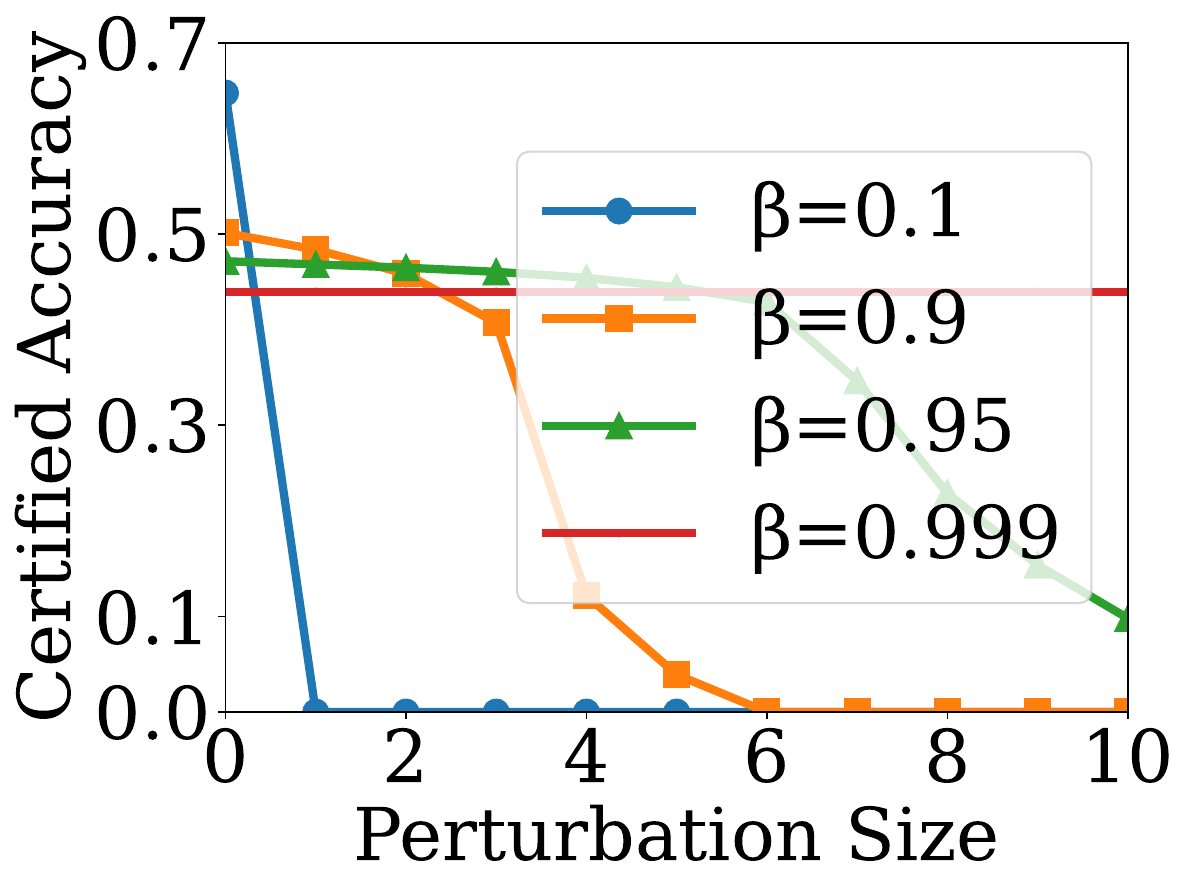}
        \vskip -0.5em
        \caption{GRACE, OGB-arxiv}
    \end{subfigure}
    \begin{subfigure}{0.245\linewidth}
        \includegraphics[width=0.99\linewidth]{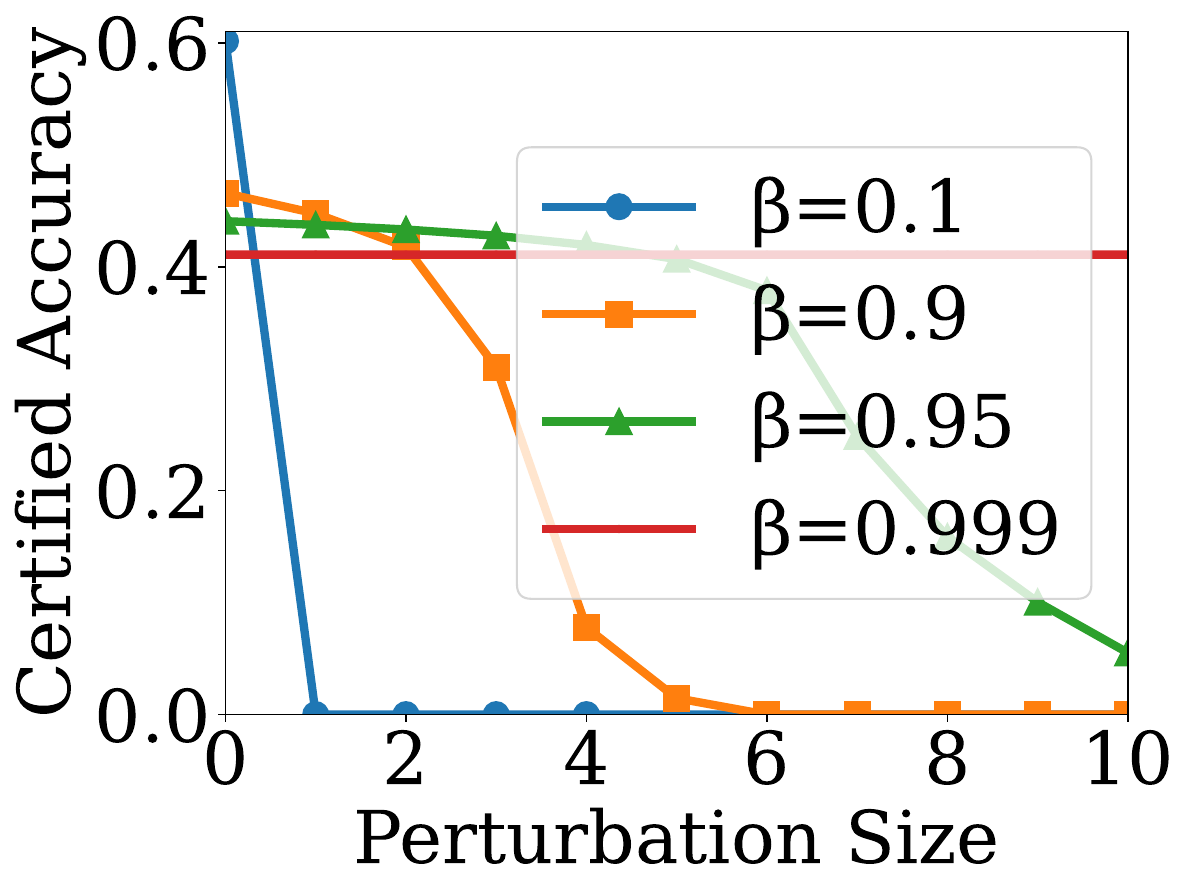}
        \vskip -0.5em
        \caption{DGI, OGB-arxiv}
    \end{subfigure}
    \begin{subfigure}{0.245\linewidth}
        \includegraphics[width=0.99\linewidth]{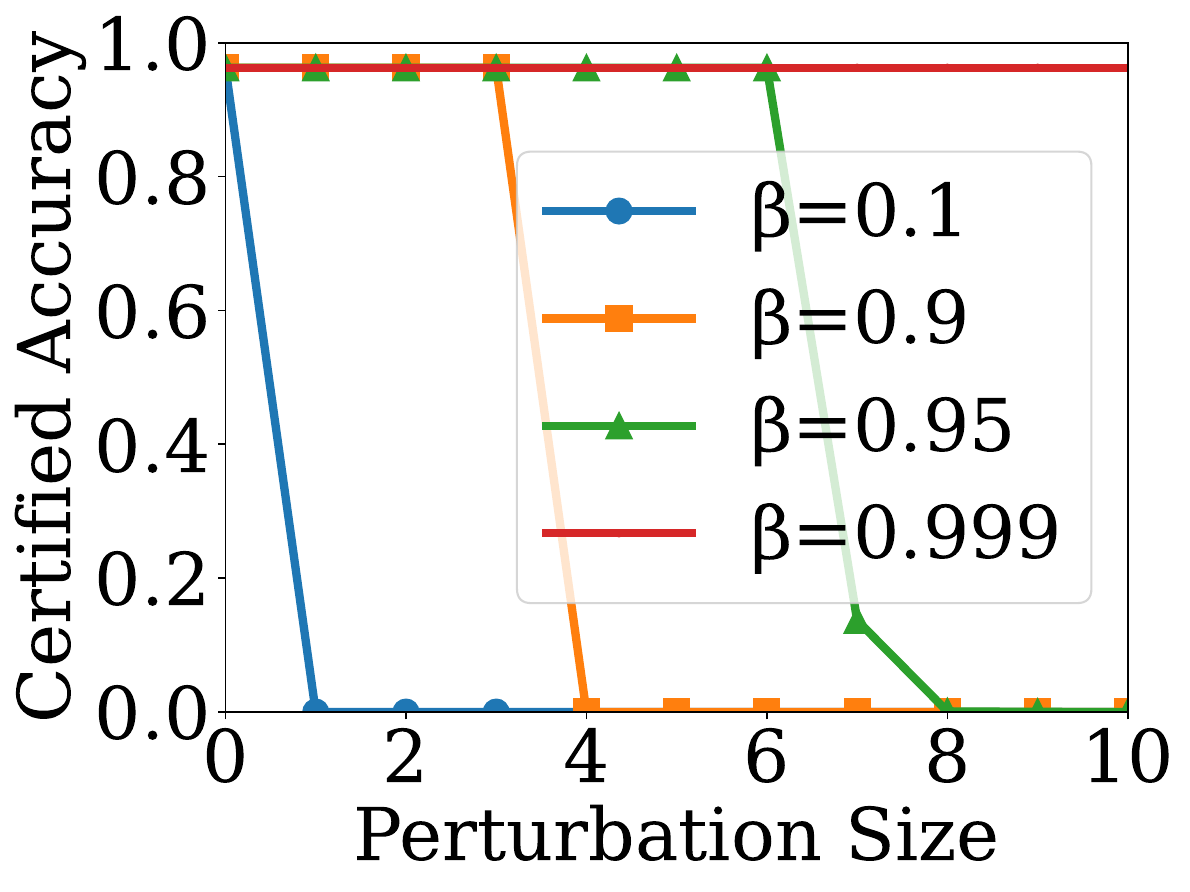}
        \vskip -0.5em
        \caption{GraphCL, OGB-molhiv}
    \end{subfigure}
    \begin{subfigure}{0.245\linewidth}
        \includegraphics[width=0.99\linewidth]{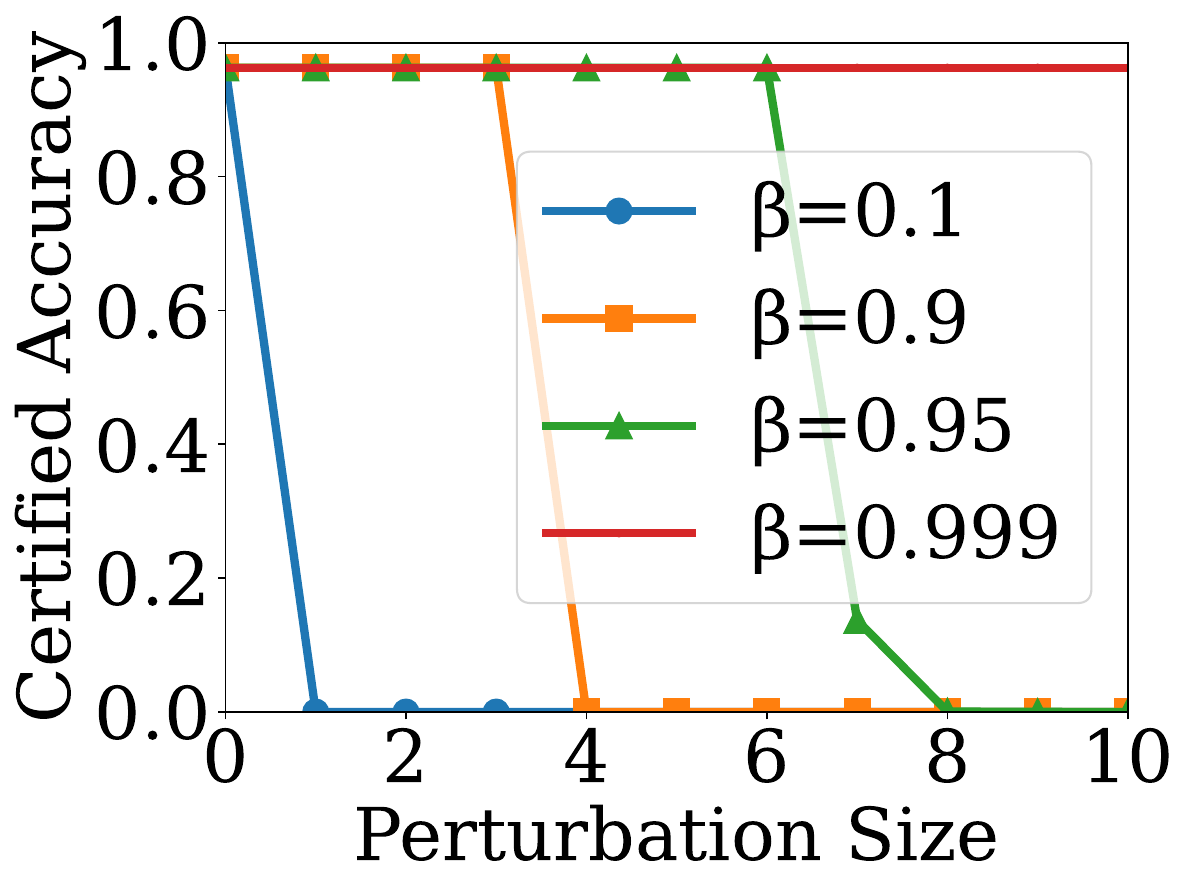}
        \vskip -0.5em
        \caption{BGRL, OGB-molhiv,}
    \end{subfigure}
    \vskip -0.9em
    \caption{Certified accuracy of smoothed GCL on OGB-arxiv and OGB-molhiv}
    \label{fig:certified_acc_arxiv_appendix}
    \vspace{-0.8em}
\end{figure}
\begin{figure}[h]
    \centering
    \begin{subfigure}{0.245\linewidth}
        \includegraphics[width=0.99\linewidth]{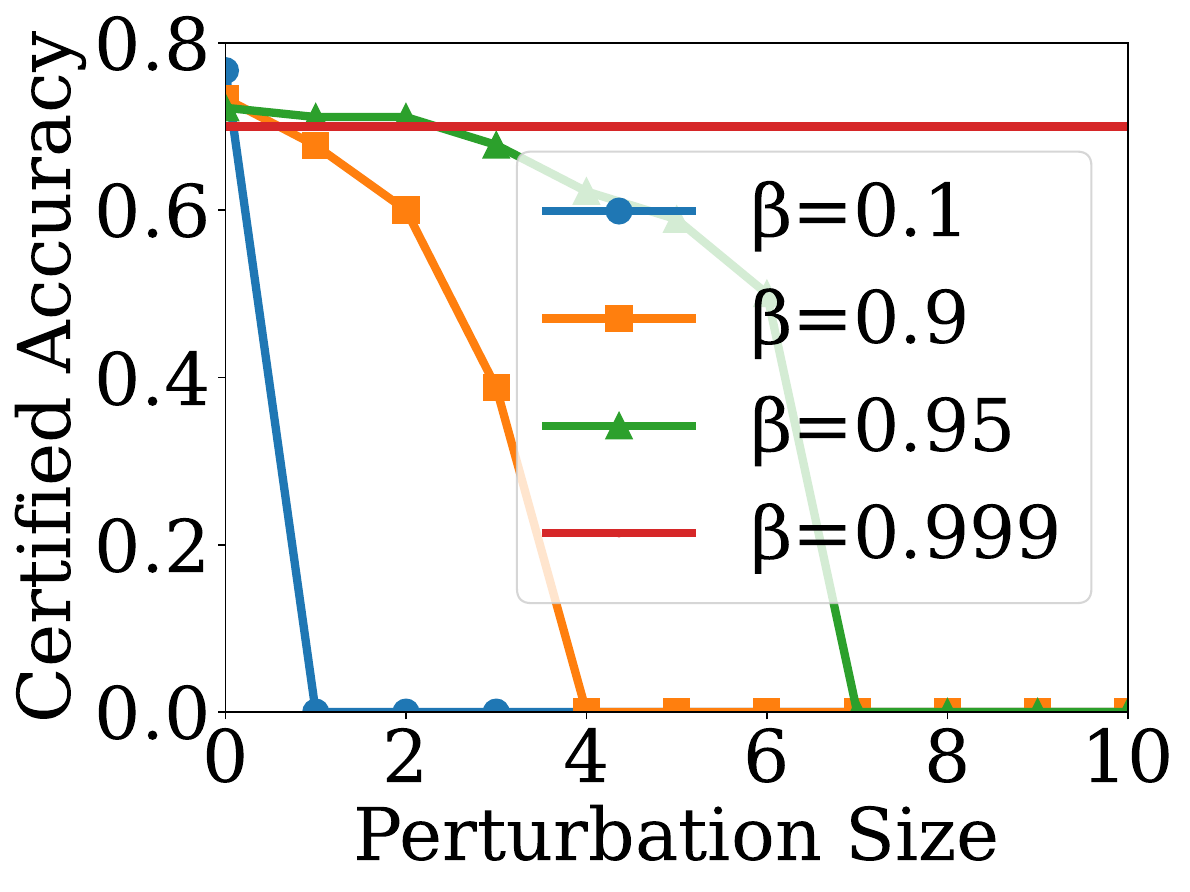}
        \vskip -0.5em
        \caption{GraphCL, MUTAG}
    \end{subfigure}
    \begin{subfigure}{0.245\linewidth}
        \includegraphics[width=0.99\linewidth]{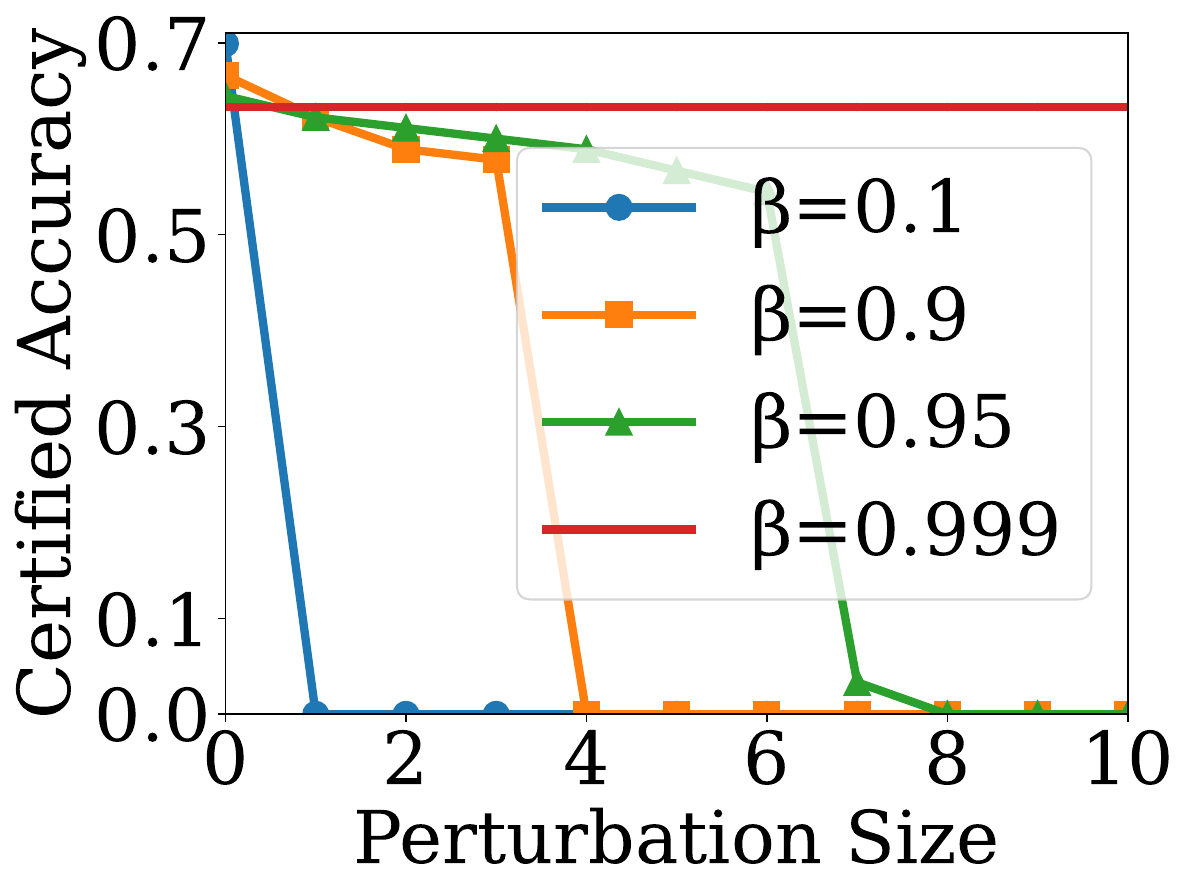}
        \vskip -0.5em
        \caption{BGRL, MUTAG}
    \end{subfigure}
    \begin{subfigure}{0.245\linewidth}
        \includegraphics[width=0.99\linewidth]{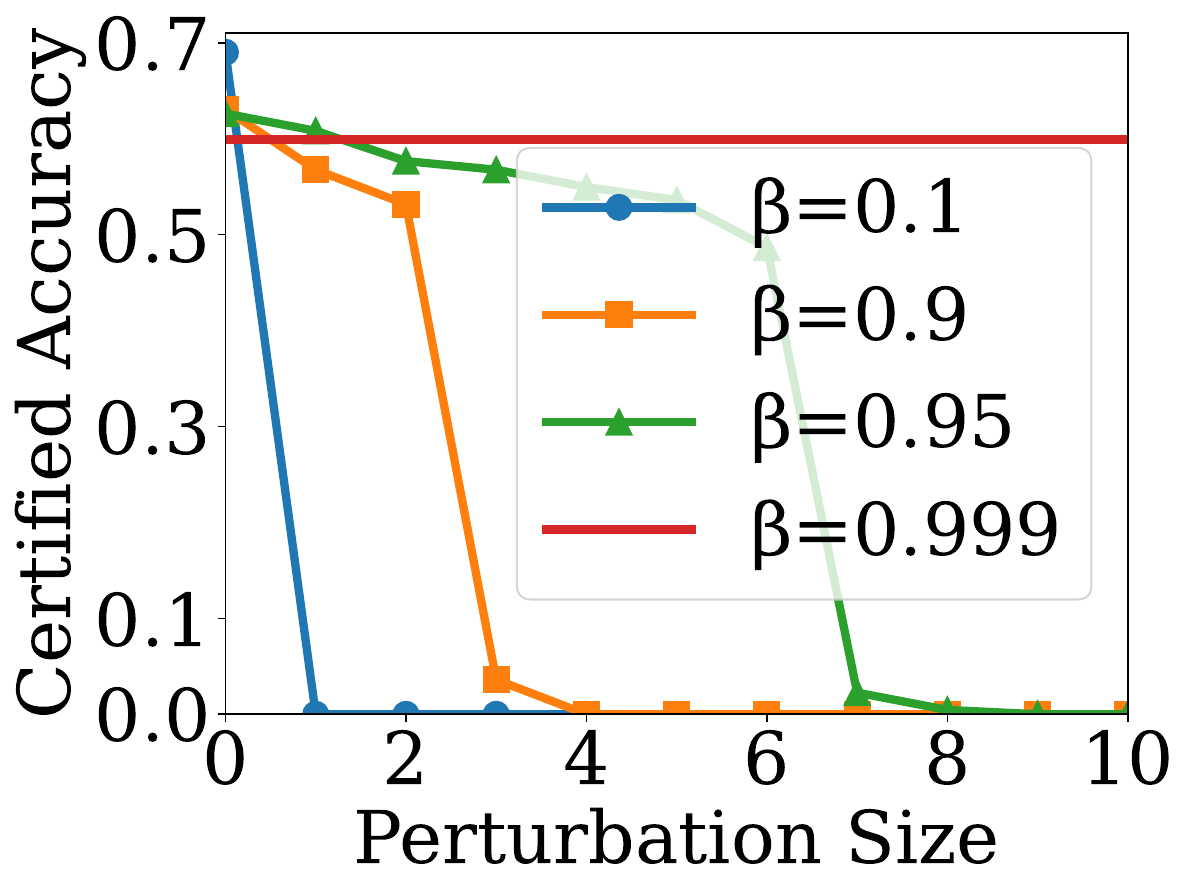}
        \vskip -0.5em
        \caption{GraphCL, PROTEINS}
    \end{subfigure}
    \begin{subfigure}{0.245\linewidth}
        \includegraphics[width=0.99\linewidth]{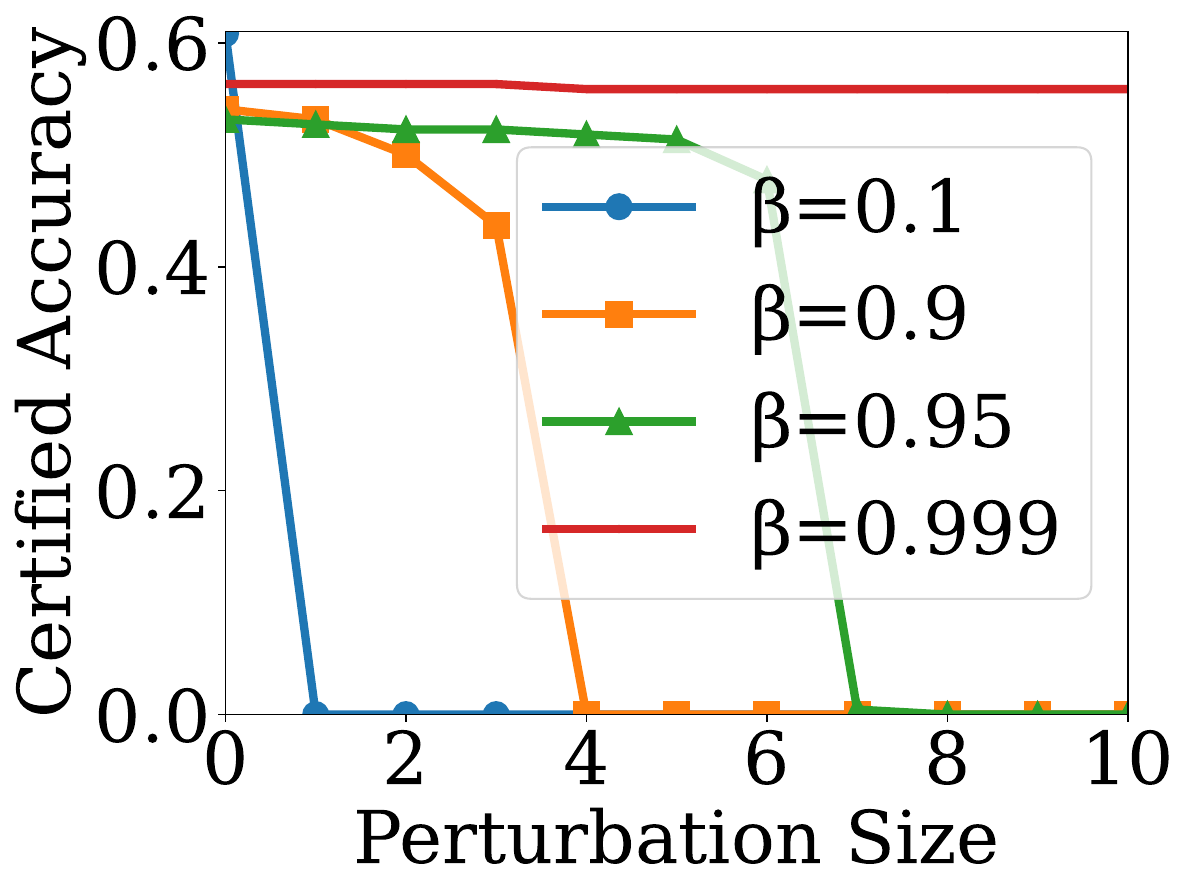}
        \vskip -0.5em
        \caption{BGRL, PROTEINS}
    \end{subfigure}
    \vskip -0.9em
    \caption{Certified accuracy of smoothed GCL on MUTAG and PROTEINS}
    \label{fig:certified_acc_mutag_proteins_appendix}
    \vskip -1.em
\end{figure}
\section{Additional Results of the Performance of Robustness}
\label{appendix:results_performance}
In this section, we provide additional experimental results to further showcase the effectiveness of the RES in enhancing the robustness of GCL against various adversarial attacks. More specifically, in Sec.~\ref{appendix:results_graph_classification}, we show the comparison results of RES with the baselines on graph classification. In Sec.~\ref{appendix:results_under_different_noisy_levels}, we conduct experiments to demonstrate GCL with RES is resistant to different levels of structural noises. In Sec.~\ref{appendix:addition_results_for_node_clf}, we present the comparison results of three target GCL methods (i.e., GRACE, BGRL, DGI) against three types of structural attacks (i.e., Random, CLGA and PRBCD) on node classification. In Sec.~\ref{appendix:RES_ADGCL_RGCL}, we present additional experimental results of two advanced GCL methods (i.e., ADGCL~\cite{suresh2021adversarial} and RGCL~\cite{li2022rgcl}) on graph classification.

\subsection{Robust Performance on Graph Classification}
\label{appendix:results_graph_classification}
In this subsection, we conduct experiments on graph classification to demonstrate the effectiveness of our method in this downstream task. Due to the limited availability of open-source attack methods specifically designed for graph classification, we utilize random attacks as the attack method. However, we believe that our method is robust against other attack methods as well.
Specifically, we select GraphCL as the target GCL methods. The perturbation rate of random attack is 0.1. The smoothed version of GraphCL is denoted as RES-GraphCL. The results on MUTAG and PROTEINS are given in Fig.~\ref{fig:Robust_acc_Graph_Clf_appendix}. From the figure, we observe: \textbf{(i)} When no attack is applied to the raw graphs, RES-GraphCL achieves comparable performance to the baseline GraphCL. \textbf{(ii)} When attacks are conducted on the noisy graphs, RES-GraphCL consistently outperforms the baseline on both the MUTAG and PROTEINS datasets. This result demonstrates the effectiveness of our method in enhancing the robustness of GraphCL against adversarial attacks.

\begin{figure}[h]
\centering  
\begin{subfigure}{0.32\linewidth}
    \includegraphics[width=0.99\linewidth]{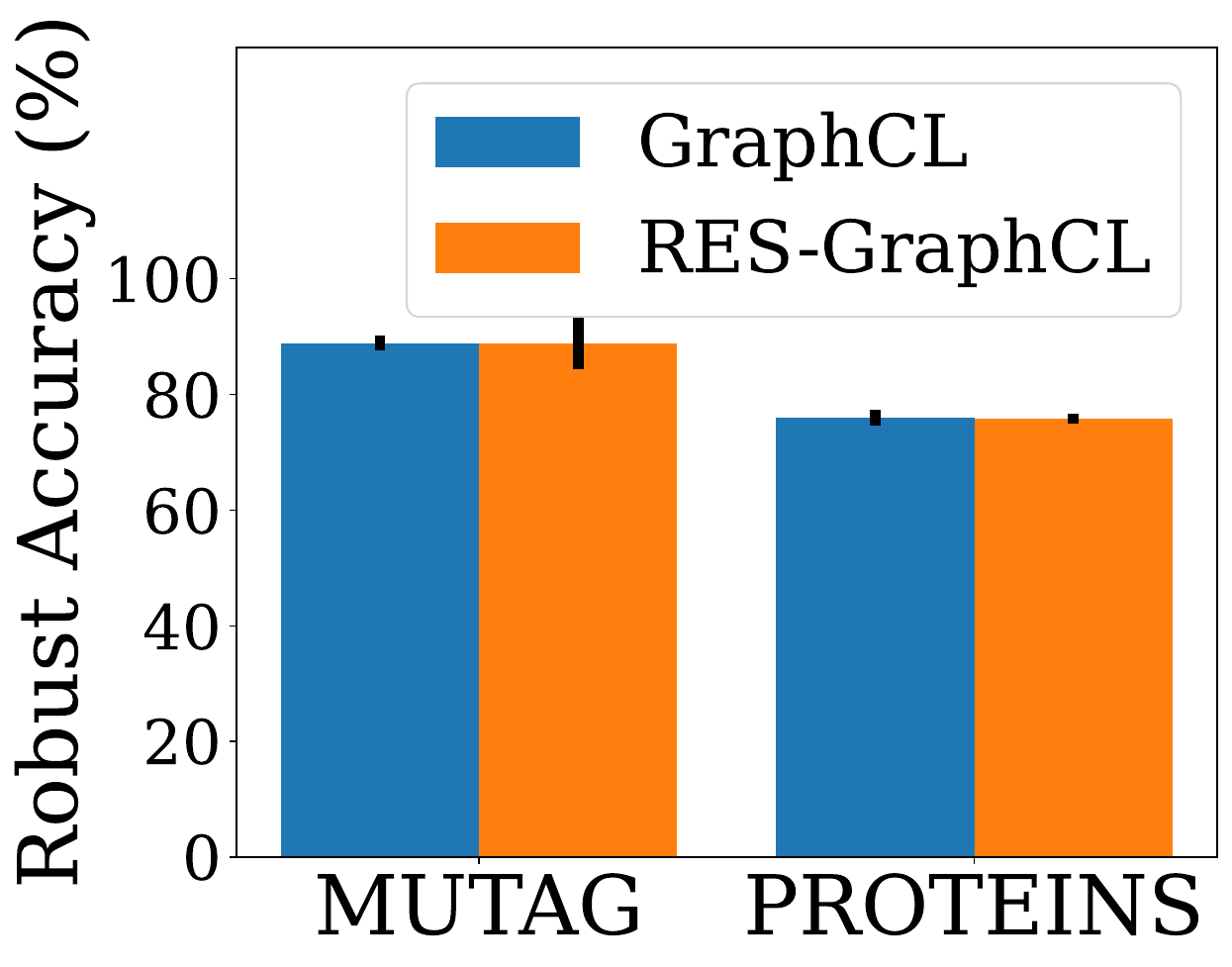}
        \vskip -0.5em
        \caption{Clean}
    \end{subfigure}
    \begin{subfigure}{0.32\linewidth}
        \includegraphics[width=1.0\linewidth]{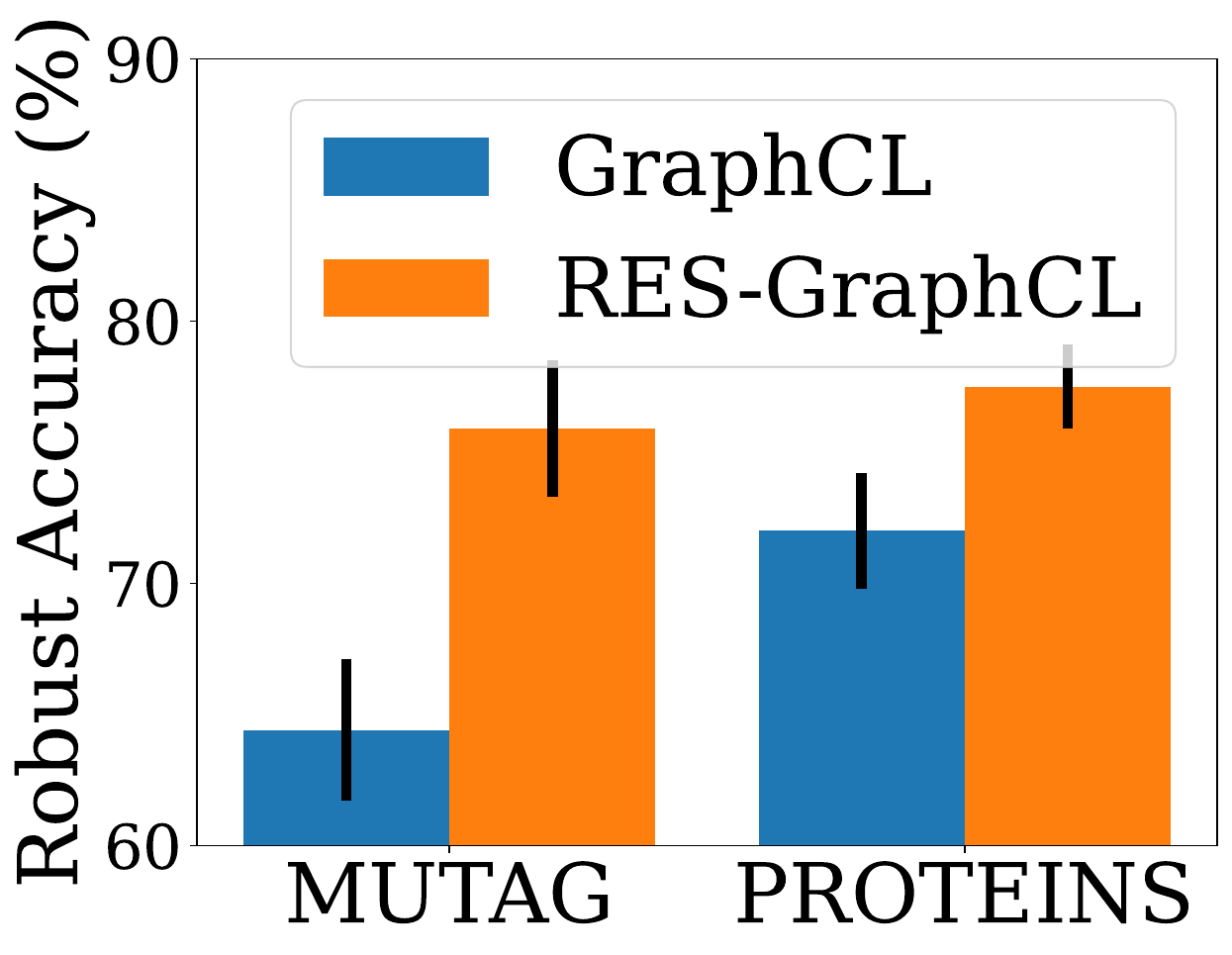}
        \vskip -0.5em
        \caption{Noisy}
    \end{subfigure}
\vspace{-0.9em}
\caption{Robust accuracy of GraphCL for graph classification against random attack} 
\vspace{-1em}
\label{fig:Robust_acc_Graph_Clf_appendix}
\end{figure}
\subsection{Robustness Under Different Noisy Levels.}
\label{appendix:results_under_different_noisy_levels}
\begin{figure}[h]
    \centering
    \begin{subfigure}{0.32\linewidth}
        \includegraphics[width=0.99\linewidth]{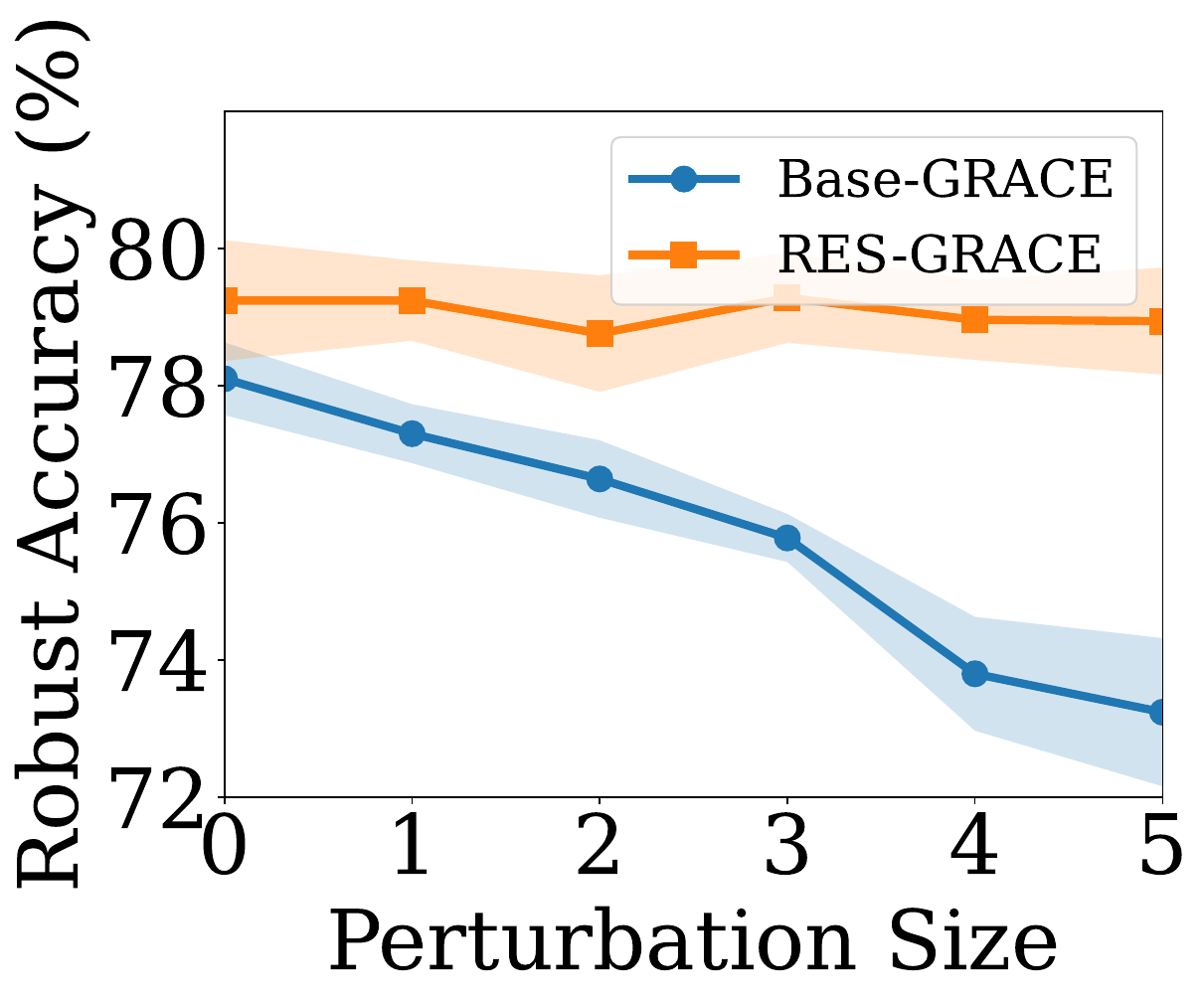}
        \vskip -0.5em
        \caption{Cora, GRACE}
    \end{subfigure}
    \begin{subfigure}{0.32\linewidth}
        \includegraphics[width=0.99\linewidth]{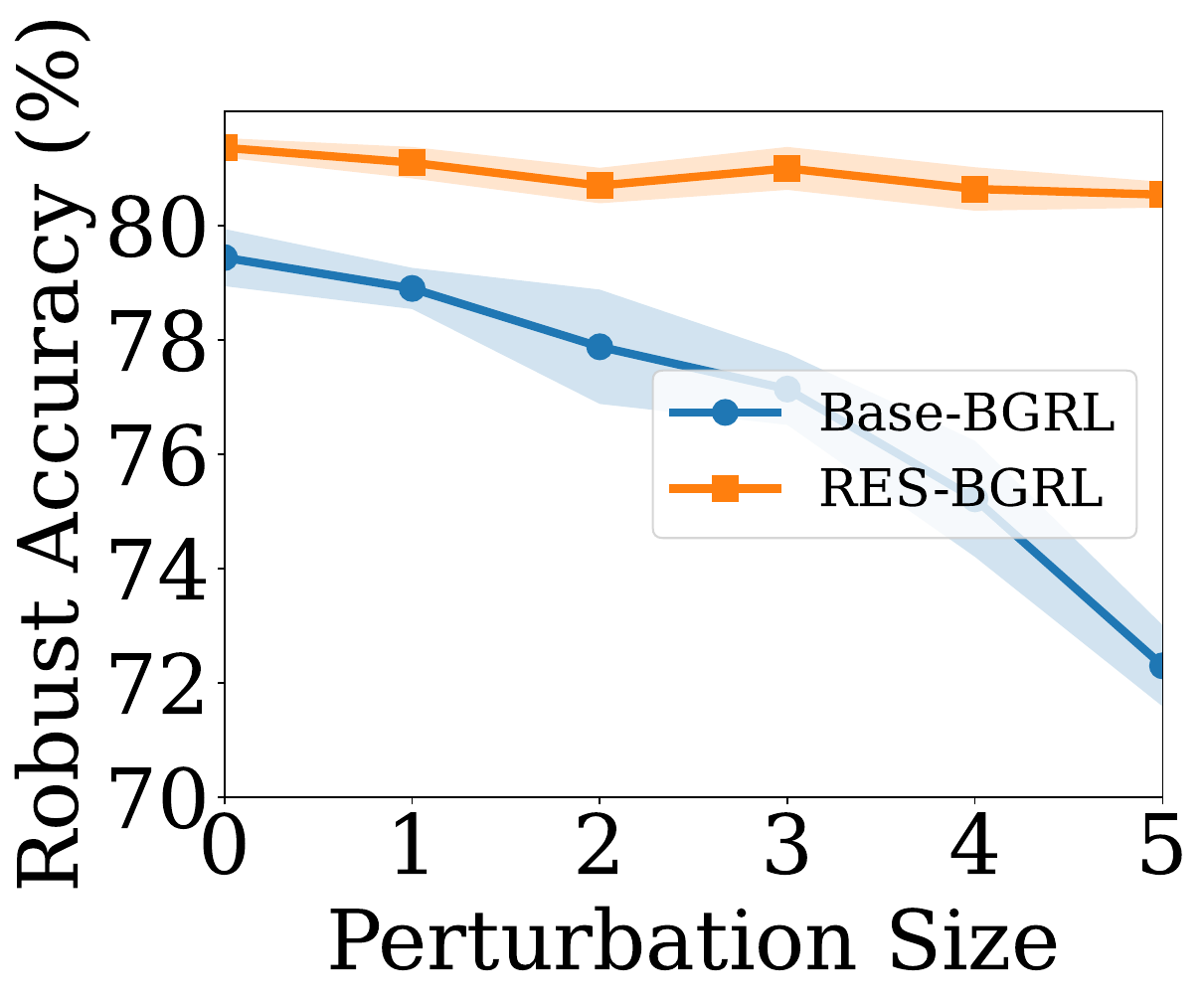}
        \vskip -0.5em
        \caption{Cora, BGRL}
    \end{subfigure}
    \begin{subfigure}{0.32\linewidth}
        \includegraphics[width=0.99\linewidth]{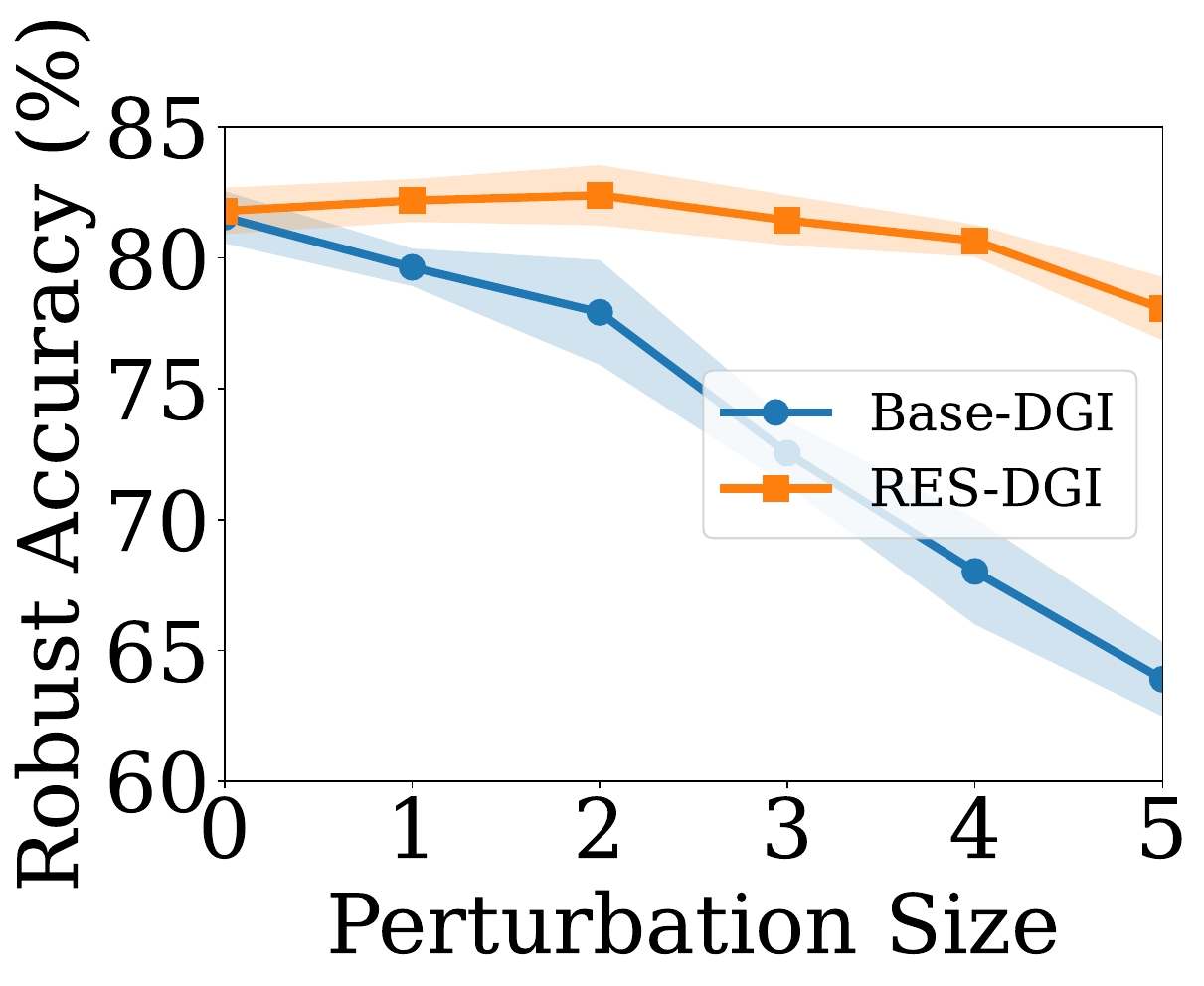}
        \vskip -0.5em
        \caption{Cora, DGI}
    \end{subfigure}
    \begin{subfigure}{0.32\linewidth}
        \includegraphics[width=0.99\linewidth]{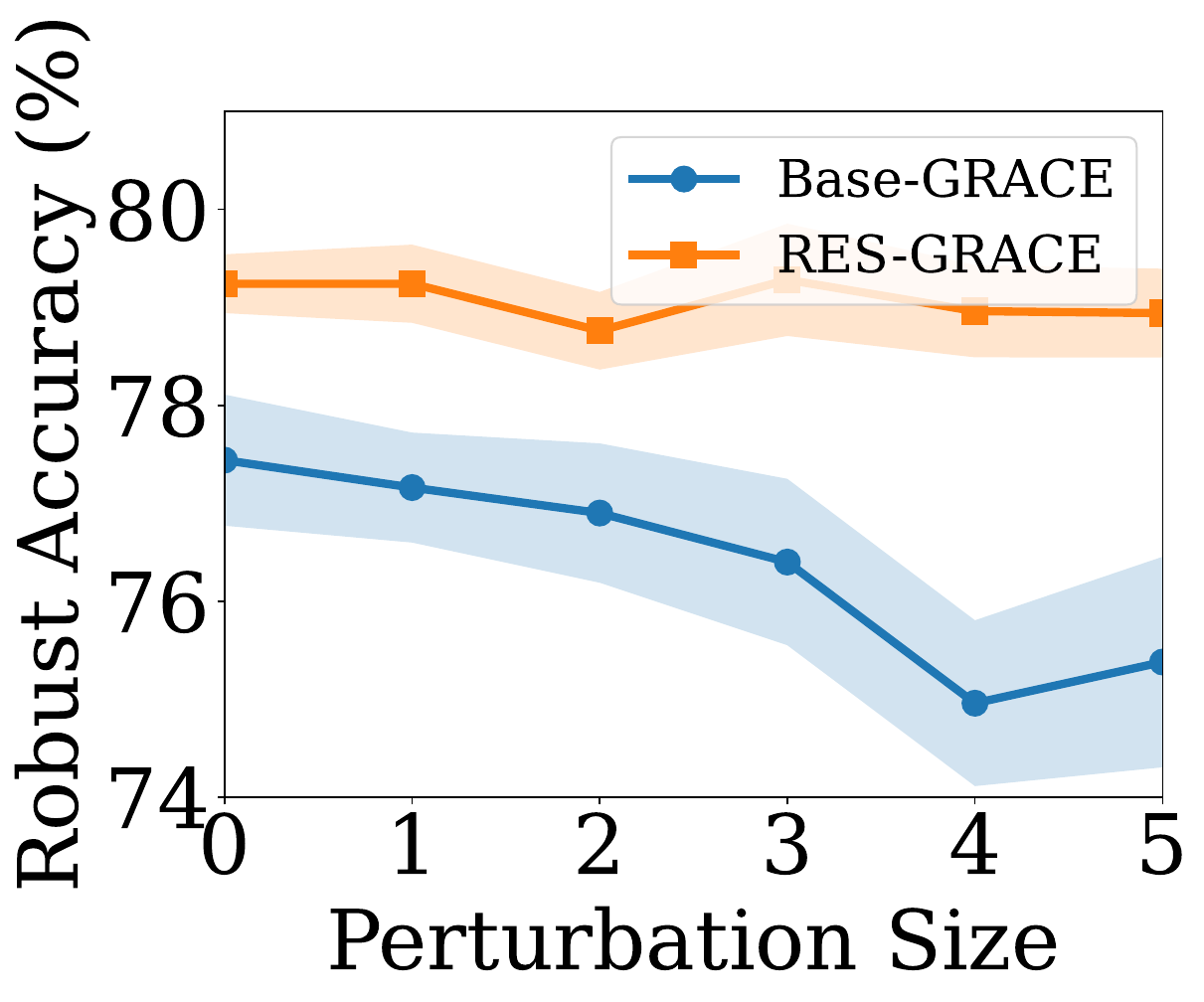}
        \vskip -0.5em
        \caption{Pubmed, GRACE}
    \end{subfigure}
    \begin{subfigure}{0.32\linewidth}
        \includegraphics[width=0.99\linewidth]{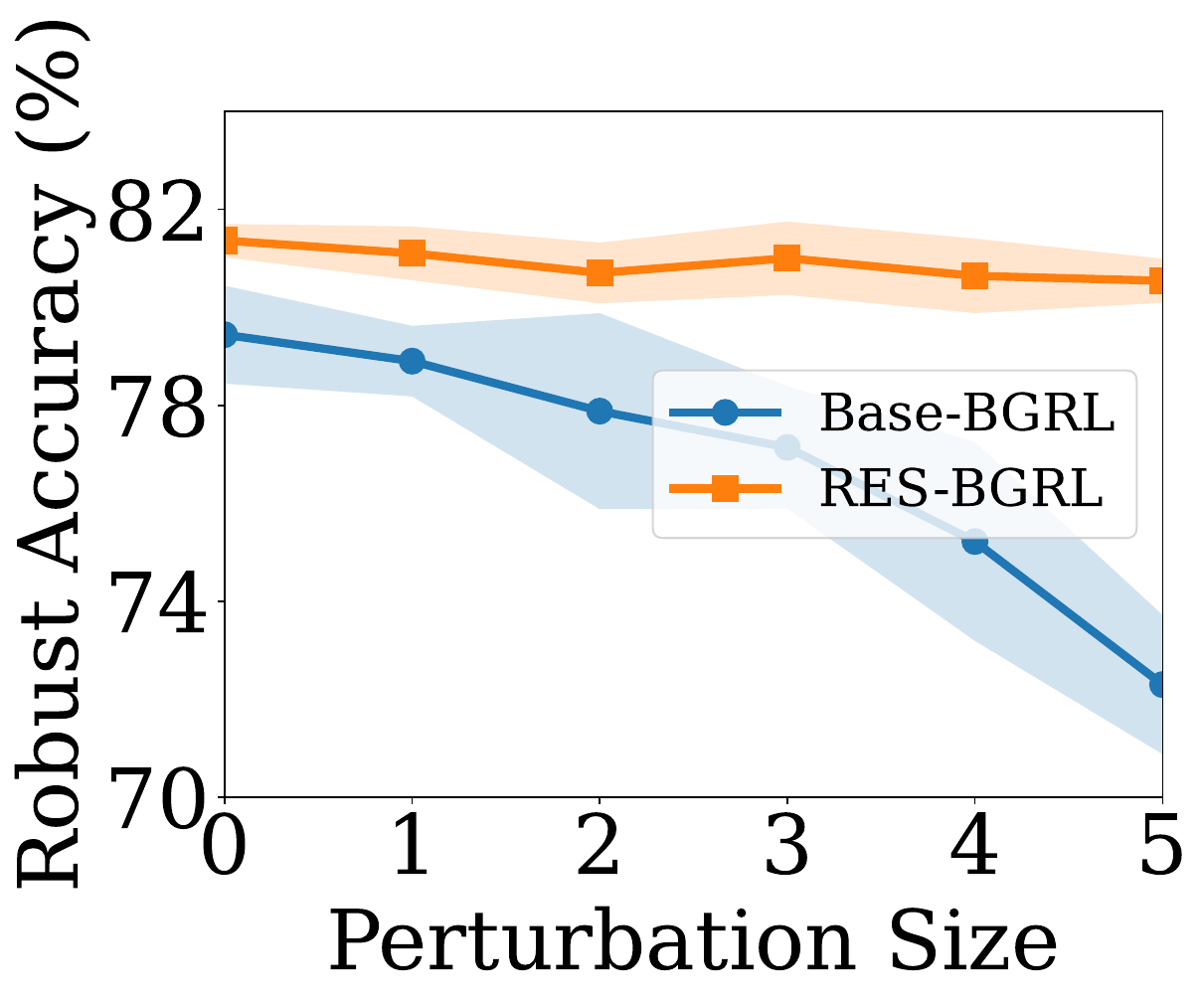}
        \vskip -0.5em
        \caption{Pubmed, BGRL}
    \end{subfigure}
    \begin{subfigure}{0.32\linewidth}
        \includegraphics[width=0.99\linewidth]{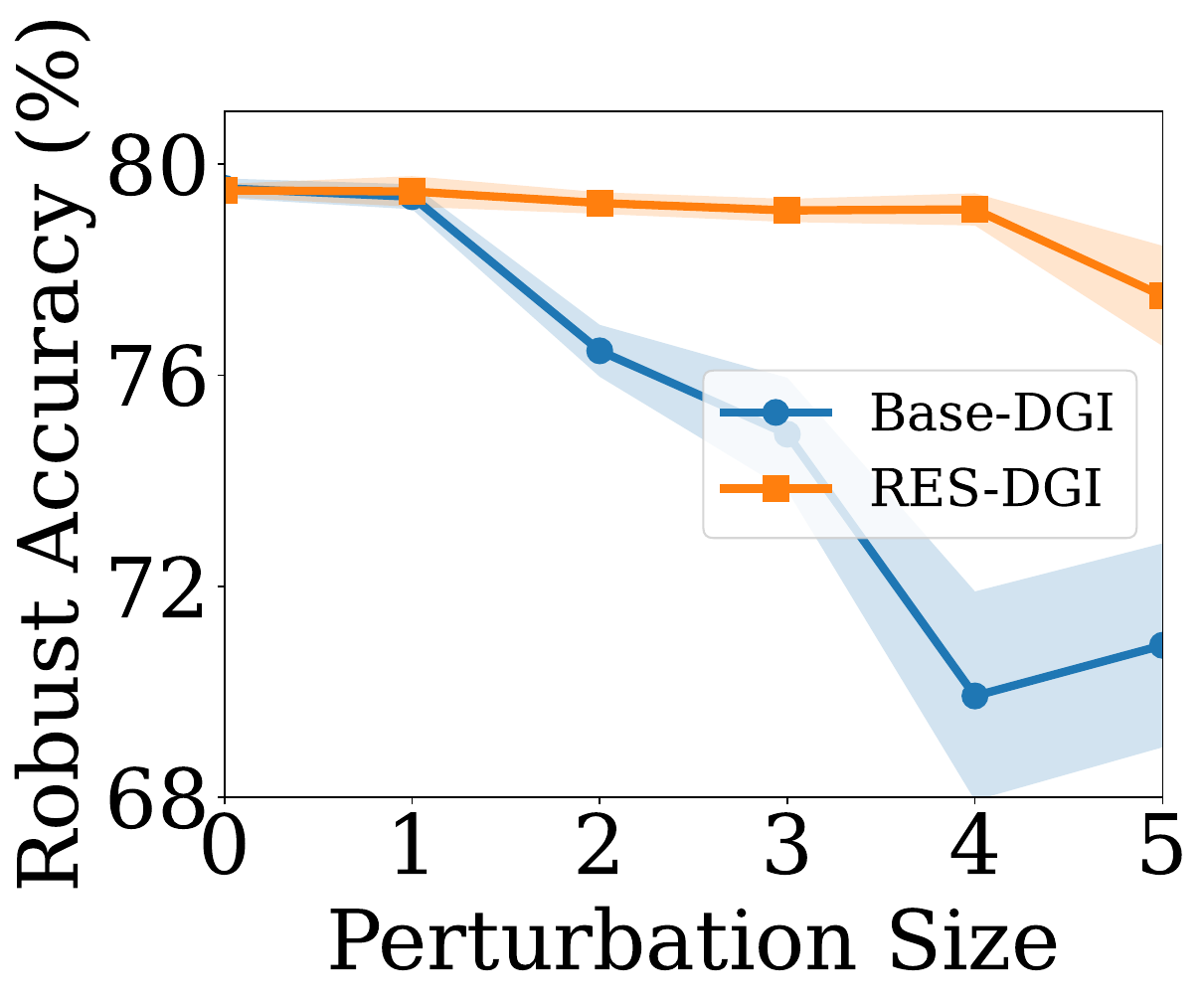}
        \vskip -0.5em
        \caption{Pubmed, DGI}
    \end{subfigure}
    \vskip -0.9em
    \caption{Robust accuracy under different perturbation sizes of random attack}
\label{fig:robust_accuracy_under_diff_size_of_random_attack}
    \vskip -1.em
\end{figure}
\begin{figure}[h]
    \centering
    \begin{subfigure}{0.32\linewidth}
        \includegraphics[width=0.99\linewidth]{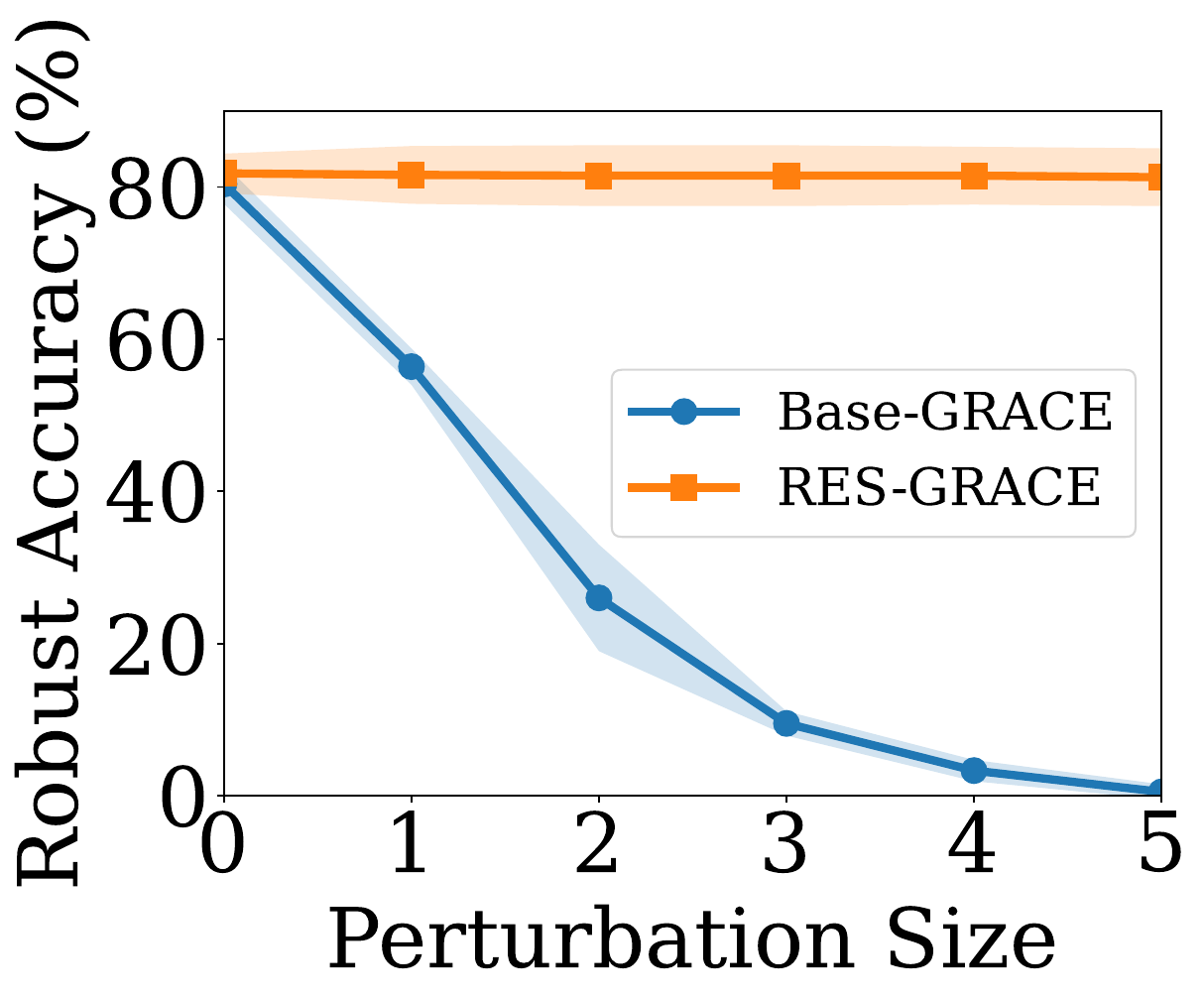}
        \vskip -0.5em
        \caption{Cora, GRACE}
    \end{subfigure}
    \begin{subfigure}{0.32\linewidth}
        \includegraphics[width=0.99\linewidth]{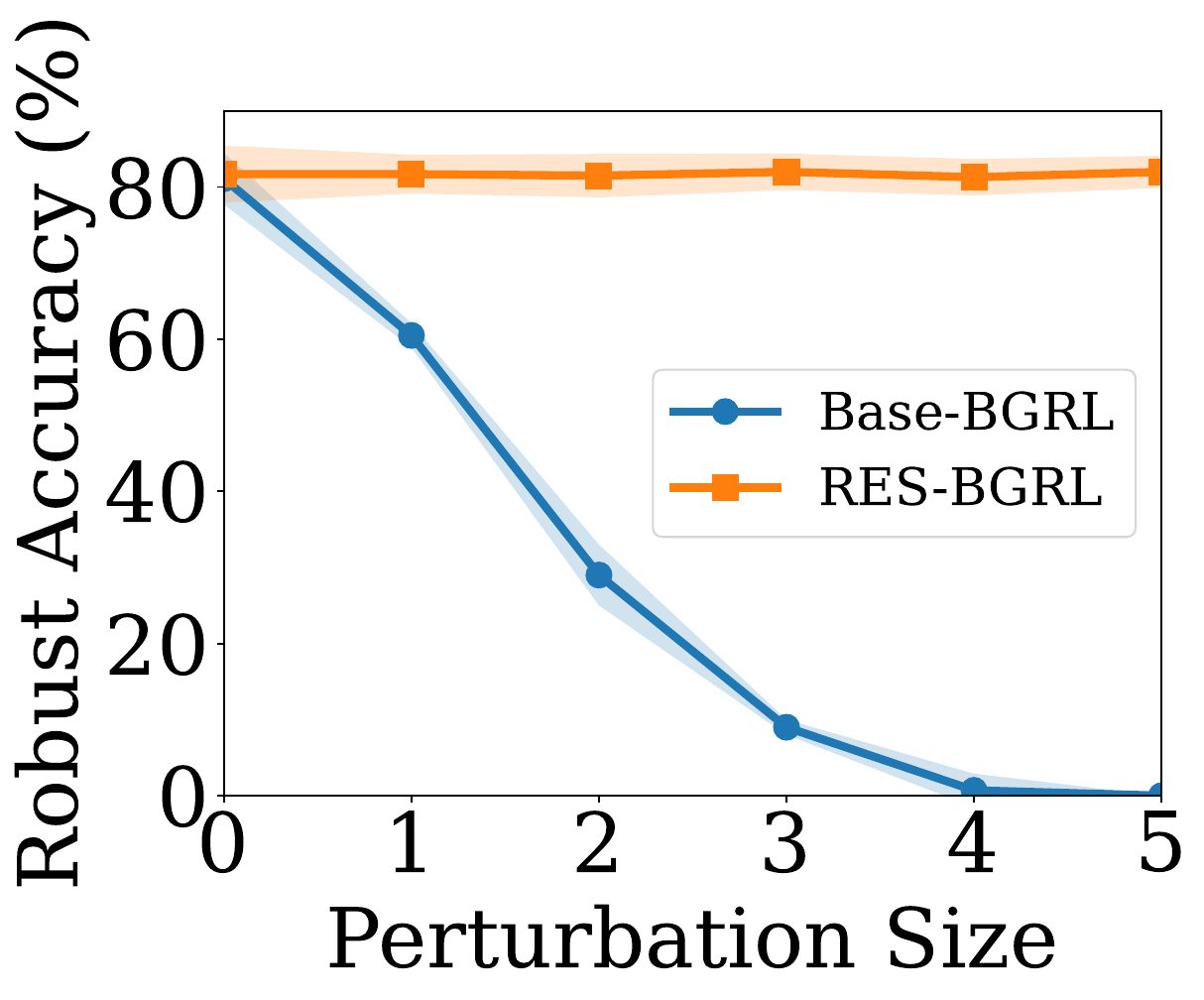}
        \vskip -0.5em
        \caption{Cora, BGRL}
    \end{subfigure}
    \begin{subfigure}{0.32\linewidth}
        \includegraphics[width=0.99\linewidth]{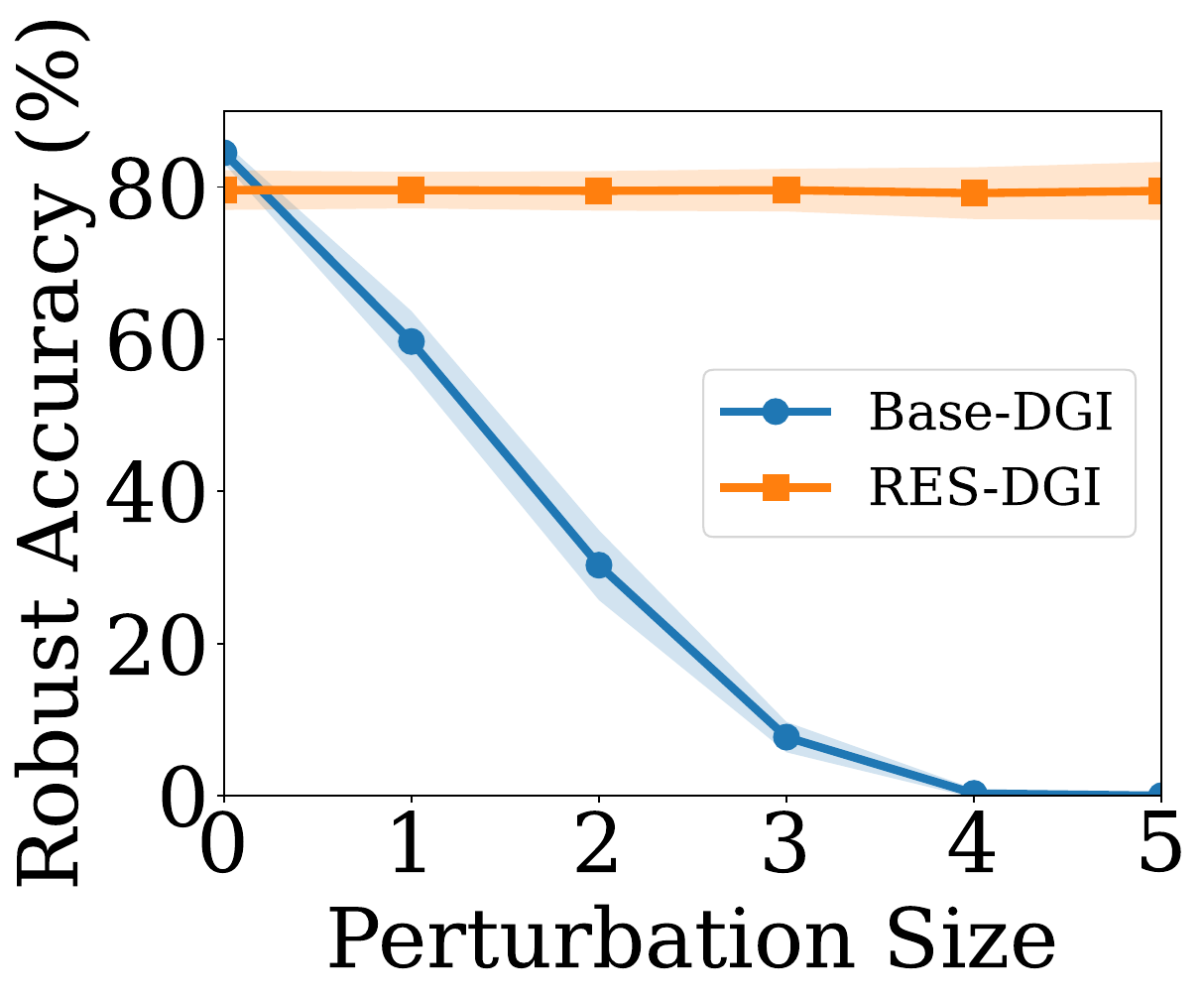}
        \vskip -0.5em
        \caption{Cora, DGI}
    \end{subfigure}
    \begin{subfigure}{0.32\linewidth}
        \includegraphics[width=0.99\linewidth]{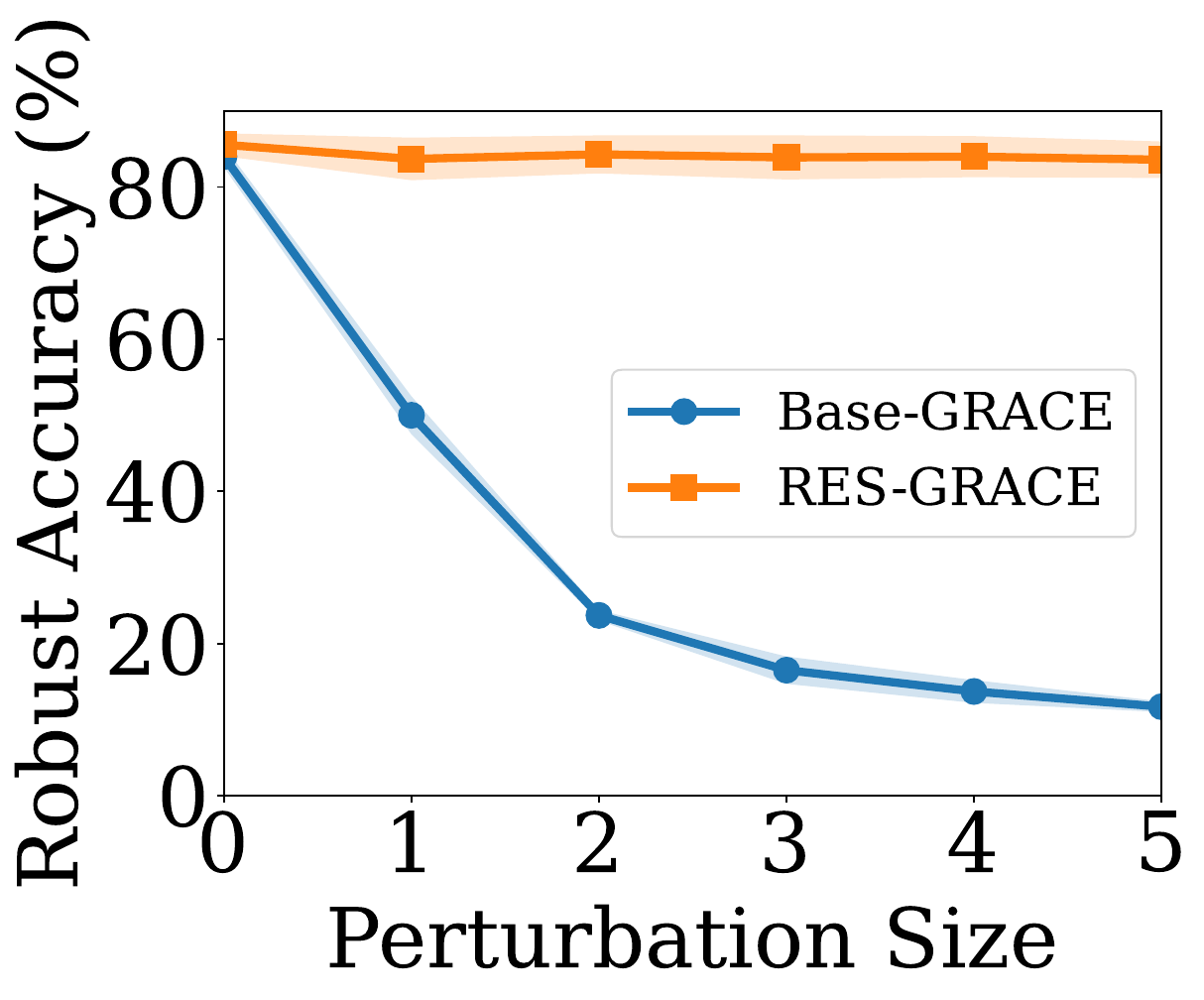}
        \vskip -0.5em
        \caption{Pubmed, GRACE}
    \end{subfigure}
    \begin{subfigure}{0.32\linewidth}
        \includegraphics[width=0.99\linewidth]{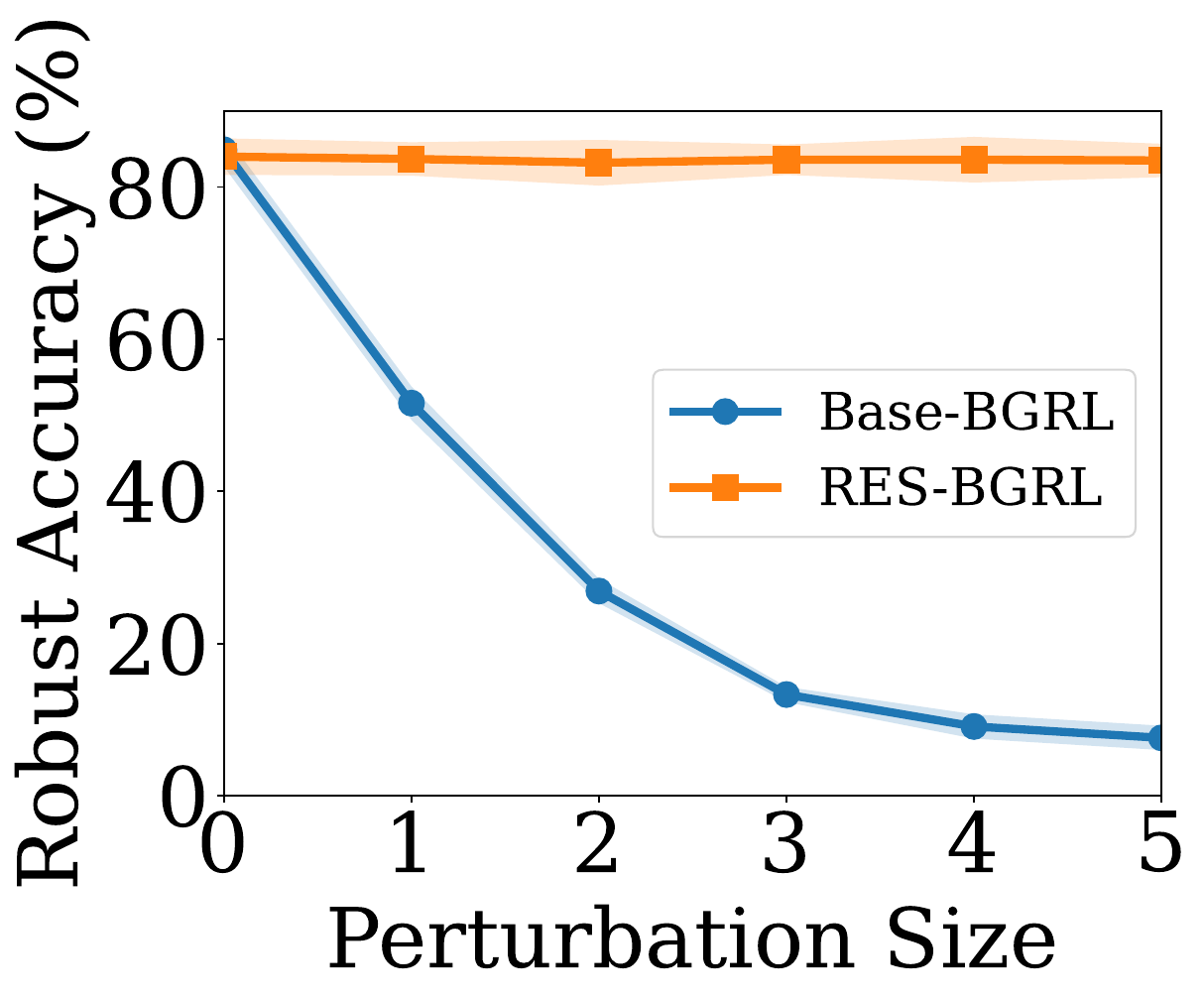}
        \vskip -0.5em
        \caption{Pubmed, BGRL}
    \end{subfigure}
    \begin{subfigure}{0.32\linewidth}
        \includegraphics[width=0.99\linewidth]{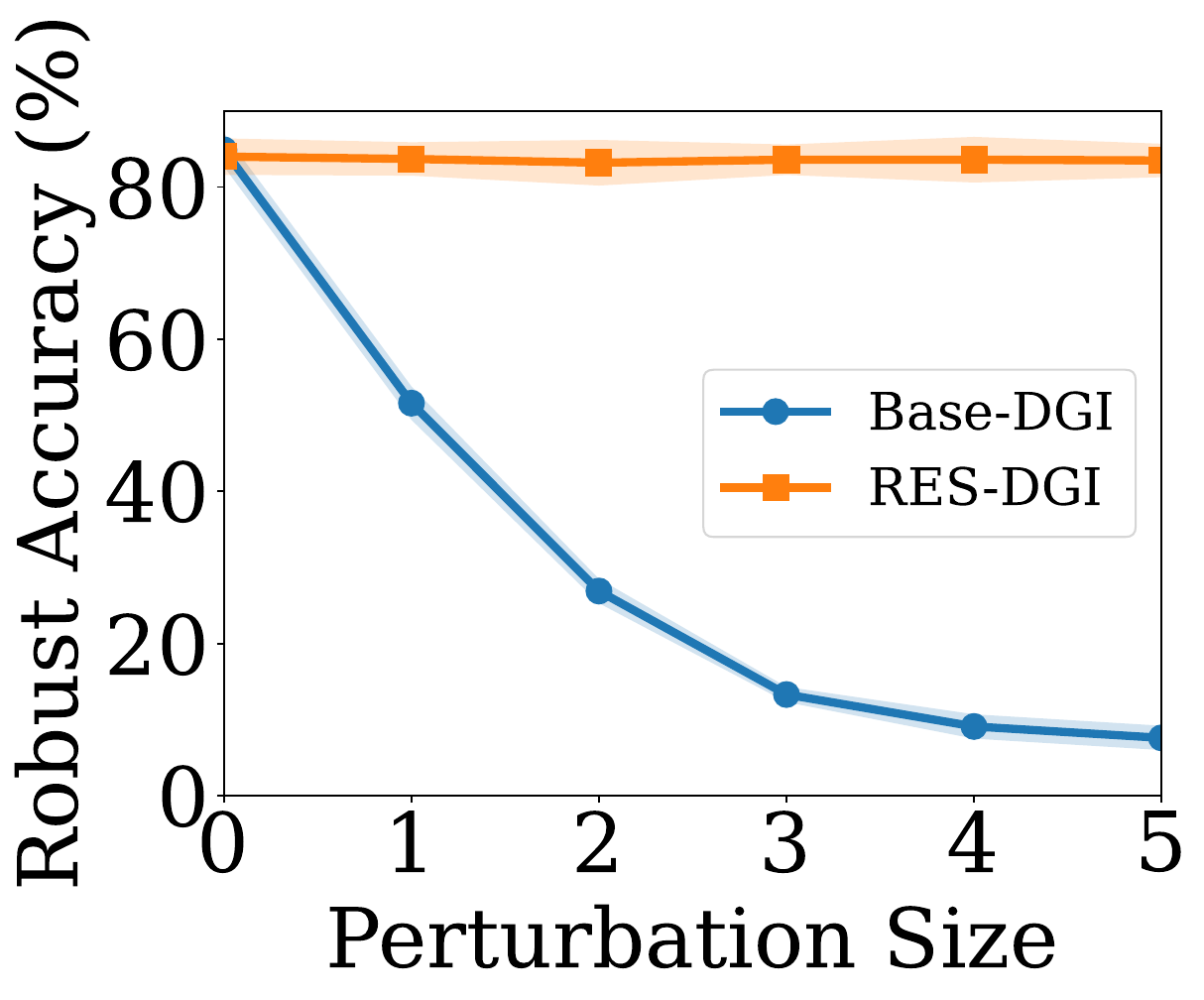}
        \vskip -0.5em
        \caption{Pubmed, DGI}
    \end{subfigure}
    \vskip -0.9em
    \caption{Robust accuracy under different perturbation sizes of Nettack}
    \label{fig:robust_accuracy_under_diff_size_of_nettack}
    \vskip -1.em
\end{figure}
{To demonstrate the ability of our method to improve the robustness of Graph Contrastive Learning (GCL) against different levels of structural noise, we compare the robust accuracy of the GCL methods w/o applying our RES under evasion attacks for node classification. 
Specifically, we select GRACE, BGRL and DGI as target GCL methods. We set $(1-\beta) = 0.05$ and $\mu=50$. We consider two targeted attack method, random attack and Nettack~\cite{zugner2018adversarial} to conduct targeted attacks. The attack budget is set from 0 to 5. We randomly select 15\% of the test nodes as the target nodes to compute the robust accuracy. The results on the Cora and Pubmed datasets are reported in Fig.~\ref{fig:robust_accuracy_under_diff_size_of_random_attack} and Fig.~\ref{fig:robust_accuracy_under_diff_size_of_nettack}. We observe: \textbf{(i)} The robust accuracies of the baseline methods exhibit a significant drop as the perturbation sizes increase, which is expected. In contrast, the performances of RES are much more stable and consistently outperform the baselines. This demonstrates the robustness of GCL with RES against various levels of structural noise.
\textbf{(ii)} The high robust accuracies of the three GCL models demonstrate that our RES is effectively applicable to various GCL method.}
\subsection{Additonal Results of Robust Performance on Node Classification}
\label{appendix:addition_results_for_node_clf}
In this subsection, we provide additional experimental results from Section \ref{sec:performance_of_robustness}, focusing on node classification. We select GRACE, BGRL, and DGI as the target GCL methods and evaluate their performance on four types of graphs: raw graphs, random attack perturbed graphs, CLGA perturbed graphs, and PRBCD perturbed graphs. The perturbation ratio is set to 0.1. 
Table \ref{tab:Robust_acc_Node_Clf_appendix} presents the results of these experiments. {The highlighted results denote the best performance for each pair of GCL and RES-GCL.}  Note that BGRL, implemented based on PyGCL \cite{Zhu:2021tu}, encounters out-of-memory (OOM) errors in our platform, and hence the results for BGRL on OGB-arxiv are left blank. From the table, we observe that all three GCL methods, when combined with RES, achieve state-of-the-art performance. This demonstrates the effectiveness of our method in enhancing the robustness of various GCL methods. By incorporating RES, the GCL models are more resilient to adversarial attacks and exhibit improved performance across different types of perturbed graphs.

Moreover, we further consider two graph injection attack methods, i.e., TDGIA~\cite{zou2021tdgia} and AGIA~\cite{chen2022hao}, as baselines to demonstrate the robustness of RES. Specificially, we select GRACE as the target GCL methods and evaluate them on three types of graphs: raw graphs, TDGIA perturbed graphs and AGIA perturbed graphs. We insert the same number of fake nodes as the target nodes. We set $(1-\beta)=0.1$ and $\mu=50$. The comparison results on four datasets are shown in Table.~\ref{tab:injection_node_clf}. From the results, we observe that (\textbf{i}) RES-GRACE consistently outperforms the baselines across 4 datasets in defending graph injection attacks. (\textbf{ii}) Both TDGIA and AGIA are much more powerful than the structural attack methods in Table~\ref{tab:Robust_acc_Node_Clf_appendix} against GCL. 
\begin{table*}[h]
    \centering
    \small
    \caption{ Robust accuracy results of GCL methods for node classification.}
    \vskip -0.8em
    \resizebox{0.9\textwidth}{!}
    {\begin{tabularx}{0.98\linewidth}{p{0.08\linewidth}p{0.06\linewidth}|cc|cc|cc}
    \toprule
    {Dataset} & 
    {Graph} & {GRACE}& {RES-GRACE}& {BGRL}& {RES-BGRL}& {DGI}& {RES-DGI}\\
     \midrule
\multirow{4}{*}{Cora}
& Raw & 77.1$\pm$1.6 & \textbf{79.7$\pm$1.0} &78.5$\pm$1.6& \textbf{79.9$\pm$1.2} & 81.3$\pm$0.7 & \textbf{81.4$\pm$0.8}\\
    
& Random & 74.5$\pm$2.1 & \textbf{79.7$\pm$1.0}& 76.2$\pm$1.2& \textbf{79.6$\pm$1.1} & 77.5$\pm$1.0& \textbf{79.2$\pm$0.6}\\
& CLGA & 74.9$\pm$2.0 & \textbf{78.2$\pm$1.0} & 75.8$\pm$1.6 & \textbf{79.4$\pm$0.9} & 79.5$\pm$0.5 & \textbf{80.9$\pm$1.0}\\
& PRBCD & 75.8$\pm$2.5 & \textbf{78.5$\pm$1.7} & 76.4$\pm$0.6 & \textbf{79.5$\pm$0.8} & 79.6$\pm$1.4 & \textbf{80.9$\pm$1.1}\\
\midrule
\multirow{4}{*}{Pubmed}
& Raw & 79.5$\pm$2.9 & \textbf{79.5$\pm$1.2} & 79.9$\pm$1.4  & \textbf{81.5$\pm$0.6} & \textbf{80.1$\pm$0.9} & 80.0$\pm$0.8\\
    
& Random & 75.0$\pm$1.0 & \textbf{78.2$\pm$0.9}& 74.0$\pm$1.0 & \textbf{81.0$\pm$0.7} & 76.7$\pm$0.7 & \textbf{78.8$\pm$0.6}\\
& CLGA & 76.6$\pm$2.5 & \textbf{78.3$\pm$1.1} & 77.9$\pm$0.3 & \textbf{81.6$\pm$0.2} & 79.6$\pm$0.6 & \textbf{80.0$\pm$1.2}\\
& PRBCD & 73.2$\pm$2.3 & \textbf{78.8$\pm$1.7} & 71.6$\pm$2.4 & \textbf{80.9$\pm$0.5}     & 75.4$\pm$0.9 & \textbf{78.0$\pm$0.6}\\
\midrule
\multirow{3}{*}{Physics}
& Raw & 94.0$\pm$0.4 & \textbf{94.7$\pm$0.2}&95.3$\pm$0.1 & \textbf{95.6$
\pm$0.1} & 93.5$\pm$0.6 & \textbf{94.1$\pm$0.3}\\
    
& Random & 92.6$\pm$0.5 & \textbf{94.2$\pm$0.3}& 94.0$\pm$0.2 & \textbf{95.5$\pm$0.2} & 91.5$\pm$0.9 & \textbf{92.2$\pm$0.4}\\
& PRBCD & 89.2$\pm$0.6 & \textbf{94.1$\pm$0.2} & 92.3$\pm$0.2 & \textbf{95.4$\pm$0.2} & {88.8$\pm$0.5} & \textbf{90.0$\pm$0.4}\\
\midrule
\multirow{3}{*}{OGB-arxiv}
& Raw & {65.1$\pm$0.5} & \textbf{65.2$\pm$0.1}& -& - & \textbf{65.0$\pm$0.2} & 64.8$\pm$0.1\\
    
& Random & 59.0$\pm$0.2 & \textbf{60.0$\pm$0.1}& -& -& 58.0$\pm$0.1 & \textbf{58.9$\pm$0.1}\\
& PRBCD & 55.7$\pm$0.4 & \textbf{58.3$\pm$0.4}& -& - & 56.9$\pm$0.9 & \textbf{57.2$\pm$0.3}\\
    \bottomrule 
        \end{tabularx}}
    \vskip -1.em
    \label{tab:Robust_acc_Node_Clf_appendix}
\end{table*}
\begin{table}[h]
    \centering
    \small
    \caption{ Robust accuracy results of GCL methods against injection attacks.}
    \resizebox{0.9\linewidth}{!}
    {\begin{tabularx}{0.98\linewidth}{p{0.15\linewidth}p{0.1\linewidth}|CC}
    \toprule
{Dataset}                  & {Graph} & {GRACE} & {RES-GRACE} \\
\midrule
\multirow{3}{*}{{Cora}}    & Raw        & 77.1±1.6  & \textbf{79.7±1.0}  \\
                                  & TDGIA      & 22.4±1.5  & \textbf{78.7±1.1}  \\
                                  & AGIA       & 21.1±1.4  & \textbf{78.4±0.9}  \\
\midrule
\multirow{3}{*}{{Pubmed}}           & Raw        & 79.5±3.1  & \textbf{79.5±1.2}  \\
                                  & TDGIA      & 50.6±1.2  & \textbf{77.0±0.4}  \\
                                  & AGIA       & 50.2±1.6  & \textbf{77.5±1.0}  \\
\midrule
\multirow{3}{*}{{Physics}} & Raw        & 94.0±0.4  & \textbf{94.7±0.2}  \\
                                  & TDGIA      & 52.2±1.0  & \textbf{93.8±0.1}  \\
                                  & AGIA       & 52.3±0.5  & \textbf{94.0±0.2}  \\
\midrule
\multirow{3}{*}{{OGB-arxiv}}        & Raw        & 65.1±0.5  & \textbf{65.1±0.1}  \\
                                  & TDGIA      & 40.3±0.4  & \textbf{46.9±0.1}  \\
                                  & AGIA       & 40.4±0.3  & \textbf{47.0±0.1}\\
    \bottomrule
        \end{tabularx}}
    \vskip -1.em
    \label{tab:injection_node_clf}
\end{table}

\subsection{Additional Results of RES for Advanced GCLs}
\label{appendix:RES_ADGCL_RGCL}
\begin{figure}[h]
    \centering
    \vskip -1.2em
    \begin{subfigure}{0.245\linewidth}
        \includegraphics[width=0.99\linewidth]{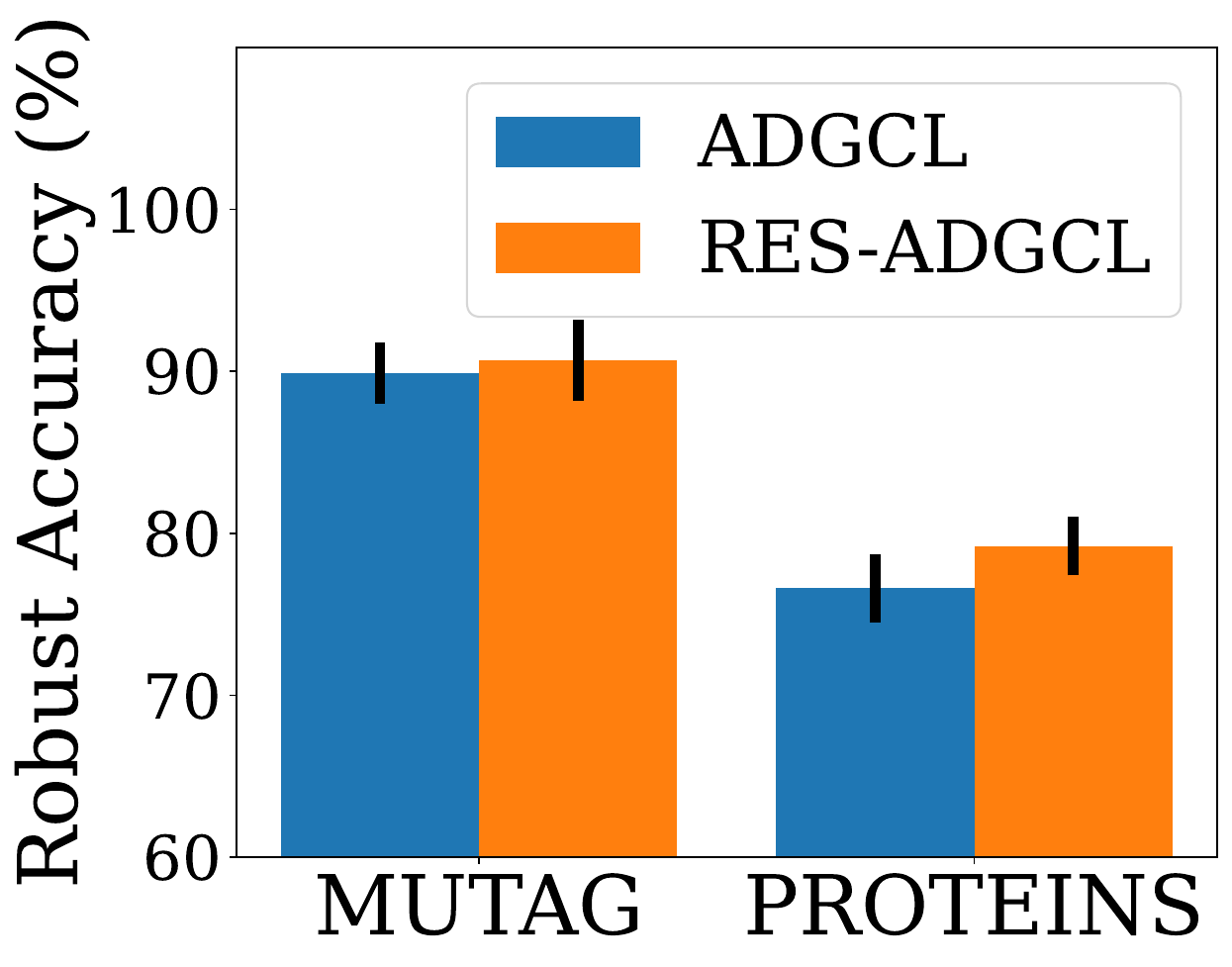}
        \vskip -0.5em
        \caption{ADGCL, Clean}
    \end{subfigure}
    \begin{subfigure}{0.245\linewidth}
        \includegraphics[width=0.99\linewidth]{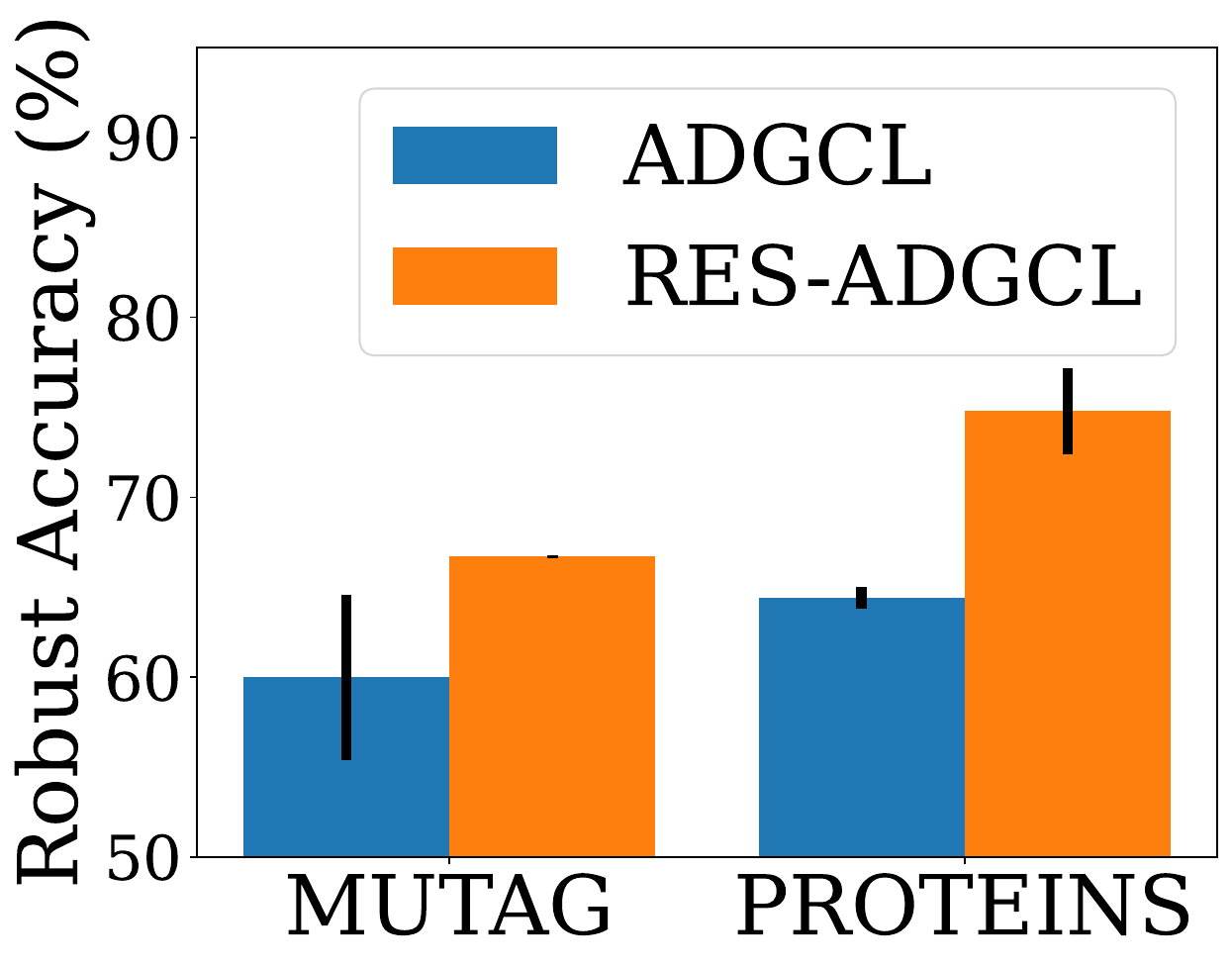}
        \vskip -0.5em
        \caption{ADGCL, Noisy}
    \end{subfigure}
    \begin{subfigure}{0.245\linewidth}
        \includegraphics[width=0.99\linewidth]{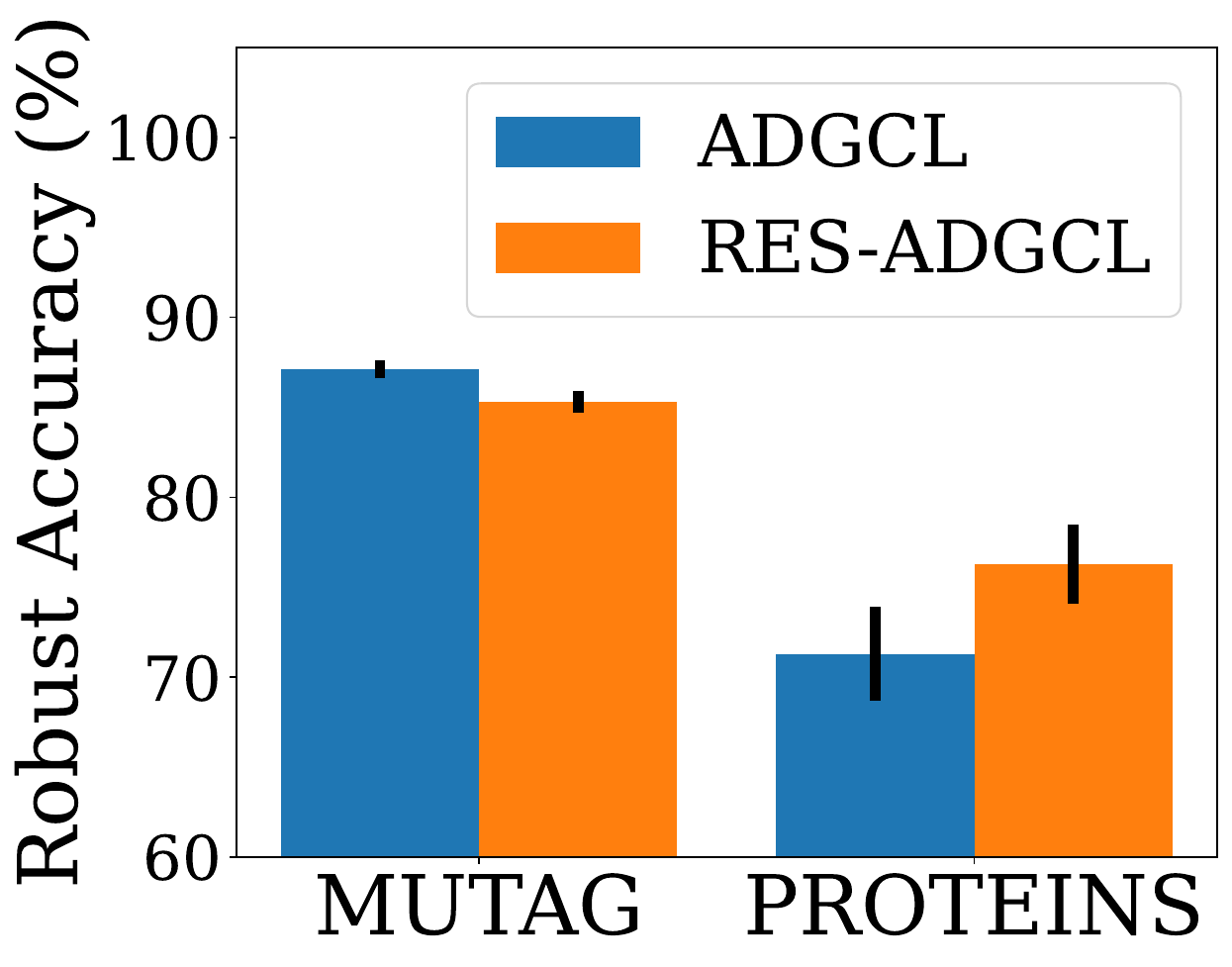}
        \vskip -0.5em
        \caption{RGCL, Clean}
    \end{subfigure}
    \begin{subfigure}{0.245\linewidth}
        \includegraphics[width=0.99\linewidth]{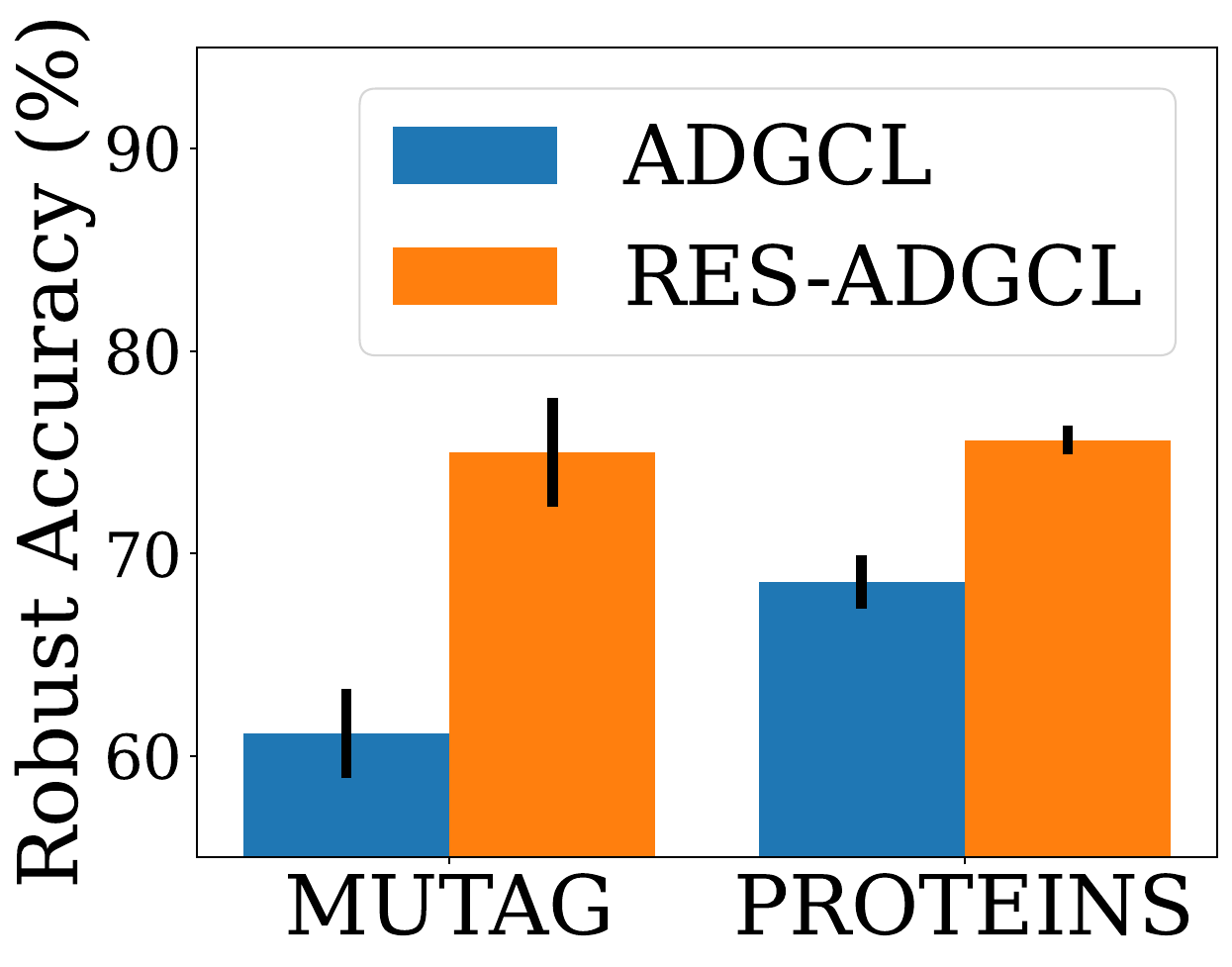}
        \vskip -0.5em
        \caption{RGCL, Noisy}
    \end{subfigure}
    \vskip -0.9em
    \caption{Robust accuracy of ADGCL and RGCL for graph classification against random attack}
    \label{fig:robust_acc_adgcl_rgcl_for_graph_clf_against_random_atk}
    \vskip -1.em
\end{figure}
\begin{figure}[h]
\centering  
\vspace{-1.2em}
\begin{subfigure}{0.32\linewidth}
    \includegraphics[width=0.99\linewidth]{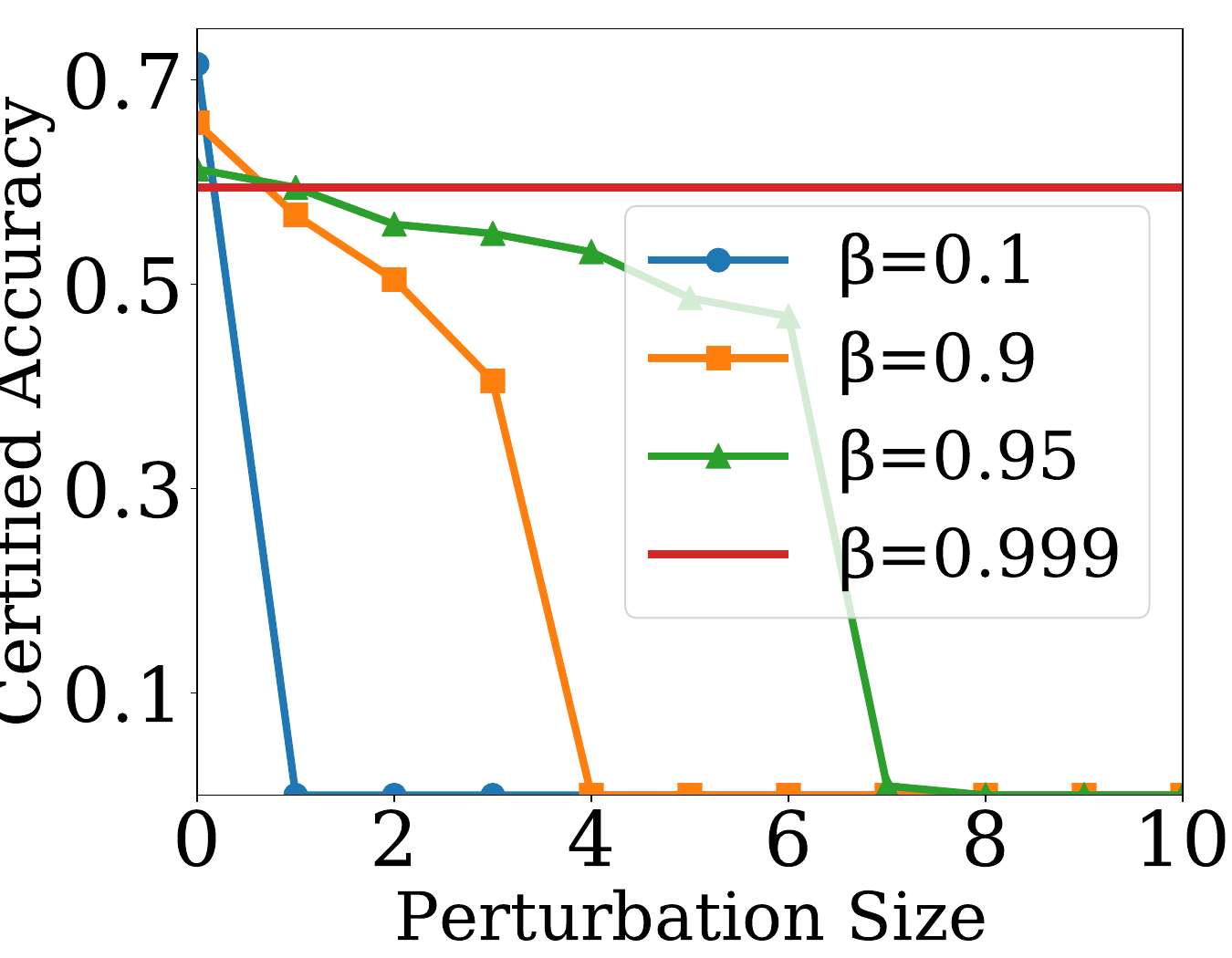}
        \vskip -0.5em
        \caption{MUTAG}
    \end{subfigure}
    \begin{subfigure}{0.32\linewidth}
        \includegraphics[width=0.99\linewidth]{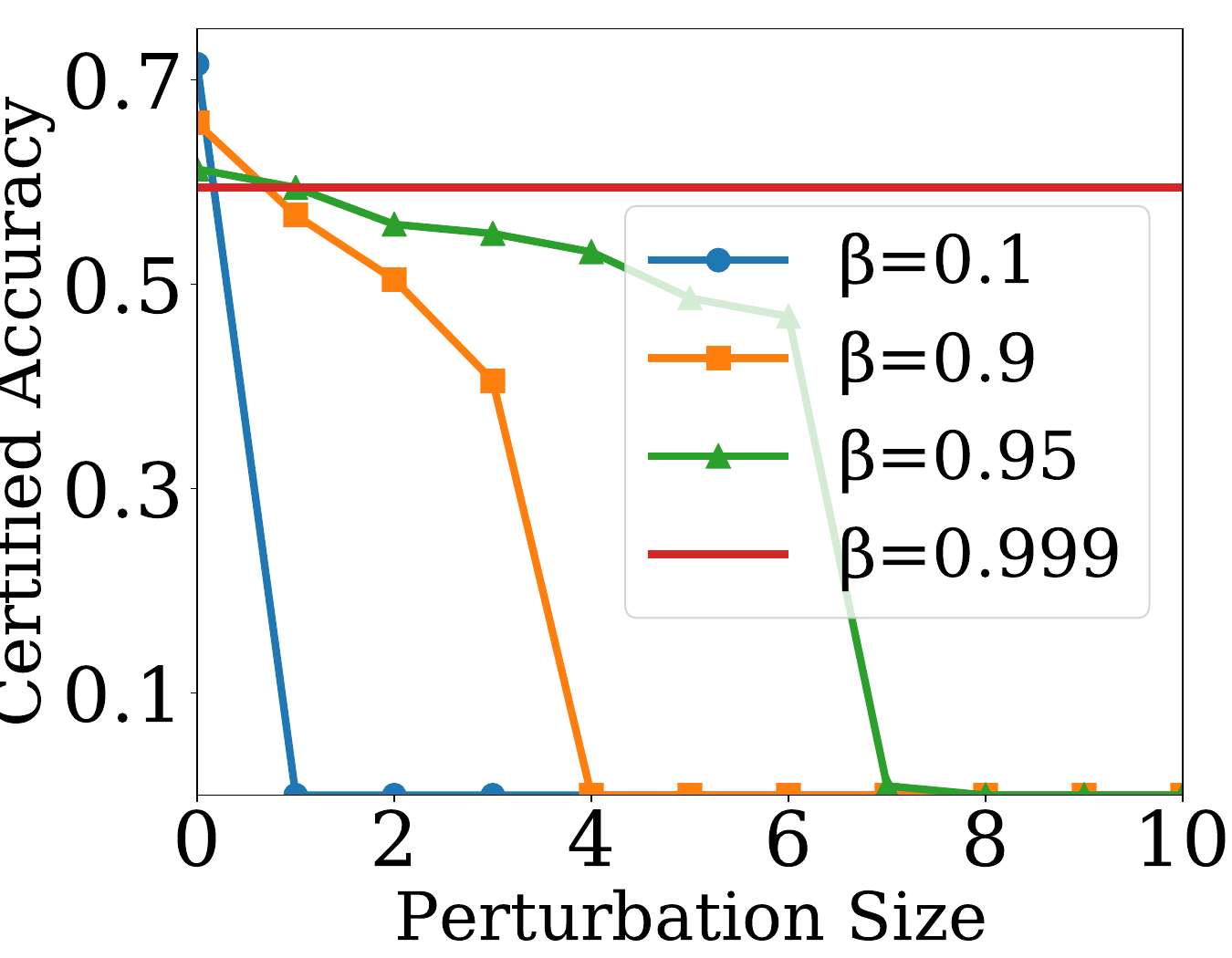}
        \vskip -0.5em
        \caption{PROTEINS}
    \end{subfigure}
\vspace{-0.9em}
\caption{Certified accuracy of RGCL on MUTAG and PROTEINS} 
\vspace{-2.em}
\label{fig:certified_acc_RGCL}
\end{figure}
To further demonstrate the effectiveness of RES, we further add ADGCL~\cite{suresh2021adversarial} and RGCL~\cite{li2022rgcl} as baselines and implement RES-ADGCL and RES-RGCL. We set graph classification as the downstream task. The hyperparameter is tuned based on the performance of the validation set. We use random attack to get the noisy graphs and the perturbation rate is 0.1. Each experiment is conducted 5 times and the average results are reported. Comparison results on MUTAG and PROTEINS are shown in Fig.~\ref{fig:robust_acc_adgcl_rgcl_for_graph_clf_against_random_atk}. From this figure, we observe that all RES-GCL methods achieve comparable performances to the baselines on raw graphs and consistently outperform the baselines in the noisy graphs of two datasets, which validates the effectiveness of RES in any GCL model.

We also report the certified accuracy of RES-RGCL on the two datasets. The results are shown in Fig.~\ref{fig:certified_acc_RGCL}. From the figure, we can observe there is a tradeoff between certified robustness and model utility, which is similar to that of Sec.~\ref{sec:performance_of_certificates}.
\begin{figure}[h]
\centering  
\vspace{-1.em}
\begin{subfigure}{0.32\linewidth}
    \includegraphics[width=0.99\linewidth]{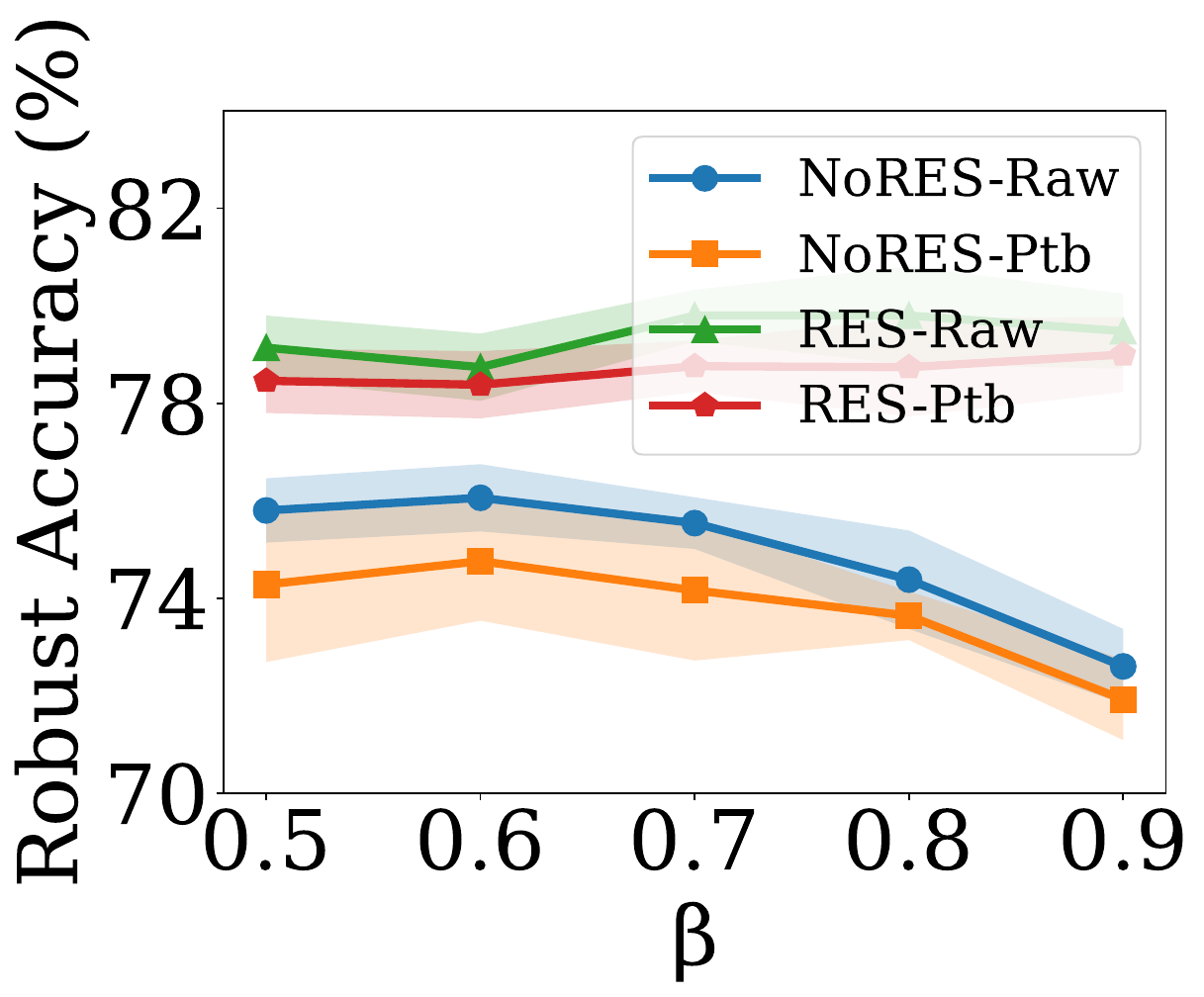}
        \vskip -0.5em
        \caption{Cora}
    \end{subfigure}
    \begin{subfigure}{0.32\linewidth}
        \includegraphics[width=0.99\linewidth]{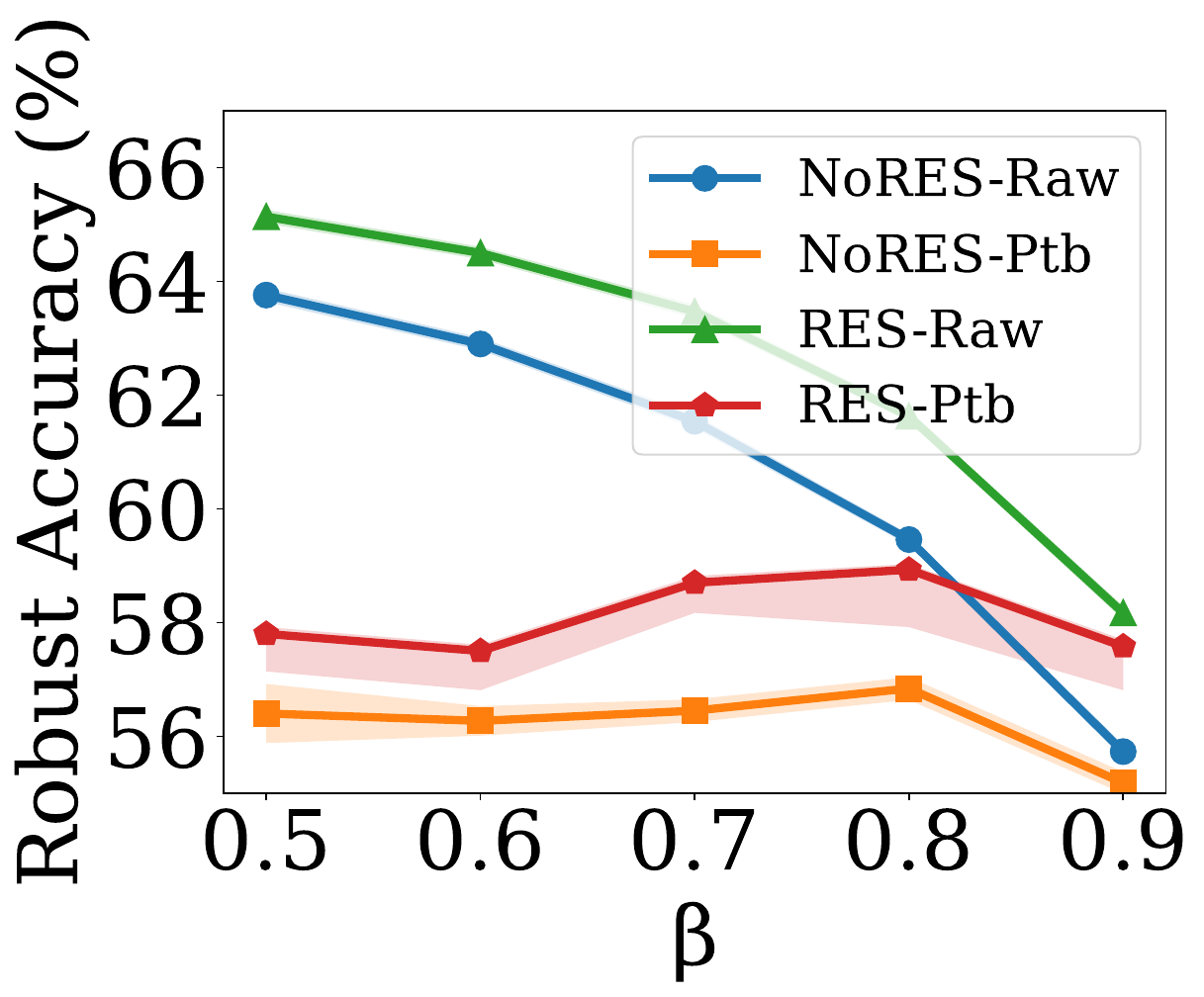}
        \vskip -0.5em
        \caption{OGB-arxiv}
    \end{subfigure}
\vspace{-0.9em}
\caption{Ablation Studies of Training with RES on Cora and OGB-arxiv} 
\label{fig:abla_training_with_RES}
\end{figure}
\begin{figure}[h]
\centering  
\vspace{-1.5em}
\begin{subfigure}{0.32\linewidth}
    \includegraphics[width=0.99\linewidth]{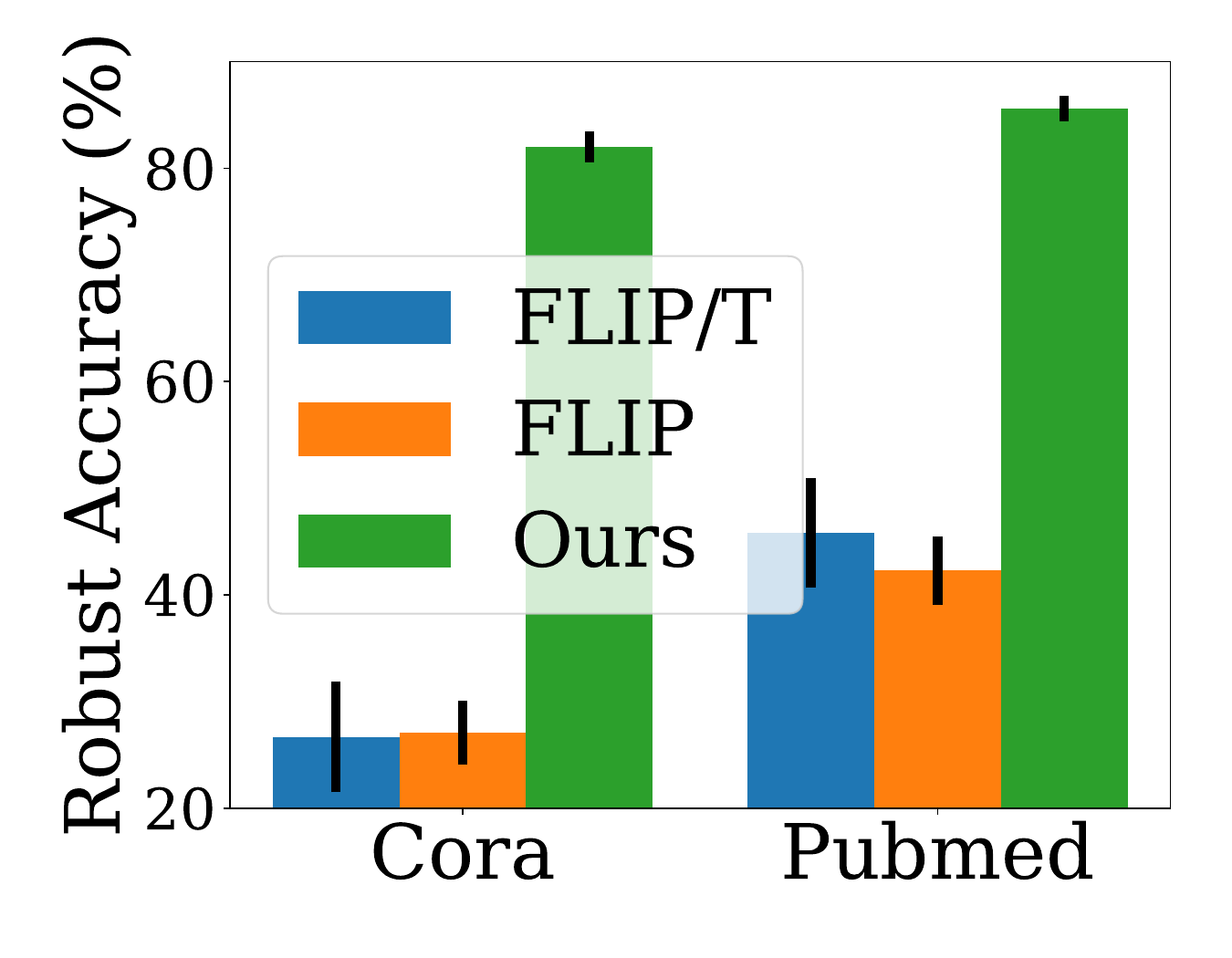}
        \vskip -1.2em
        \caption{Clean}
    \end{subfigure}
    \begin{subfigure}{0.32\linewidth}
        \includegraphics[width=0.99\linewidth]{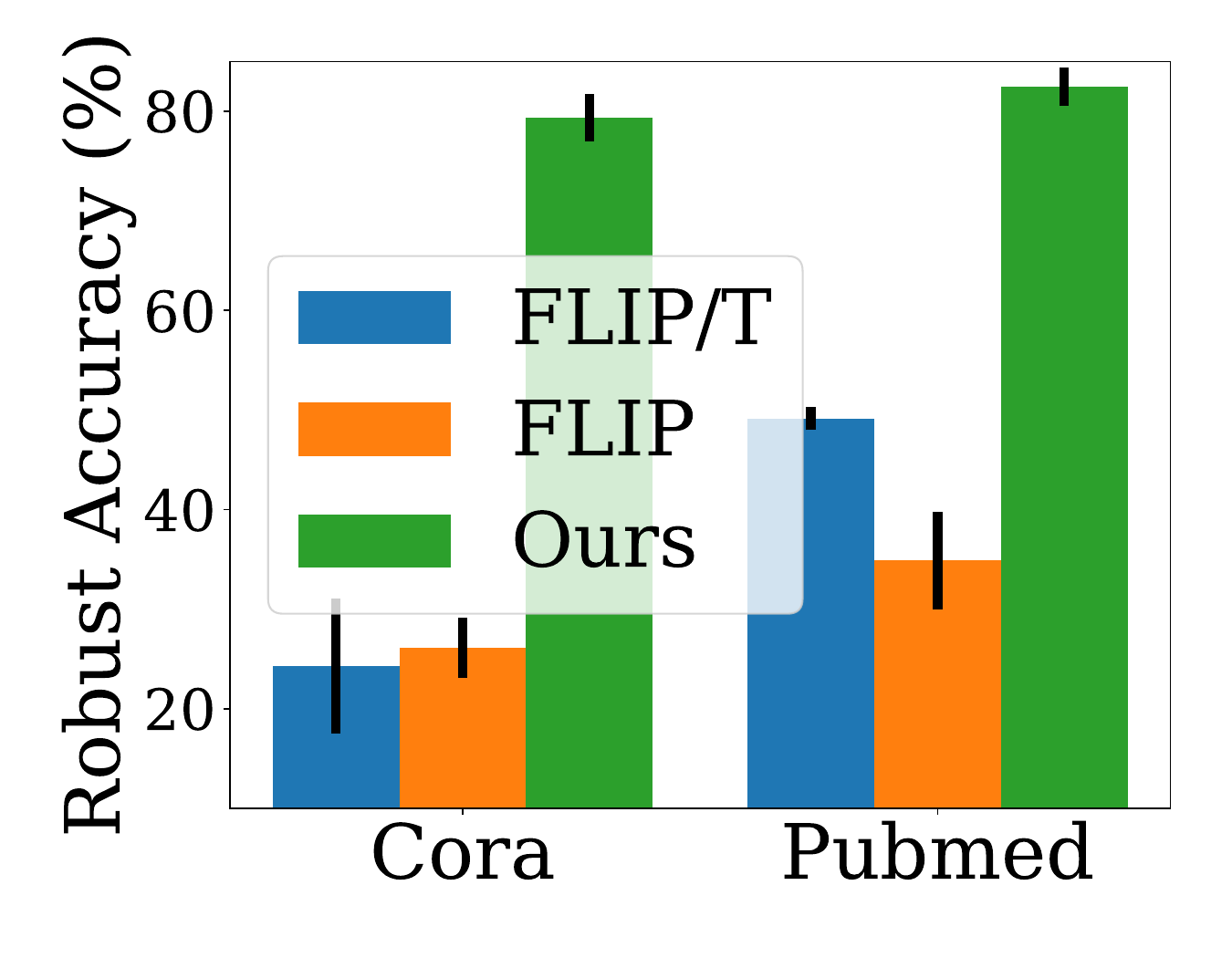}
        \vskip -1.2em
        \caption{Noisy}
    \end{subfigure}
\vspace{-0.9em}
\caption{Ablation Studies of Comparisons RES with FLIP} 
\vspace{-1.5em}
\label{fig:abla_compare_with_FLIPT_appendix}
\end{figure}

\section{Additional Results of Ablation Studies}
\label{appendix:ablation_study}
\subsection{Ablation Studies on Training with RES}
\label{appendix:ablation_RES_training}
In this subsection, we conduct ablation studies to investigate the effect of our training method for robust GCL. To demonstrate that our training method improves the robustness of GCL, we do not inject random edgedrop noise during the training phase and obtain a variant called NoRES. We select PRBCD as the attack method to generate noisy graphs, and set $\mu = 50$. We vary the value of $\beta$ as $\{0.5, 0.6, 0.7, 0.8, 0.9\}$ and compare the robust accuracy of RES and NoRES on both raw graphs and PRBCD-perturbed graphs. The results of the variant NoRES on raw and PRBCD-perturbed graphs are denoted as NoRES-Raw and NoRES-Ptb, respectively. We report the results on the Cora and OGB-arxiv datasets in Fig.~\ref{fig:abla_training_with_RES}. From the figure, we observe the following:
\textbf{(i)} RES consistently outperforms NoRES on both raw and perturbed graphs in all settings, implying the effectiveness of our training method for robust GCL.
\textbf{(ii)} The variance of RES is lower than that of NoRES. This is because we inject randomized edgedrop noise into the graphs during the training phase, helping the models better understand and generalize from the data with randomized edgedrop noise, thereby ensuring the robustness and utility of the models.
\FloatBarrier
\begin{figure}[h]
\centering
\vspace{-1em}
\begin{subfigure}{0.32\linewidth}
    \includegraphics[width=0.99\linewidth]{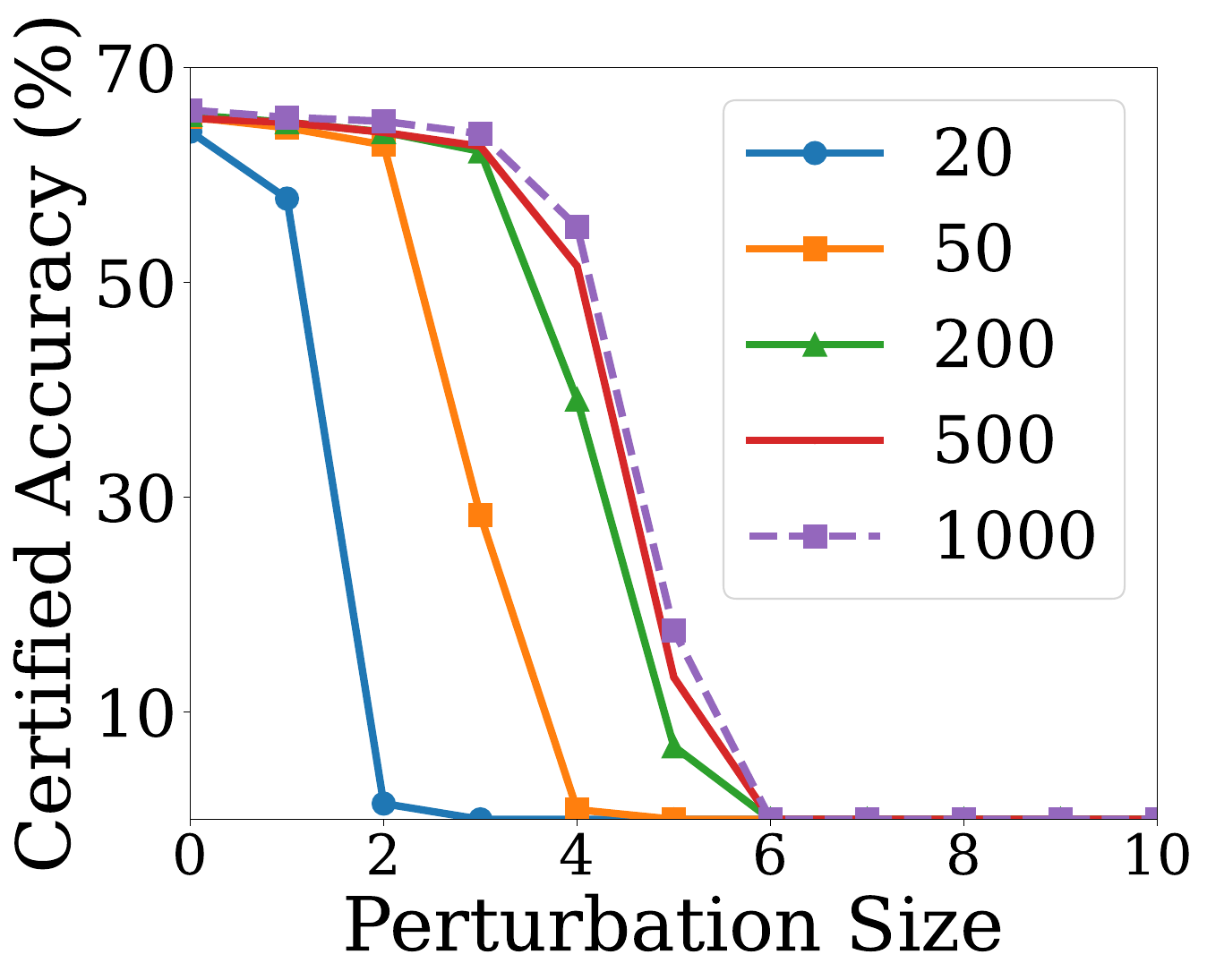}
        \vskip -0.5em
        \caption{$\mu$}
    \end{subfigure}
    \begin{subfigure}{0.32\linewidth}
        \includegraphics[width=0.99\linewidth]{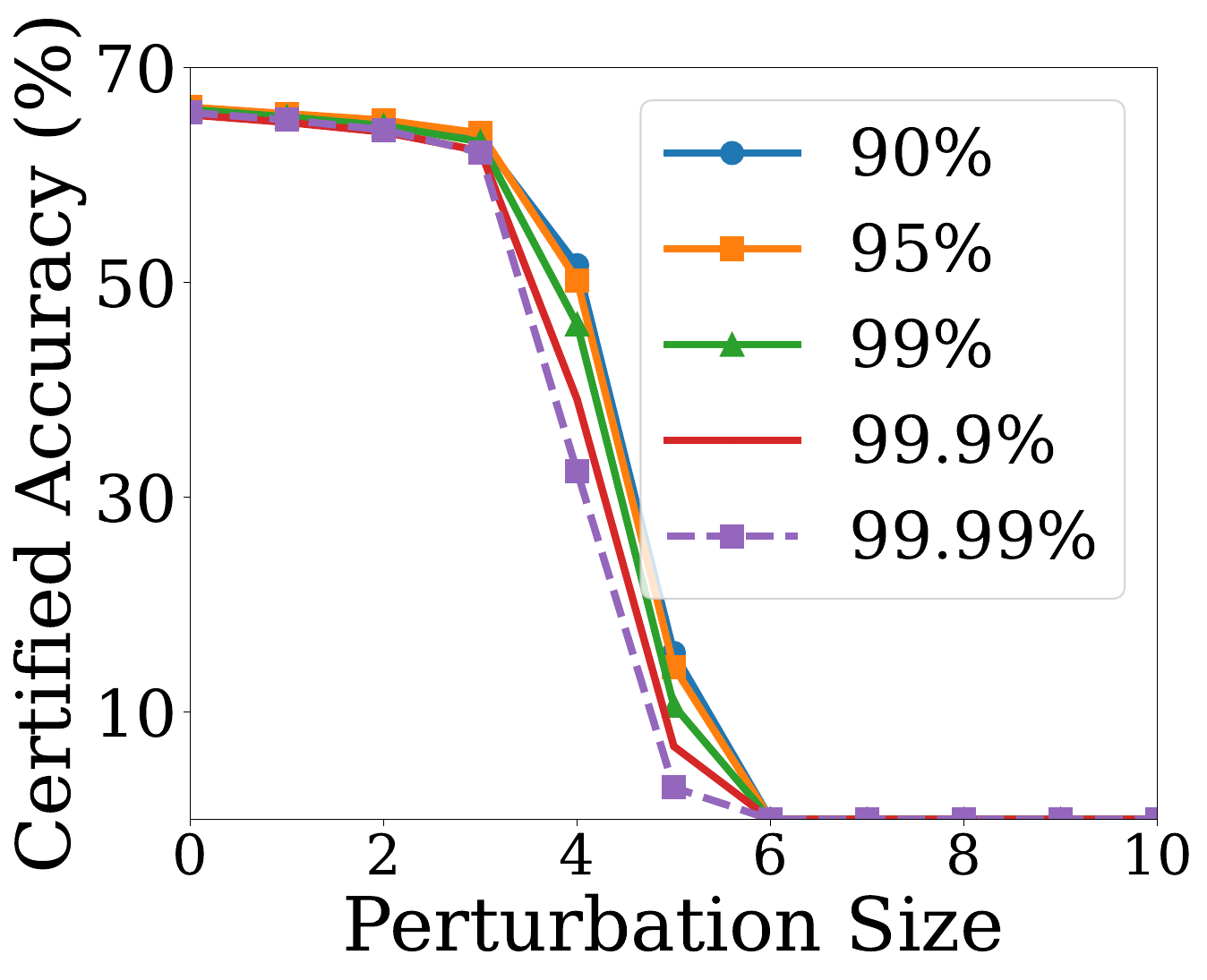}
        \vskip -0.5em
        \caption{$(1-\alpha)$}
    \end{subfigure}
\vspace{-0.9em}
\caption{Hyperparameter Sensitivity Analysis on Cora} 
\vspace{-1.2em}
\label{fig:hyper_analysis}
\end{figure}
\begin{figure}[h]
\centering  
\begin{subfigure}{0.32\linewidth}
    \includegraphics[width=0.99\linewidth]{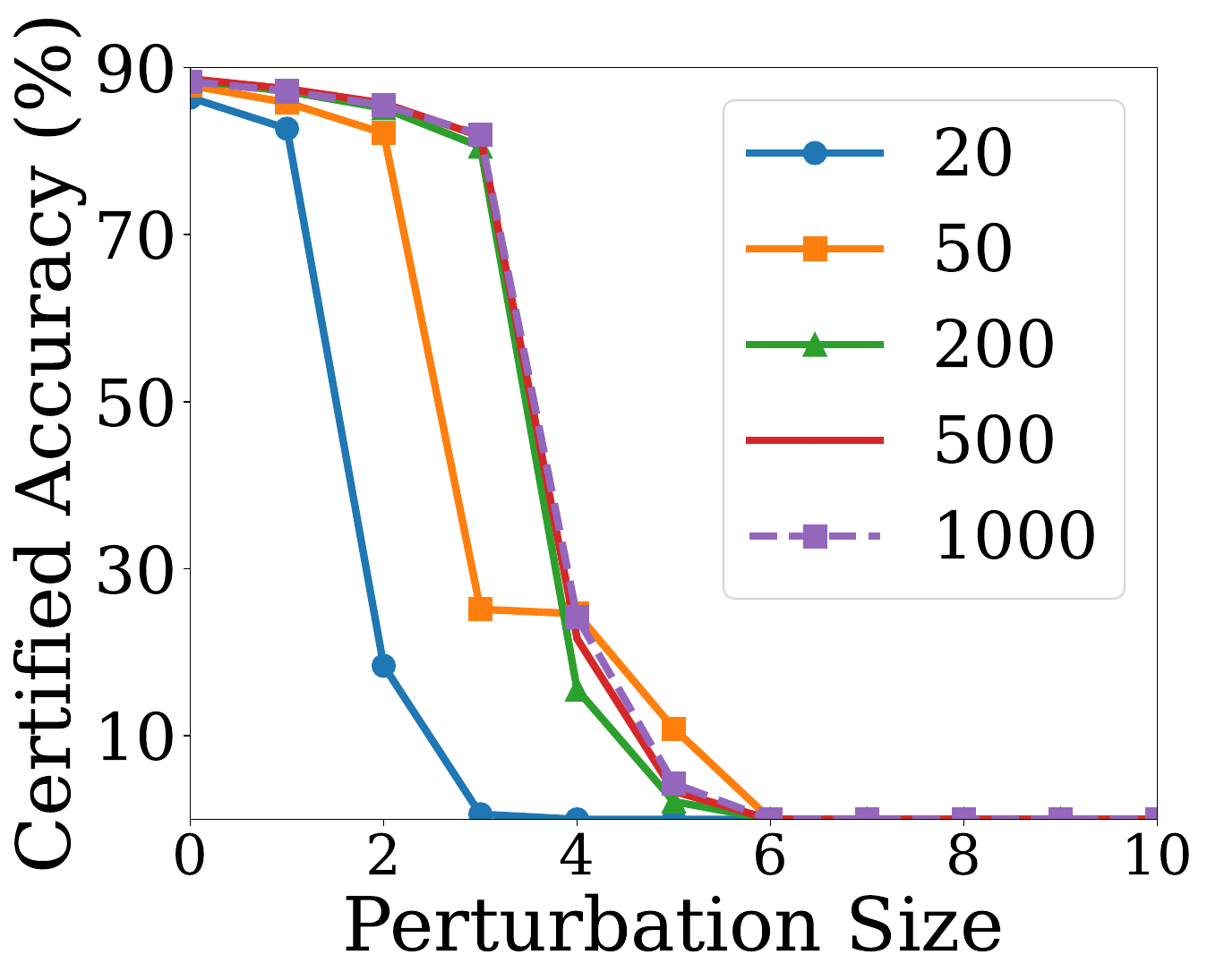}
        \vskip -0.5em
        \caption{$\mu$}
    \end{subfigure}
    \begin{subfigure}{0.32\linewidth}
        \includegraphics[width=0.99\linewidth]{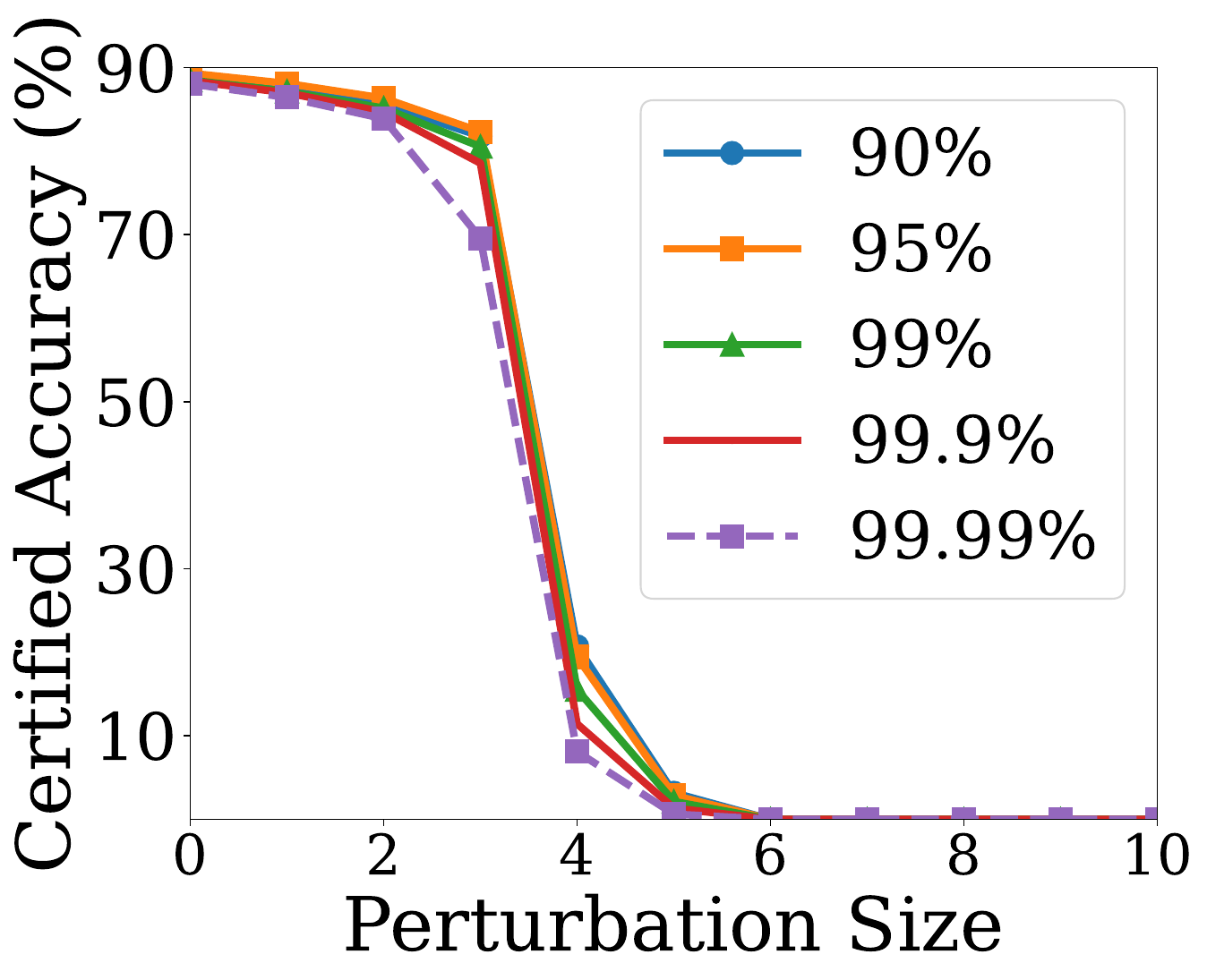}
        \vskip -0.5em
        \caption{$(1-\alpha)$}
    \end{subfigure}
\vspace{-0.9em}
\caption{Hyperparameter Sensitivity Analysis on Coauthor-Physics} 
\label{fig:hyper_analysis_Physics}
\end{figure}
\vspace{-0.5em}
\subsection{Additional Results of Comparisons RES with FLIP}
In this subsection, we present additional experimental results in Sec.~\ref{sec:ablation_study} to further demonstrate that our RES method is significantly more effective in GCL compared to vanilla randomized smoothing~\cite{wang2021certifiedgnn}. We introduce two variants of our model, namely FLIP and FLIP/${\text{T}}$, which replace the randomized edgedrop noise with binary random noise~\cite{wang2021certifiedgnn}. This noise flips the connection status within the graph with a probability of $\beta$. The key difference between FLIP and FLIP/T is that FLIP injects binary random noise into the graphs during both the training and inference phases, while FLIP/T only injects binary random noise during the inference phase.
For our RES method, we set $\beta=0.9$, and $\mu=50$. {For FLIP and FLIP/T, to ensure a fair comparison, we set $\mu=50$, and vary $\beta$ over $\{0.1,0.2,\cdots,0.9\}$ and select the value that yields the best performance on the validation set of clean graphs.}
We select GRACE as the targeted GCL method and compare the robust accuracy of our RES with FLIP and FLIP/T on the clean and noisy graph under Nettack with an attack budget of $3$.
The average robust accuracy and standard deviation on Cora and Pubmed datasets are reported in Fig.~\ref{fig:abla_compare_with_FLIPT_appendix}. We observe: \textbf{(i)} RES consistently outperforms FLIP and FLIP/T on clean and noisy graphs of Cora and Pubmed datasets, further validating the effectiveness of RES in providing certified robustness for GCL. \textbf{(ii)} FLIP and FLIP/T exhibit comparable performance on all graphs, but significantly lower than RES. This finding confirms our analysis that vanilla randomized smoothing introduces numerous spurious/noisy edges to the graph, resulting in poor representation learning by the GNN encoder and compromising downstream task performance.

\section{Hyperparameter Sensitivity Analysis}
\label{appendix:parameter_analysis}
We further investigate how hyperparameter $(1-\alpha)$ and $\mu$ affect the performance of robustness certificates of our RES, where $(1-\alpha)$ and $\mu$ control the confidence level and the number of Monte Carlo samples used to compute the certified accuracy. We vary the value of $(1-\alpha)$ as $\{90\%,95\%,99\%,99.9\%, 99.99\%\}$ and fix $\mu$ as $200$, vary $\mu$ as $\{20, 50, 200, 500, 1000\}$ and fix $(1-\alpha)$ as $99\%$, respectively. $\beta$ is set as $0.9$.
We report the certified accuracy of RES-GRACE on Cora and Coauthor-Physics dataset in Fig.~\ref{fig:hyper_analysis} and Fig.~\ref{fig:hyper_analysis_Physics}. From the figures, we observe:
\textbf{(i)} As $\mu$ increases, the certified accuracy curve becomes higher. The reason for this is that a larger value of $\mu$ makes the estimated probability bound $\underline{p_{\mathbf{v}^+,h}}(\mathbf{v}\oplus\epsilon)$ and $\overline{p_{\mathbf{v}^{-}_{i},h}}(\mathbf{v}\oplus\epsilon)$ in Eq.~\eqref{eq:certified_corollary} tighter, resulting in a higher certified perturbation size of a test sample.
\textbf{(ii)} As $(1-\alpha)$ increases, the certified accuracy curve becomes slightly lower. This is because a higher confidence level leads to a looser estimation of $\underline{p_{\mathbf{v}^+,h}}(\mathbf{v}\oplus\epsilon)$ and $\overline{p_{\mathbf{v}^{-}_{i},h}}(\mathbf{v}\oplus\epsilon)$ in Eq.~\eqref{eq:certified_corollary}, meaning that fewer nodes satisfy Theorem~\ref{co:certified_robust_random_maksing_main} under the same perturbation size, resulting in a smaller certified perturbation size of a test sample. 

\section{Limitations}
In this paper, we propose a unified criteria to evaluate and certify the robustness of GCL. Our proposed approach, Randomized Edgedrop Smoothing (RES), injects
randomized edgedrop noise into graphs to provide certified robustness for GCL on unlabeled data. Moreover, we design an effective training method for robust GCL by incorporating randomized edgedrop noise during the training phase. The theoretical analysis and
extensive experiments show the effectiveness of our proposed RES.

{\textbf{Limitation \& future work}: Our current results are limited mainly to GCL while we believe it is also interesting to develop new techniques to other graph self-supervised methods e.g. generative  and  neighborhood prediction methods based on
our framework, which we leave for immediate future work. We hope that this work could inspire
future certifiably defense algorithms of adversarial attacks. 
{Additionally, in this paper, we only focus on the graph-structured data. Thus, it is also interesting to investigate how to extend it to other domains, e.g., images and texts.}
Furthermore, in this paper, we utilize Monte Carlo algorithms to calculate robustness certificates for GCL, potentially increasing the computational demands. Therefore, it is also worthwhile to investigate methods to improve the efficiency of the robustness certification for GCL.
Due to the nature of this work,
there may not be any potential negative social impact that is easily predictable.
} 


\end{document}